\documentclass{lmcs} %%% last changed 2014-08-20

\keywords{modal logic, finite model theory, preservation theorems, semantic characterizations, 
correspondence theory}

\usepackage{hyperref}
\usepackage{tikz}
\usetikzlibrary{arrows.meta}
\usepackage{caption}
\newcommand{\dist}{\mathrm{dist}}

\usepackage{amssymb}
\theoremstyle{plain} %\crefname{satz}{Satz}{S\"atze}
\begin{document}

% If the title is longer than 55 characters, then specify a shorter running title as the optional argument to \title. The running title should be roughyl at most 55 characters:
\title[Modal Definability and Preservation in the Finite]{When do modal definability and preservation theorems transfer to the finite?}
% \thanks{thanks, optional.}	%optional

% affiliations are numbered automatically with a, b, c (see below)
% use the optional argument to indicate the affiliation(s) of each author
% omit the argument if there is only one author, or only one affiliation
\author[J.~van Benthem]{Johan van Benthem~\lmcsorcid{0000-0002-7048-785X}}[a,b,c]
\author[B.~ten Cate]{Balder ten Cate~\lmcsorcid{0000-0002-2538-5846}}[a]
\author[X.~Yang]{Xi Yang~\lmcsorcid{0009-0003-5576-311X}}[a,b]

% affiliation 1 (automatically numbered a)
\address{ILLC, University of Amsterdam, The Netherlands}	%optional
% write emails for all authors having that affiliation
\email{j.vanbenthem@uva.nl, b.d.tencate@uva.nl, x.yang@uva.nl}  %optional

% affiliation 2 (automatically numbered b)
\address{The Tsinghua-UvA JRC for Logic, Department of Philosophy, Tsinghua University, China}

% affiliation 3
\address{Department of Philosophy, Stanford University, USA}

%% etc.

%% required for running head on odd and even pages, use suitable
%% abbreviations in case of long titles and many authors:

%%%%%%%%%%%%%%%%%%%%%%%%%%%%%%%%%%%%%%%%%%%%%%%%%%%%%%%%%%%%%%%%%%%%%%%%%%%

%% the abstract has to PRECEDE the command \maketitle:
%% be sure not to issue the \maketitle command twice!

\begin{abstract}
  \noindent We study which classic modal definability and preservation results survive when attention is restricted to finite structures, where many first-order transfer theorems are known to break down. Several semantic characterizations for modal formula classes survive the passage to the finite, while a number of first-order preservation theorems for basic frame operations fail. Our main positive result is that the Bisimulation Safety Theorem does transfer to finite structures. We also discuss computability aspects, and analogues in the finite for the Goldblatt–Thomason theorem and for modal correspondence theory.
\end{abstract}

\maketitle

\section{Introduction}

The model theory of first-order logic exhibits a striking divergence between the infinite and the finite. Many classic theorems---central to the semantic and structural understanding of first-order logic---fail when attention is restricted to finite structures \cite{Tait59,AjtaiGurevich87}. For instance, compactness breaks down, the \L o\'s--Tarski preservation theorem no longer characterizes the existential and universal fragments, and Craig interpolation fails once we move to the finite. Against this backdrop of failure for classical metatheory, it is all the more remarkable that a few deep results \emph{do} survive the transition to the finite. A notable case in point is the van Benthem Characterization Theorem~\cite{vanBenthem1985}, which identifies modal logic as the bisimulation-invariant fragment of first-order logic. It was shown by Rosen \cite{rosen-thesis} that this remains valid over finite structures. A second notable example is Rossman's Homomorphism Preservation Theorem \cite{rossman2008homomorphism}, establishing that the positive existential fragment of first-order logic is precisely the class of formulas preserved under (not necessarily surjective) homomorphisms, even in the finite. A long standing open problem remains whether, for relational signatures, the universal Horn fragment is characterized by closure under (finite) products and induced substructures in the finite~\cite{AlechinaGurevich1997,Ham2023}.

Within the more restricted setting of \emph{modal logic}, the situation is more hospitable. The basic modal language, interpreted over the class of all Kripke models,
has the finite model property. This gives rise to a natural heuristic: results that hold in the setting of arbitrary models for modal logic may plausibly carry over to the finite, even when their first-order analogues fail in the finite. For example, it follows quite directly from the finite model property that the Craig interpolation theorem for modal logic  transfers to the finite, which also implies that the Beth definability theorem  remains true in the finite.

In this paper we explore several complementary themes. 
\begin{enumerate}
    \item We review some basic model-theoretic preservation results concerning syntactic fragments of modal logic (interpreted over the class of all Kripke models) and show how they transfer to the finite. In particular, we show how the known syntactic characterizations of the semantic properties of complete additivity and continuity transfer.
    \item In the first-order setting, we study transfer failures of  preservation theorems for basic operations on frames such as taking generated subframes, disjoint unions, bounded morphic images, and ultrafilter extensions. As our main novel contribution, we show a positive result: the Bisimulation Safety Theorem, which characterizes the first-order operations on binary relations that are safe for bisimulation, carries over to the finite.  
    \item We briefly look into issues of computational complexity for testing preservation properties, clarifying the following noteworthy contrast.  While  preservation properties for the modal language are decidable, those for first-order logic tend to be undecidable.
    \item Next, we look at the celebrated Goldblatt-Thomason Theorem about modally definable frame classes and give a version that holds in the finite, while identifying some further open problems.
    \item Finally, we look at the related theme of which known results in modal correspondence theory transfer to the finite. In particular, we show that there are still non-first-order modally definable frame properties in the finite. More generally, we consider a hierarchy of correspondence languages with respect to frame definability, consisting of extensions of first-order logic, and we show that this hierarchy survives also 
    over finite frames, under standard complexity-theoretic assumptions.
    
\end{enumerate}

\section{The modal logic setting}
\label{sec:modal}

In this section, we discuss four known examples of preservation theorems for modal formulas and show that they transfer to the finite. 
In the first two cases, the arguments are very easy and can be considered folklore. The third example, \emph{complete additivity}, requires a more involved proof (and it will play an important role in the next section, where we will use it to prove that the Bisimulation Safety Theorem transfers to the finite). A recurring theme will be that we make use of the well known fact that 
modal logic has the finite model property: 

\begin{thm}[Finite Model Property]
A modal formula is valid over arbitrary Kripke models if and only if  it is valid over finite Kripke models.
\end{thm}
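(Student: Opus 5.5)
The left-to-right direction is immediate, since finite Kripke models are in particular Kripke models. For the converse I will argue contrapositively: assuming $\varphi$ is falsified at some world $w$ of some (possibly infinite) Kripke model $M=(W,R,V)$, I will build a finite model falsifying $\varphi$. I will use \emph{filtration}, which gives an explicit size bound.

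Let $\Sigma$ be the set of subformulas of $\varphi$, closed under single negation if convenient. Define $u\sim v$ iff $u$ and $v$ satisfy exactly the same formulas of $\Sigma$ in $M$. Let $W' = W/{\sim}$, which has at most $2^{|\Sigma|}$ elements and is therefore finite. Interpret each proposition letter $p\in\Sigma$ by $V'(p)=\{[u] : u\in V(p)\}$; this is well defined because $p \in \Sigma$. As accessibility I take the smallest filtration: $[u]R'[v]$ iff there are $u'\sim u$ and $v'\sim v$ with $u'Rv'$. In a multimodal signature I do this for each relation separately.

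The key step is the Filtration Lemma: for every $\psi\in\Sigma$ and every $u\in W$, $M,u\models\psi$ iff $M',[u]\models\psi$. I prove it by induction on $\psi$. The atomic case holds by the definition of $V'$, and the Boolean cases are trivial. The main obstacle is the case $\psi=\Diamond\chi$.
\begin{itemize}
\item \emph{Forward direction.} If $uRv$ and $v\models\chi$, then $[u]R'[v]$ by construction, and the induction hypothesis gives $[v]\models\chi$ in $M'$.
\item \emph{Backward direction.} Suppose $[u]R'[v]$ and $M',[v]\models\chi$. Pick witnesses $u'\sim u$ and $v'\sim v$ with $u'Rv'$. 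By the induction hypothesis $v\models\chi$, and since $\chi\in\Sigma$, also $v'\models\chi$. Hence $u'\models\Diamond\chi$, and because $\Diamond\chi\in\Sigma$ and $u\sim u'$, we get $u\models\Diamond\chi$.
\end{itemize}
This is exactly where closure of $\Sigma$ under subformulas is used. Box formulas are treated either as $\neg\Diamond\neg$ or by a symmetric argument.

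Applying the lemma to $\varphi$ itself, $M',[w]\not\models\varphi$, so $M'$ is a finite countermodel with at most $2^{|\varphi|}$ worlds. Note that this construction only respects models, not frame conditions. That is fine here, since the statement concerns validity over the class of all Kripke models.
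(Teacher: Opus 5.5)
Your filtration argument is correct, and it is the standard textbook proof. The atomic case uses $p\in\Sigma$, the backward $\Diamond$-case uses $\chi,\Diamond\chi\in\Sigma$ exactly where you say, and filtrating each $R_a$ separately covers the multi-modal setting of the paper. The paper itself gives no proof of this theorem. It invokes the finite model property as a well-known fact (pointing to~\cite{BdRV}), so there is no argument of the paper to match step by step.

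The useful comparison is with the paper's own way of producing finite witnesses, in the proof of Proposition~\ref{prop:additivity-finite}. There it unravels the model into a tree, which is bisimilar to the original, and cuts the tree at the modal depth. It then passes to finite branching, either by pruning per subtree type or via the cited MSO result on trees.

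Filtration is shorter and gives the explicit $2^{|\varphi|}$ bound, hence decidability. However, the quotient $M'$ is not bisimilar to $M$: it only preserves the truth of formulas in $\Sigma$ under the one fixed valuation. That is all the finite model property needs. It is not enough when the property to be witnessed quantifies over re-valuations of $p$, as failure of complete additivity does. Making $p$ true at a single class of $M'$ does not correspond to making it true at a single point of $M$. This is why the paper uses the bisimulation-preserving tree construction there rather than filtration.
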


Note: we assume familiarity with modal logic. We refer to~\cite{BdRV} for a general introduction. 
Throughout this section, we will 
consider \emph{multi-modal} formulas, i.e., 
modal formulas that may contain 
not just one pair of modal operators $\Diamond, \Box$, but a family $\langle a\rangle, [a]$ for $a\in A$. These are then interpreted on 
Kripke models of the form $(W,(R_a)_{a\in A},V)$. 
In other words, our setting corresponds to that of the basic multi-modal logic \textbf{K}.
The fact that we allow multiple modalities does not affect the results or the 
proofs in any way, but it facilitates the use (in Section~\ref{sec:FO}) of
one of the results stated in the current section.

We will now consider four important semantic properties that modal formulas can have.  

\subsection{Monotonicity}

A modal  formula is said to be \emph{monotone} in a propositional variable
$p$ if its truth is preserved when enlarging the valuation of $p$. 
Formally, $\phi$ is monotone in $p$ if, for all Kripke models $M=(W,(R_a)_{a\in A},V)$ and $w\in W$, $M,w\models\phi$ implies that $M[p\mapsto X],w\models\phi$ for all sets $X\subseteq W$ with $V(p)\subseteq X$. 

\begin{thmC}[{\cite[Theorem 3.11]{Kurtonina1997:simulating}}, cf.~also~{\cite[Appendix B]{Kurtonina1999:expressiveness}}]
\label{thm:monotonicity}
   For all modal formulas $\phi$ and propositional variables $p$, the following are equivalent:
   \footnote{Technically speaking, the cited result applies to formulas that are monotone in \emph{all} propositional variables rather than in a specific one. In terms of the proof, this does not make an essential difference.}
\begin{enumerate}
    \item $\phi$ is monotone in $p$, \vspace{1mm}
    \item $\phi$ is equivalent to a modal formula $\psi$ that is positive in $p$.
\end{enumerate}
\end{thmC}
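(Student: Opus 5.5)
The direction (2)$\Rightarrow$(1) is a routine induction on $\psi$. We show that every formula positive in $p$ is monotone in $p$. The clauses for $\neg$ only apply to subformulas not containing $p$, so they are harmless. The clauses for $\wedge$, $\vee$, $\langle a\rangle$ and $[a]$ each preserve monotonicity, because these connectives are themselves monotone in their arguments. Monotonicity is a semantic property, so it transfers from $\psi$ to any equivalent $\phi$.

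For (1)$\Rightarrow$(2), I would use a normal-form construction rather than a compactness argument, since this is what later allows the result to transfer to the finite. Let $d$ be the modal depth of $\phi$, and let $\Sigma$ be the finite set of propositional variables occurring in $\phi$, together with $p$. Up to equivalence there are only finitely many modal formulas of depth $\le d$ over $\Sigma$. Each such formula is a disjunction of \emph{characteristic formulas} $\chi_{M,w}$ describing the $(d,\Sigma)$-bisimulation type of a pointed model. Now define $\chi^+_{M,w}$ to be the same characteristic formula, except that the conjunct $\neg p$ is simply dropped wherever it occurs. Concretely, $\chi^+$ is built recursively: at each level we take the conjunction of the positive literals true at the point, the negative literals other than $\neg p$, a conjunct $\langle a\rangle\chi^+_{M,v}$ for each $R_a$-successor $v$, and a conjunct $[a]\bigvee_v\chi^+_{M,v}$. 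The candidate formula is $\psi:=\bigvee\{\chi^+_{M,w} : M,w\models\phi\}$, which is finite up to equivalence and positive in $p$.

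Clearly $\phi\models\psi$. The key step, and the main obstacle, is the converse: if $N,u\models\chi^+_{M,w}$ and $M,w\models\phi$, then $N,u\models\phi$. Here I would introduce a depth-$d$ \emph{$p$-simulation-like} relation $Z$ between $M,w$ and $N,u$, defined by the following conditions. First, $Z$ satisfies both forth and back for every relation $R_a$ up to depth $d$. Second, $Z$ preserves all variables in $\Sigma\setminus\{p\}$ in both directions. Third, $Z$ preserves $p$ only upward, from $M$ to $N$. Satisfaction of $\chi^+_{M,w}$ at $u$ yields such a relation by the usual Hennessy--Milner-style unravelling.

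Given $Z$, I would form a model $N'$ as follows. Take $N$ restricted to its depth-$d$ unravelling and shrink the valuation of $p$ to exactly those points that are $Z$-related to $p$-points of $M$. Then $N'$ is $d$-bisimilar to $M$ at $(w,u)$, so $N',u\models\phi$. Since $N$ enlarges $V(p)$ relative to $N'$ and $\phi$ is monotone in $p$, we get $N,u\models\phi$.

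The delicate point is that a single point of $N$ may be related to several points of $M$ with different $p$-values. Working on the tree unravelling (together with modal invariance under unravelling) lets us choose $p$ consistently along each branch. All models involved can be taken to be finite, so this argument only uses monotonicity over finite models, which also gives the finite-model version of the statement.
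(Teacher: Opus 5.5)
\paragraph{A gap in the construction of the intermediate model.} Most of your argument is the standard one and is fine. This includes positivity of $\psi$, the entailment $\phi\models\psi$, and reading off from $N,u\models\chi^+_{M,w}$ depth-indexed relations $Z_k$. These $Z_k$ are back-and-forth for every $R_a$, preserve $\Sigma\setminus\{p\}$ in both directions, and transfer $p$ only from $M$ to $N$. The gap is the step where you shrink $V(p)$ on the depth-$d$ unravelling of $N$ and claim the result is $d$-bisimilar to $M,w$. This fails as soon as one point of $N$ is the $Z$-partner of several points of $M$ that disagree on $p$. Unravelling does not help: it separates distinct \emph{paths} of $N$, but not distinct $Z$-partners of the \emph{same} path. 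The conflict is between siblings in $M$, not along a branch.

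Concretely, take $\Sigma=\{p\}$ and $d=1$. Let $M$ consist of the edges $wRx_1$ and $wRx_2$, with $p$ true only at $x_1$, and let $N$ consist of the single edge $uRy$, with $p$ true at $y$. Then $\chi^+_{M,w}$ is equivalent to $\Diamond p$, so $N,u\models\chi^+_{M,w}$ with $Z=\{(w,u),(x_1,y),(x_2,y)\}$. The unravelling of $N$ at $u$ is $N$ itself. No valuation of $p$ on it makes $u$ satisfy $\Diamond p\wedge\Diamond\neg p$, which holds at $w$. So no $N'$ of the kind you describe is even $1$-bisimilar to $M,w$.

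\paragraph{The repair: build the intermediate model from $Z$.} The missing idea is duplication: the intermediate model must be built from $Z$ rather than from $N$.
\begin{enumerate}
\item Take as points the sequences $((x_0,y_0),\ldots,(x_j,y_j))$ with $(x_0,y_0)=(w,u)$, $j\le d$, and $(x_i,y_i)\in Z_{d-i}$.
\item Put an $R_a$-edge from a sequence to its one-step extension whenever the new pair is $R_a$-reachable from the last pair in $M$ and in $N$ coordinatewise.
\item Interpret the variables of $\Sigma\setminus\{p\}$ through either coordinate; they agree.
\end{enumerate}
Reading $p$ off the $M$-coordinate gives a model $d$-bisimilar to $M,w$ via the first projection. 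Reading $p$ off the $N$-coordinate gives one $d$-bisimilar to $N,u$ via the second projection. The forth and back clauses of the $Z_k$ are exactly what make these projections bisimulations. Since $Z$ transfers $p$ upward, the first valuation of $p$ is contained in the second. So monotonicity applies within this single model and yields $N,u\models\phi$.

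If $M$ and $N$ are finite, this model is finite too. So your closing claim, that only monotonicity over finite models is used, survives the repair.

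\paragraph{Comparison with the paper.} The paper does not prove this theorem. It cites it for arbitrary models and obtains the finite version indirectly, through Proposition~\ref{prop:monotone-validity} and the finite model property. Your repaired argument would give both versions directly.
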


Here, we say that a modal formula $\psi$ is \emph{positive in $p$} if it is in negation normal form and does not contain any negated occurrences of $p$.

It turns out that a formula $\phi$'s being monotone in a propositional variable $p$ is correlated with the validity of an associated modal formula $\phi'$, both in the finite and in the infinite:
\begin{prop}\label{prop:monotone-validity}
   For all modal formulas $\phi$ and propositional variables $p$, the following statements are equivalent:
   \begin{enumerate}
       \item $\phi$ is monotone in $p$, \vspace{1mm}
       \item The following validity holds: $\models\phi\to \phi[p/(p\lor q)]$.
   \end{enumerate}
    with $q$ a fresh propositional variable. This equivalence holds both over arbitrary structures and in the finite.
\end{prop}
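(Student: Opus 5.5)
We prove the two directions directly, and run the same argument once with "model" meaning an arbitrary Kripke model and once with "model" meaning a finite one. The substitution $\phi[p/(p\lor q)]$ behaves semantically as a change of valuation. More precisely, for any model $M=(W,(R_a)_{a\in A},V)$ and $w\in W$ we have
\[
M,w\models \phi[p/(p\lor q)] \iff M[p\mapsto V(p)\cup V(q)],w\models\phi .
\]
This substitution lemma is proved by a routine induction on $\phi$. The key point is that the model $M[p\mapsto V(p)\cup V(q)]$ has the same domain as $M$, so it is finite whenever $M$ is. This is what makes the argument relativize to the finite.

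For (1)$\Rightarrow$(2), let $M,w$ be any (finite) pointed model with $M,w\models\phi$. Since $V(p)\subseteq V(p)\cup V(q)$, monotonicity in $p$ gives $M[p\mapsto V(p)\cup V(q)],w\models\phi$. By the substitution lemma this means $M,w\models\phi[p/(p\lor q)]$. Hence $\phi\to\phi[p/(p\lor q)]$ is valid over the given class.

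For (2)$\Rightarrow$(1), let $M,w\models\phi$ and let $X\supseteq V(p)$ be a subset of the domain. Consider the model $M'$ that agrees with $M$ except that $V'(q)=X$. We may do this because $q$ is fresh, so it does not occur in $\phi$, and therefore $M',w\models\phi$ still holds; this is a standard locality fact. By validity, $M',w\models\phi[p/(p\lor q)]$. The substitution lemma then gives $M'[p\mapsto V(p)\cup X],w\models\phi$, and $V(p)\cup X=X$. Using freshness of $q$ once more, we conclude $M[p\mapsto X],w\models\phi$. Again $M'$ has the same domain as $M$, so the argument stays inside the finite models.

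Here "monotone in the finite" means the definition of monotonicity with $M$ ranging over finite models only. The only step needing care is checking the substitution lemma and the irrelevance of the fresh variable $q$, both routine inductions; there is no real obstacle. Alternatively, once the equivalence over arbitrary models is established, we could use the Finite Model Property to see that validity of $\phi\to\phi[p/(p\lor q)]$ is the same over finite and over arbitrary models. It would follow that monotonicity in the finite coincides with monotonicity in general.
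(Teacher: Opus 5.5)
Your proof is correct and is essentially the paper's argument: the paper notes only that any (finite) counterexample to monotonicity yields a (finite) counterexample to $\phi\to\phi[p/(p\lor q)]$ and vice versa, and your same-domain change of the valuations of $p$ and $q$, via the substitution lemma and the freshness of $q$, spells out exactly this transfer. Your closing FMP aside could not replace the direct finite argument on its own, since getting from monotonicity in the finite to finite validity still needs your (1)$\Rightarrow$(2) run on finite models; your main proof does not depend on that aside.
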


    Indeed, clearly, any counterexample to monotonicity provides a counterexample for the given validity, and vice versa. 

    In light of the finite model property, it now follows that Theorem~\ref{thm:monotonicity} holds in the finite: 
    if $\phi$ is monotone in $p$ over finite structures, then, by Proposition~\ref{prop:monotone-validity}, $\phi\to\phi[p/(p\lor q)]$ is valid 
    in the finite. Hence, by the finite model property, it is valid over arbitrary structures. Applying Proposition~\ref{prop:monotone-validity} once more, we derive that
    $\phi$ is monotone in $p$ over arbitrary structures. Therefore, by Theorem~\ref{thm:monotonicity}, 
    $\phi$ is equivalent over arbitrary structures, and hence also over finite structures, to a formula of the required syntactic shape. 
    
\subsection{Preservation under induced substructures}

 We say that a modal 
formula $\phi$ is \emph{preserved under induced substructures} if $M,w\models\phi$  implies
$M| X,w\models\phi$, where 
$X$ is any subset of the domain that includes $w$,
and $M| X$ is the restriction of $M$ to $X$.
The following preservation theorem is known for
preservation under induced substructures, cf.~also its history in \cite{NNIL}.

\begin{thmC}[{\cite[Theorem 6.6.5]{deRijkePhD}}]
\label{thm:induced}
   For all modal formulas $\phi$, the following are equivalent:
\begin{enumerate}
    \item $\phi$ is preserved under induced substructures,\vspace{1mm}
    \item $\phi$ is equivalent to a modal formula $\psi$ in negation normal form not containing any modal $\Diamond$ operator.
\end{enumerate}
\end{thmC}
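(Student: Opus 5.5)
The direction from (2) to (1) is a routine induction: I would show by induction on $\psi$ in negation normal form without $\Diamond$ that $M,w\models\psi$ implies $M|X,w\models\psi$ for every $X\ni w$. Literals are unaffected by restriction. Conjunction and disjunction are immediate. For $\Box\psi$, every successor of $w$ in $M|X$ is also a successor in $M$ and lies in $X$, so the induction hypothesis applies. Call such formulas \emph{box formulas}.

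For the direction from (1) to (2), I would run the standard compactness-plus-saturation argument. Let $C$ be the set of box formulas $\chi$ with $\phi\models\chi$. It suffices to show $C\models\phi$, because compactness then gives a finite subset $\chi_1,\dots,\chi_n$ of $C$ with $\bigwedge_i\chi_i\models\phi$, and $\bigwedge_i\chi_i$ is the required $\psi$.

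So fix $M,w\models C$ and consider
\[\Sigma=\{\phi\}\cup\{\neg\chi : \chi \text{ a box formula with } M,w\not\models\chi\}.\]
Suppose $\Sigma$ were unsatisfiable. Compactness would give $\phi\models\chi_1\lor\dots\lor\chi_n$. This disjunction is again a box formula, hence belongs to $C$ and is true at $w$. That contradicts the fact that each $\chi_i$ fails at $w$. Hence there is $N,v\models\phi$ with the inclusion of box theories $\mathrm{Th}_\Box(N,v)\subseteq\mathrm{Th}_\Box(M,w)$. By passing to $\omega$-saturated elementary extensions (and noting that $\phi$ is preserved under induced substructures, so it is harmless to work in an elementary extension of $M$ containing $M$ as an induced substructure), I may assume both models are modally saturated.

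Now define $Z\subseteq N\times M$ by $(n,m)\in Z$ iff $\mathrm{Th}_\Box(N,n)\subseteq\mathrm{Th}_\Box(M,m)$. This relation has two properties:
\begin{itemize}
\item Since both $p$ and $\neg p$ are box formulas, $Z$-related points satisfy the same atoms.
\item By saturation, for every $a$-successor $m'$ of $m$ there is an $a$-successor $n'$ of $n$ with $(n',m')\in Z$. The argument: otherwise each $n'$ has some box formula true at it but false at $m'$, and saturation makes finitely many of these suffice. Their disjunction $\delta$ then gives $[a]\delta$ true at $n$ and false at $m$.
\end{itemize}

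The key construction, which I expect to be the main obstacle, is an amalgam $K$ in which $M$ sits as an induced substructure while $K,w$ is bisimilar to $N,v$. I would take the disjoint union of $M$ and $N$ and add an $a$-edge from $m$ to $n'$ whenever $(n,m)\in Z$ and $nR_an'$. Restricting $K$ to the domain of $M$ gives back exactly $M$, since the new edges all lead into $N$.

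Then I claim that $Z\cup\mathrm{id}_N$ (read as a relation between $N$ and $K$) is a bisimulation.
\begin{itemize}
\item Atoms agree by the first property of $Z$.
\item Take $(n,m)\in Z$. An $N$-successor $n'$ of $n$ is matched by the added edge from $m$ to $n'$, related by the identity. Conversely, a $K$-successor of $m$ is either an $M$-successor $m'$, matched by the second property of $Z$, or an added $n''$. Such an $n''$ is a successor of some $n_0$ with $(n_0,m)\in Z$, not necessarily of $n$. To handle this, I would need to adjust the construction, e.g.\ add edges only from the specific witness pairs, or check that box-theory inclusion supplies a successor of $n$ related to $n''$.
\item Pairs related by the identity on $N$ are trivially fine.
\end{itemize}
This detail needs care, and I would first try restricting the added edges to $m\mapsto n'$ with $n$ the designated partner of $m$, making $Z$ functional via a choice along paths from $(v,w)$.

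Once the bisimulation is in place, $K,w\models\phi$ by bisimulation invariance, and preservation under induced substructures gives $M,w\models\phi$.
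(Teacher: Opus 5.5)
The paper does not prove this theorem; it is cited from de Rijke and used as a black box. So your argument has to stand on its own. Everything up to the amalgamation step is fine, but that step has a genuine gap, which you already sense: $Z\cup\mathrm{id}_N$ is in general not a bisimulation.

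For example, let $m$ and $n$ be dead ends and let $n_0$ have a single successor $n_1$, all with the same atoms. Then $(n,m)$ and $(n_0,m)$ both lie in $Z$, your construction adds the edge $m\to n_1$, and the back clause fails at $(n,m)$. So your second suggested repair (finding a successor of $n$ related to $n''$) is not available.

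A bisimulation can only relate a point of $K$ to mutually bisimilar points of $N$, so you must pick one partner per point. But choosing partners along paths in $M$ itself fails as soon as $M$ is not a tree, and then the $K$ you want need not exist at all. Take $M$ with $w\to m_1\to m$ and $w\to m_2\to m$, where $q$ holds only at $m_1$. Take $N$ with $v\to n_1\to x$ and $v\to n_2\to y\to y'$, where $q$ holds only at $n_1$. Then $\mathrm{Th}_\Box(N,v)\subseteq\mathrm{Th}_\Box(M,w)$, witnessed by $\{(v,w),(n_1,m_1),(n_2,m_2),(x,m),(y,m)\}$. Yet in any $K$ containing $M$ as an induced substructure with $(K,w)$ bisimilar to $(N,v)$, the point $m$ would be bisimilar both to the dead end $x$ (via $m_1\mapsto n_1$) and to $y$ (via $m_2\mapsto n_2$), which is impossible.

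The missing step is to unravel $M$ first. $(M,w)$ is bisimilar to its tree unraveling $(M^{unr},\langle w\rangle)$. That unraveling therefore still satisfies $C$ and has the same box theory, so the same $N,v$ works, and $M^{unr},\langle w\rangle\models\phi$ gives back $M,w\models\phi$. Saturation of $M$ is not needed: your proof of the back property of $Z$ uses only that $N$ is modally saturated.

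On the tree, define $g$ top-down by $g(\langle w\rangle)=v$. For each $a$-child $m'$ of $m$, let $g(m')$ be an $a$-successor of $g(m)$ with $\mathrm{Th}_\Box(N,g(m'))\subseteq\mathrm{Th}_\Box(M^{unr},m')$. This is well defined because each node has a unique parent and a unique incoming label. Add an $a$-edge from $m$ to $n'$ only when $g(m)R_an'$.

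Then $\{(g(m),m)\}\cup\mathrm{id}_N$ is a bisimulation:
\begin{itemize}
\item the forth clause uses the new edges;
\item the back clause uses either the tree child $m'$, matched by $g(m')$, or a new edge to a successor of $g(m)$, matched by the identity;
\item no new edges leave the copy of $N$ or enter $M^{unr}$, so $M^{unr}$ is still an induced substructure of $K$.
\end{itemize}
With this change the rest of your argument goes through.

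Alternatively, you can drop saturation altogether. Restrict to box formulas of modal depth at most that of $\phi$ in its finitely many letters; these are finitely many up to equivalence. Then unravel only to that depth, obtaining a bounded-depth bisimulation, which suffices for $\phi$.
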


The fact that a modal formula $\phi$ is preserved under induced substructures again correlates with validity of an associated  modal formula $\phi'$.
We first need to introduce some notation.
For a modal formula $\phi$ and a propositional variable $p$, 
let 
$\phi^p$ be the result of \emph{relativizing} $\phi$ \emph{to} $p$
    (i.e., replacing every $\langle a\rangle$ by $\langle a\rangle(p\land\cdots)$ and replacing every $[a]$ by $[a](p\to \cdots)$).

\begin{prop}\label{prop:lostarski-validity}
   For all modal formulas $\phi$, the following are equivalent:
   \begin{enumerate}
       \item $\phi$ is preserved under induced substructures, \vspace{1mm}
       \item The following validity holds: $\models (p\land\phi)\to \phi^p$.
   \end{enumerate}
    where $p$ is a fresh propositional variable. This equivalence holds both over arbitrary structures and in the finite.
\end{prop}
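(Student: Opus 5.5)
The plan rests on one observation: relativizing to $p$ simulates evaluation in the induced substructure. Precisely, we first establish the following relativization lemma by induction on $\phi$, with $p$ not occurring in $\phi$. For every Kripke model $M=(W,(R_a)_{a\in A},V)$ and every $w\in V(p)$,
\[
M,w\models\phi^p \iff M|V(p),w\models\phi .
\]
The induction is routine.
\begin{itemize}
\item The atomic case holds because $M|V(p)$ keeps the valuation restricted to $V(p)$.
\item The Boolean cases are immediate.
\item For $\langle a\rangle\psi$: the formula $\langle a\rangle(p\land\psi^p)$ holds at $w$ iff some $R_a$-successor $v$ lies in $V(p)$ and satisfies $\psi^p$. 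By the induction hypothesis (applicable since $v\in V(p)$), this says exactly that $v$ is an $R_a$-successor of $w$ inside $M|V(p)$ satisfying $\psi$.
\item The case of $[a]$ is dual.
\end{itemize}
Since $p$ is fresh, changing $V(p)$ does not affect the truth of $\phi$ itself.

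With the lemma in hand, the two directions are short.

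\emph{(1)$\Rightarrow$(2).} Suppose $\phi$ is preserved under induced substructures, and let $M,w\models p\land\phi$. Then $w\in X:=V(p)$, so by preservation $M|X,w\models\phi$. By the lemma, $M,w\models\phi^p$.

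\emph{(2)$\Rightarrow$(1).} Suppose $M,w\models\phi$ and $w\in X\subseteq W$. Let $M'$ be $M$ with $V(p)$ reset to $X$. Because $p$ is fresh, $M',w\models\phi$, and $w\in X$ gives $M',w\models p$. By the validity, $M',w\models\phi^p$. By the lemma, $M'|X,w\models\phi$. Finally, $M'|X$ and $M|X$ agree on all variables occurring in $\phi$, so $M|X,w\models\phi$.

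It remains to see that the equivalence holds in the finite as well. Every model constructed above has the same domain as a given model or is a substructure of it: $M'$ has the domain of $M$, and $M|X$, $M'|X$ are substructures of models already in hand. Hence restricting both (1) and (2) to finite models leaves the argument unchanged, and the proof goes through verbatim over finite structures.

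The only step requiring any care is the modal case of the lemma. There one must check that the guard $p$ on successors correctly matches membership in the substructure, and that the induction hypothesis is applied only at points inside $V(p)$; this is exactly why the lemma is stated for $w\in V(p)$.
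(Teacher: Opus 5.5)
Your proof is correct and is essentially the paper's argument: the paper only remarks that any (finite) counterexample to preservation under induced substructures yields a (finite) counterexample to $\models (p\land\phi)\to\phi^p$, and vice versa. Your relativization lemma and the two directions (resetting $V(p)$ to $X$, then restricting) spell out exactly that correspondence, and your note that only same-domain revaluations and restrictions are used is what makes it hold in the finite too.
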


Indeed, any (finite) counterexample to preservation under induced substructures provides a (finite) counterexample for the given validity, and vice versa.

It again follows that Theorem~\ref{thm:induced} holds also in the finite, using the same line of argument as in the case of monotonicity.

\vspace{-1ex}
\subsection{Complete additivity}

    A modal formula $\phi$ is \emph{completely additive} in a propositional variable $p$
    if $\phi$ is monotone in $p$ and whenever
    $M,w\models\phi$, then there is an element
    $v\in V(p)$ such that $M[p\mapsto \{v\}],w\models\phi$. %\color{red}{{\bf  Programming Operations that are Safe for Bisimulation, Studia Logica 60:2 (Logic Colloquium. Clermont–Ferrand 1994), 311–330. }}\color{black}

\begin{thmC}[\cite{vBenthem1998:bisimulation}]
\label{thm:complete-additivity-pres}
    A modal formula $\phi$ is completely additive in a propositional variable $p$ if and only if $\phi$ is 
    equivalent to a finite disjunction of
    formulas $\theta$ that can each be 
    generated by the following recursive grammar:
    \[ \theta ~~:=~~ p\land\phi \mid \phi\land \langle a\rangle \theta\]
    where $\phi$ is a $p$-free modal formula and $a\in A$.
\end{thmC}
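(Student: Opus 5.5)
The right-to-left direction is a direct induction on the grammar; the left-to-right direction will go through a normal form combined with a minimal-witness argument. For right-to-left, I will show that every $\theta$ generated by the grammar is completely additive in $p$, and then note that complete additivity is closed under finite disjunctions. Monotonicity is clear, since $p$ occurs only positively. For the witness condition, take $\theta = p\land\phi$: if $M,w\models\theta$ then $w\in V(p)$, and setting $p$ to $\{w\}$ preserves truth because $\phi$ is $p$-free. Take $\theta = \phi\land\langle a\rangle\theta'$: a successor $u$ satisfies $\theta'$, the induction hypothesis gives $v\in V(p)$ with $M[p\mapsto\{v\}],u\models\theta'$, and since $\phi$ is $p$-free, $w$ satisfies $\theta$ in $M[p\mapsto\{v\}]$. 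For a disjunction, pick the witness of whichever disjunct is true.

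For left-to-right, suppose $\phi$ is completely additive in $p$, and let $n$ be its modal depth. By Theorem~\ref{thm:monotonicity} we may assume $\phi$ is positive in $p$. The key observation is that truth of $\phi$ at $w$ in $M[p\mapsto\{v\}]$ depends only on the $n$-unravelling of $M$ around $w$. Unravel $M$ to a tree. In the tree, the singleton $\{v\}$ pulls back to the set of copies of $v$. By complete additivity applied in the unravelled model, which is bisimilar to $M$ and so satisfies $\phi$ at the root, together with monotonicity, $\phi$ already holds with $p$ true at a single node $x$ of the tree at depth $k\le n$. Such a situation is described by a path $w=u_0 R_{a_1} u_1\cdots R_{a_k} u_k=x$.

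I then aim to prove the semantic claim
\[
\phi \;\equiv\; \bigvee \bigl(\chi_0\land\langle a_1\rangle(\chi_1\land\langle a_2\rangle(\cdots\langle a_k\rangle(p\land\chi_k)\cdots))\bigr),
\]
where the disjunction ranges over sequences of length at most $n$ with $\chi_i$ among the finitely many $p$-free modal formulas of depth $\le n$ up to equivalence (Hintikka-type formulas). The right-hand side has exactly the shape allowed by the grammar. To make the disjunction finite, I will include exactly those sequences $(a_i,\chi_i)$ for which every pointed model satisfying the corresponding $\theta$ satisfies $\phi$. The inclusion $\supseteq$ then holds by construction. For $\subseteq$, given $M,w\models\phi$, use the single-node witness $x$ in the tree unravelling with $p$ true only at $x$, and take each $\chi_i$ to be the depth-$(n-i)$ $p$-free Hintikka formula of $u_i$. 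Any model satisfying this $\theta$ at $w'$ contains a path matching the $\chi_i$.

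The main obstacle is verifying that $\phi$ transfers to every such $w'$, that is, that the chosen $\theta$ genuinely implies $\phi$. The issue is that $\chi_i$ records only the $p$-free type of $u_i$, while the behaviour of $p$ elsewhere matters. I will handle this with an Ehrenfeucht–Fraïssé / bounded bisimulation argument. In $M'$, consider the model $M'[p\mapsto\{x'\}]$, where $x'$ is the end of the matching path, and unravel it. Then show that this structure is $n$-bisimilar, with respect to all propositional letters including $p$, to a model that extends the tree witness by extra $p$-free material. Going back and forth, use the path for the $p$-node branch and the Hintikka types for the $p$-free branches. Positivity, i.e. monotonicity, then lets us enlarge $V(p)$ back to the original valuation. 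Getting the bisimulation right on branches off the path, where the types $\chi_i$ must guarantee that no spurious $p$ is required, is where I expect the care to be needed.
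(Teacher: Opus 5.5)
Your right-to-left induction is correct, and your plan for left-to-right is the right one: take the finite disjunction of all depth-$\le n$ grammar formulas with $p$-free Hintikka conjuncts that entail $\phi$, and show that every model of $\phi$ satisfies one of them. The depth bound $k\le n$ needs one line: otherwise $\phi$ would hold with $p$ empty, contradicting complete additivity. However, the step you postpone, $\theta\models\phi$, is the entire content of this direction, and the method you sketch for it does not work as stated. Write $w'=z_0,\ldots,z_k=x'$ for the matching path in $M'$.

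\emph{First,} unravelling $M'[p\mapsto\{x'\}]$ makes $p$ true at every copy of $x'$. Suppose $x'$ is reachable from $w'$ by paths of two different lengths $\le n$; for instance, $x'$ has a loop and $k<n$, which already happens with $M'=M$. Then $p$ occurs at two depths, whereas in the tree witness extended by $p$-free subtrees $p$ holds only at $x$, at depth $k$, so no $n$-bisimulation exists. The $p$-free $\chi_i$ cannot exclude this. You must instead use monotonicity to keep $p$ only at the copy of $x'$ at the end of the matched path.

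\emph{Second, and more seriously,} the back-and-forth needs spare copies on both sides. An off-path successor of $u_i$ with the $p$-free type of $u_{i+1}$ has no partner if $z_{i+1}$ is the only successor of $z_i$ of that type. Symmetrically, an off-path successor of $z_i$ with the type of $z_{i+1}$ has no partner if $u_{i+1}$ is the only such successor of $u_i$. The reason is that an off-path subtree contains no $p$, while the next path node sees $p$ within $n-i-1$ steps. Spares in $M'$ are harmless: duplicate subtrees, then shrink $p$ by monotonicity. For the witness, however, you give no reason why $\phi$ survives adding $p$-free copies of subtrees, and none of your tools supplies one:
\begin{itemize}
\item the extended tree is bisimilar to the witness only if $p$ is also made true at the new copies of $x$;
\item monotonicity transfers $\phi$ from fewer $p$-points to more, not back;
\item complete additivity applied to that bisimilar model yields a single $p$-point that may be a new copy rather than $x$.
\end{itemize}

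The missing idea is to create the spare copies \emph{before} invoking complete additivity.
\begin{enumerate}
\item Replace $(M,w)$ by its unravelling in which every successor of every node occurs twice. This is bisimilar to $(M,w)$, so it satisfies $\phi$.
\item Apply complete additivity there to get a single $p$-point $x$ at depth $k\le n$, and read off the $\chi_i$ from the path $w=u_0,\ldots,u_k=x$.
\item Given $M',w'\models\theta$, take the doubled unravelling of $M'$ as well, choose a path $z_0,\ldots,z_k$ realizing $\theta$, and make $p$ true only at $z_k$.
\item Pair $u_i$ with $z_i$ at remaining depth $n-i$, and off-path nodes with off-path nodes of the same $p$-free Hintikka type. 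This is an $n$-bisimulation that also respects $p$: off-path subtrees are $p$-free on both sides, and the twin of the next path node guarantees that every required $p$-free type is realized off the path.
\item Hence $\phi$ holds at the root with $p$ restricted to $\{z_k\}$. By monotonicity it holds in the doubled unravelling of $M'$, and by bisimulation invariance at $(M',w')$.
\end{enumerate}
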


The fact that a modal formula $\phi$ is completely additive in a propositional variable $p$ can again be correlated to the validity of an associated modal formula $\phi'$, both in the finite and in the infinite. This time, the proof, however, is less straightforward.

\begin{prop} \label{prop:additivity-finite}
   For all modal formulas $\phi$ and propositional variables $p$, the following are equivalent, where  $q_1, q_2$  are fresh propositional variables:
   \begin{enumerate}
       \item $\phi$ is completely additive in $p$, \vspace{1mm}
       \item The following two validities hold: $\models\phi[p/q_1\lor q_2] \leftrightarrow (\phi[p/q_1]\lor \phi[p/q_2])$ and $\models\neg\phi[p/\bot]$.
   \end{enumerate}
    The equivalence holds  both over arbitrary structures and  in the finite.
\end{prop}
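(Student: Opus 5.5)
The plan is to prove the two directions separately, checking that each argument works within whichever class of models (all or finite) we fix. Since both validities in (2) are modal formulas, the finite model property shows that they hold over all models iff they hold over finite ones. So the real content is the implication (2)$\Rightarrow$(1) within each class.

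\emph{(1)$\Rightarrow$(2).} Assume $\phi$ is completely additive in $p$.
\begin{enumerate}
\item For $\models\neg\phi[p/\bot]$: if $M,w\models\phi$ with $V(p)=\emptyset$, complete additivity would give some $v\in\emptyset$, which is impossible.
\item The right-to-left half of the first validity is just monotonicity, since $V(q_i)\subseteq V(q_1)\cup V(q_2)$.
\item For left-to-right, suppose $M,w\models\phi[p/q_1\lor q_2]$. This says $M[p\mapsto V(q_1)\cup V(q_2)],w\models\phi$, so there is $v\in V(q_1)\cup V(q_2)$ with $M[p\mapsto\{v\}],w\models\phi$. Say $v\in V(q_1)$; then monotonicity yields $M,w\models\phi[p/q_1]$.
\end{enumerate}
Nothing here depends on the size of $M$, so this direction holds in both settings.

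\emph{(2)$\Rightarrow$(1), finite case.} Monotonicity follows from the right-to-left half of the first validity. Given $V(p)\subseteq X$, substitute $q_1:=p$ and $q_2$ with $V(q_2)=X$ (as a valuation instance). Then $\phi[p/q_1]$ implies $\phi[p/q_1\lor q_2]$, i.e.\ $\phi$ holds with $p$ interpreted as $X$.

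For the singleton property, suppose $M,w\models\phi$ with $M$ finite. Induct on $|V(p)|$.
\begin{enumerate}
\item $|V(p)|=0$ is excluded by $\models\neg\phi[p/\bot]$.
\item If $|V(p)|=1$ we are done.
\item Otherwise split $V(p)=Y_1\cup Y_2$ into two nonempty proper parts and interpret $q_i$ as $Y_i$. The first validity gives $\phi$ with $p\mapsto Y_1$ or with $p\mapsto Y_2$, and the induction hypothesis applies.
\end{enumerate}
This settles the finite version.

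\emph{(2)$\Rightarrow$(1), arbitrary models: the main obstacle.} When $V(p)$ is infinite the splitting induction does not terminate. The family $\mathcal U=\{X\subseteq V(p): M[p\mapsto X],w\models\phi\}$ is upward closed, excludes $\emptyset$, and is prime ($X\cup Y\in\mathcal U$ implies $X\in\mathcal U$ or $Y\in\mathcal U$). What is needed is that it contains a singleton. My first attempt is a saturation argument. Since $\phi$ is bisimulation invariant and modal validities are preserved, we may replace $M$ by an $\omega$-saturated elementary extension; elements of $V(p)$ that are not in the image of $M$ can be handled afterwards by monotonicity and bisimilarity.

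Next restrict $\mathcal U$ to first-order definable (with parameters) subsets of $V(p)$. By primeness these determine a complete type, and by saturation the type is realised by some $v$. The remaining step, which I expect to be the delicate one, is showing $M[p\mapsto\{v\}],w\models\phi$. For this I would use that $\phi$ has finite modal depth $n$. Its truth with $p\mapsto\{v\}$ is then expressed by a first-order formula $\chi(w,v)$ (the standard translation with $p$ replaced by $x=v$). If $\chi(w,v)$ failed, the set $\{x\in V(p):\neg\chi(w,x)\}$ would belong to the type. One then has to derive a contradiction with $\phi$ holding on that definable set, using primeness once more together with monotonicity.

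If this route stalls, the fallback is to derive the arbitrary case from the finite one. The validities hold over finite models, so by the finite case $\phi$ is completely additive over finite models. One then transfers this to all models via a finite-depth, finitely-branching approximation of the unravelling of $M$ around $w$ that retains a $\phi$-witness. The difficulty in that route is ensuring the approximation preserves truth of $\phi$, because of the $[a]$-subformulas.
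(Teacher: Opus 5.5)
\textbf{There is a genuine gap in the arbitrary-model case of (2)$\Rightarrow$(1).} Your arguments for (1)$\Rightarrow$(2), for monotonicity, and for the finite case of (2)$\Rightarrow$(1) are correct and match the paper; your induction on $|V(p)|$ is its minimal-counterexample argument. The gap sits in the one step with real content: the singleton property over arbitrary models.

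The saturation route cannot be finished as sketched. The set $D=\{x\in V(p)\mid \neg\chi(w,x)\}$ you arrive at is simply another counterexample to complete additivity. No contradiction can follow from primeness and monotonicity, since a non-principal ultrafilter is upward closed, prime, and omits $\emptyset$ and all singletons. In fact nothing in that route uses that $\phi$ is modal: writing $p\mapsto\{v\}$ as a first-order $\chi(w,v)$ works for any formula. For first-order formulas the implication is false. The sentence ``$<$ is a linear order and $P$ is cofinal in it'' satisfies both validities in (2) over all structures, because a union of two non-cofinal sets is non-cofinal and $\emptyset$ is non-cofinal. Yet in $(\mathbb{N},<)$ with $P=\mathbb{N}$ no singleton is cofinal. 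So some modal, finitary ingredient is indispensable. There are also two smaller slips. $\mathcal U$ does not determine a type: its complement is an ideal, and you must choose an ultrafilter disjoint from it. And $\omega$-saturation only realises types over finitely many parameters.

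Your fallback is the paper's route, but it stops exactly where the work lies, and it misplaces the difficulty. Keeping $\phi$ true under pruning is routine: one child per depth-bounded bisimulation type gives an $n$-bisimulation. What must survive is the failure of $\phi$ under \emph{every} singleton valuation of $p$, and this property is not bisimulation invariant. The paper handles it in three steps.
\begin{enumerate}
\item The unravelling remains a counterexample. For $u\in V^{unr}(p)$, let $X_u$ be the set of all copies of $\pi(u)$. Then $(M^{unr}[p\mapsto X_u],\langle w\rangle)$ is bisimilar to $(M[p\mapsto\{\pi(u)\}],w)$, so $\phi$ fails there, and by monotonicity it also fails at $\{u\}$.
\item Cut the tree at depth $n$, the modal depth of $\phi$.
\item ``There is a counterexample and the frame has depth at most $n$'' is an MSO sentence $\Psi$ true in a tree. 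By Walukiewicz's theorem, the class of all trees and the class of finitely branching trees have the same MSO theory. So $\Psi$ holds in a finitely branching tree of bounded depth, i.e.\ a finite one, which contradicts the finite case.
\end{enumerate}
An elementary alternative to the third step is to prune so that every node keeps, for each relation, at most \emph{two} children of each depth-bounded bisimulation type, computed with the original valuation of $p$. The second copy ensures that marking $u$ on one branch never removes the unmarked type from among the siblings. Hence the pruned tree with $p\mapsto\{u\}$ stays $n$-bisimilar to the unpruned one with $p\mapsto\{u\}$.
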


\begin{proof}
    From 1 to 2: Suppose $\phi$ is completely additive in $p$.
    In particular, this means that 
    $\phi$ is monotone in $p$.
    It follows from the monotonicity
    that $\models (\phi[p/q_1]\lor\phi[p/q_2])\to \phi[p/q_1\lor q_2]$. 
    Next, suppose $M,w\models\phi[p/q_1\lor q_2]$.
    Since $\phi$ is completely additive in $p$, it follows that $M[p\mapsto \{v\}],w\models\phi$ for some $v\in V(p_1)\cup V(p_2)$. If $w\in V(p_1)$, it follows by monotonicity that
    $M,w\models\phi[p/p_1]$, while, if
    $w\in V(p_2)$, it follows by monotonicity that $M,w\models\phi[p/p_2]$. It also follows directly from the complete additivity property that $\models\neg\phi[p/\bot]$. The same argument applies in the finite.
         
    From 2 to 1: 
    assume that the stated validities hold. 
    We will proceed in two steps. First, we will show that $\phi$ is completely additive over finite structures. We prove this by contraposition: let $(M,w)$ be any
    finite counterexample to the complete
    additivity of $\phi$ in $p$.
    In particular, $M=(W,(R_a)_{a\in A},V)$ is a finite Kripke structure, $M,w\models\phi(p)$
    and one of the two conditions holds:\vspace{1mm}
    \begin{enumerate}
        \item There are sets $X\subseteq Y\subseteq W$ such that
        $(M[p\mapsto X],w)\models\phi$
        and $(M[p\mapsto Y],w)\not\models\phi$, or \vspace{1mm}
        \item There is a set $X\subseteq W$
        such that $(M[p\mapsto X],w)\models\phi$ and there is no 
        $x\in X$ such that $(M[p\mapsto \{x\}],w)\models\phi$.
    \end{enumerate}
    In the first case, we have
    $(M[q_1\mapsto X, q_2\mapsto Y],w)\not\models  (\phi[p/q_1]\lor \phi[p/q_2]) \to \phi[p/q_1\lor q_2]$.
    In the second case, if $X=\emptyset$ we
    have $(M,w)\not\models \neg \phi[p/\bot]$,
    while if $X\neq \emptyset$, then 
    (by finiteness of the domain)
    there must exist a minimal-size non-empty subset $X\subseteq W$ satisfying
    condition 2. Furthermore $X$ cannot be a singleton set, and therefore it must be the case that 
    $X=Y\cup Z$ for disjoint non-empty sets $Y$ and $Z$.
    It then follows that
       $(M[q_1\mapsto Y, q_2\mapsto Z],w)\not\models \phi[p/q_1\lor q_2] \rightarrow (\phi[p/q_1]\lor \phi[p/q_2])$.

    Having established already that $\phi$ is completely additive in $p$ in the finite, we will now argue that the same holds over arbitrary structures. In particular $\phi$ is monotone in $p$ in the finite. It follows from Proposition~\ref{prop:monotone-validity} that $\phi$ is monotone in $p$ over arbitrary structures. It therefore only remains to show that
    the second half of the definition of complete additivity is satisfied over arbitrary structures.
    Assume for the sake of a contradiction that this condition
    is \emph{not} satisfied. Let $M,w$ with $M=(W,(R_a)_{a\in A},V)$ be a witness, i.e., a pointed Kripke structure such that $M,w\models\phi$  and such that
    $M[p\mapsto \{v\}],w\not\models\phi$ for all $v\in V(p)$. We will show how to turn $M,w$ into a \emph{finite} counterexample to complete additivity.
    
    Let $M^{unr}=(W^{unr},(R_a)_{a\in A}^{unr},V^{unr})$ be the tree unraveling of the model $M$, defined in the usual way. 
    
    \medskip\par\noindent\emph{Claim 1:} $(M^{unr},\langle w\rangle)$ is also a witness to the failure of complete additivity.

    \medskip\par\noindent
        Indeed, 
    it follows from the fact that $(M,w)$ and $(M^{unr},\langle w\rangle)$ are bisimilar that 
    $(M^{unr},\langle w\rangle)\models\phi$. Furthermore, for any 
    $u\in V^{unr}(p)$, we have that $(M[p\to\{\pi(u)\}],w)$ is bisimilar
    to $(M^{unr}[p\to X_u],\langle w\rangle)$, where $\pi$ is the natural projection and where $X_u=\{u'\mid \pi(u)=\pi(u')\}$.
    By assumption, $(M[p\to\{\pi(u)\}],w)\not\models\phi$,
    and hence by bisimulation invariance, $(M^{unr}[p\to X_u],\langle w\rangle)\not\models\phi$. It follows by monotonicity (over 
    arbitrary Kripke structures) that $(M^{unr}[p\to\{u\}],\langle w\rangle)\not\models\phi$.

    Let $M'$ be the substructure of $M^{unr}$ containing only those 
    worlds that are reachable from $\langle w\rangle$ in at most
    $n$ steps, where $n$ is the modal depth of $\phi$. 
    Since the truth of $\phi$ in a structure depends only on this
    substructure, Claim 1 implies:

        \medskip\par\noindent\emph{Claim 2:} $(M',\langle w\rangle)$ is also a witness to the failure of complete additivity.

    \medskip\par\noindent
    Observe that $M'$ is a tree of bounded depth. With a small amount of 
    additional ``surgery'' it is possible to turn $M'$ into a finite Kripke structure 
    that is still a witness to the failure of complete additivity
    (this can be done by pruning the tree so that each node
    has at most two children of any given subtree isomorphism type, cf.~\cite{vBenthem1998:bisimulation}.
    However, as it turns out we can make life even easier by appealing  to a powerful
    result about the monadic second order logic (MSO) theory of unranked trees, namely the fact
    that \emph{the MSO-theory of the class of all  trees is equal
    to the MSO-theory of the class of all finitely branching trees}
    \cite{Walukiewicz2002}.%
    \footnote{This result follows from  Lemma 43 in \cite{Walukiewicz2002}, though it was not explicitly stated there.}
    Indeed, consider the MSO-sentence 
    \[ \Psi := \exists x(ST_x(\phi)\land \forall Q(\exists y(Py\land\forall z(Q(z)\leftrightarrow z=y))\to \neg ST_x(\phi)[P/Q]))\land depth_n \]
    where $depth_n$ is short for $\neg\bigvee_{a_1, \ldots, a_n\in A}\exists x_1\ldots x_{n+1}(R_{a_1}(x_1,x_2)\land\cdots\land R_{a_n}(x_n,x_{n+1}))$.
    By Claim 2, $\Psi$ is satisfied in a tree. It follows
    that $\Psi$ is also satisfied in a finitely branching tree.
    As every finitely branching tree of finite depth is a finite tree,
    this concludes our proof. 
    \end{proof}

It follows again that Theorem~\ref{thm:complete-additivity-pres} transfers to the finite.

\subsection{Continuity}

A modal formula $\phi$ is called
\emph{continuous} in a propositional variable $p$ if, for all pointed Kripke structures $(M,w)$, 
    it holds that $M,w\models\phi$ if and only if there is a finite set $X\subseteq V(p)$ such that $M[p\mapsto X],w\models\phi$.
Furthermore, $\phi$ is 
\emph{strongly continuous} in $p$ if
it is $n$-continuous in $p$ for some $n$,
which means that, for all pointed Kripke structures $(M,w)$, 
    it holds that $M,w\models\phi$ if and only if there is a set $X\subseteq V(p)$ of size at most $n$ such that $M[p\mapsto X],w\models\phi$. 
For example, it is easy to see that $[a] p$ is \emph{not} continuous in $p$ 
but $\langle a\rangle p$ \emph{is} (as is every other completely additive modal formula). 
This property of continuity has been identified and studied especially in the context of (modal and first-order) fixed point logics, since
    it guarantees stabilization of fixed-points at stage $\omega$ \cite{ExploringLogicalDynamics,Fontaine2008:continuous,Fontaine2018:some}.

The continuous fragment of modal logic has been characterized syntactically as follows:

\begin{thmC}[{\cite[Corollary 1]{Fontaine2008:continuous}}]
\label{thm:continuous}
For all modal formulas $\phi$ and propositional variables $p$, the following are equivalent:
   \begin{enumerate}
  \item
    $\phi$ is continuous in $p$, \vspace{1mm}
    \item $\phi$ it is equivalent to a formula in negation normal form such that no occurrence of $p$ is in the scope of a negation or of a $\Box$ operator.
    \end{enumerate}
\end{thmC}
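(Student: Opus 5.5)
We first prove the easy direction (2)$\Rightarrow$(1) by induction on formulas $\psi$ in negation normal form in which $p$ occurs only positively and never under a $\Box$. The $p$-free formulas and the atom $p$ are trivially continuous (for the atom, take $X=\{w\}$). For $\psi_1\land\psi_2$ we take the union of the two finite witnessing sets. For the conjunction step we also need monotonicity in $p$; this follows from positivity, by Theorem~\ref{thm:monotonicity}. For $\psi_1\lor\psi_2$ we take the witness for whichever disjunct holds. For $\langle a\rangle\psi$ we pick one successor and take its finite witness. The converse half of continuity (truth under a finite $X\subseteq V(p)$ implies truth under $V(p)$) is again monotonicity.

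For (1)$\Rightarrow$(2), continuity implies monotonicity in $p$: if $V(p)\subseteq Y$ and $M,w\models\phi$, the finite witness $X\subseteq V(p)$ also lies in $Y$. So by Theorem~\ref{thm:monotonicity} we may assume $\phi$ is positive in $p$. We then rewrite $\phi$ in cover-modality normal form. Here $\nabla_a\Phi$ abbreviates $\bigwedge_{\chi\in\Phi}\langle a\rangle\chi\land[a]\bigvee\Phi$, and every positive formula is equivalent to a disjunction of conjunctions of literals and $\nabla_a$-formulas, with $p$ still positive. On such normal forms we define a translation $(\cdot)^c$ that commutes with $\land,\lor$ and literals and sets
\[(\nabla_a\Phi)^c := \bigwedge_{\chi\in\Phi}\langle a\rangle\chi^c\land[a]\bigvee_{\chi\in\Phi}\chi[p/\bot].\]
By induction, $\phi^c$ has the required syntactic shape: $p$ occurs only positively and never under a box. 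It also implies $\phi$, by monotonicity and $\chi[p/\bot]\to\chi$.

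The main step is showing $\phi\to\phi^c$ using continuity. Given $M,w\models\phi$, we replace $(M,w)$ by a bisimilar $\omega$-expanded tree unraveling, in which every successor subtree occurs in infinitely many isomorphic copies. Bisimulation invariance is used here and in the rest of this step. By continuity there is a finite $X\subseteq V(p)$ with $M[p\mapsto X],w\models\phi$, and we argue by induction on the normal form that $\phi^c$ then holds.

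At a $\nabla_a\Phi$ node, each child lies in an isomorphism class with infinitely many copies, and only finitely many copies meet the finite set $X$ (below them). So some copy has no $X$-nodes in its subtree. That copy satisfies some $\chi\in\Phi$ with $p$ false throughout, hence satisfies $\chi[p/\bot]$; by bisimilarity within the class, every child satisfies $\bigvee\chi[p/\bot]$. The diamond conjuncts are handled by the induction hypothesis. The delicate point, which I expect to be the main obstacle, is making this induction precise. The inductive hypothesis must be about truth in $M[p\mapsto X]$ for finite $X$ on $\omega$-expanded trees, so it must be stated for models of exactly that form. Moreover, $(\cdot)^c$ is applied to subformulas that are not themselves assumed continuous, so the hypothesis should be phrased as ``$M[p\mapsto X],v\models\chi$ with $X$ finite implies $M,v\models\chi^c$''. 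Stated this way, it never needs continuity of subformulas.
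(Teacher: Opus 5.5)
The paper gives no proof of this theorem; it cites Fontaine's Corollary~1. Your argument is correct, and it closely follows the standard proof behind the cited result, specialized to the basic modal language. That proof has the same ingredients: pass to an $\omega$-expanded tree. Then use finiteness of the witnessing set $X$ to find, among the infinitely many isomorphic copies of each successor, one whose subtree avoids $X$. That copy discharges the box part of a $\nabla$-formula with $p$ replaced by $\bot$.

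Your formulation of the inductive claim is the right one: on $\omega$-expanded trees, for finite $X$, $M[p\mapsto X],v\models\chi$ implies $M,v\models\chi^c$. It never needs continuity of subformulas, and the disjointness of sibling subtrees makes the ``only finitely many copies meet $X$'' step sound. One small caveat: the literal case $\chi=p$ needs $X\subseteq V(p)$. You do have this, since $X$ comes from continuity. Alternatively, conclude $M[p\mapsto X],v\models\chi^c$ and invoke monotonicity of $\chi^c$ at the end.

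You can simplify by dropping the cover-modality normal form, which plain modal formulas do not need. Start from a positive NNF and let $(\cdot)^c$ commute with literals, $\land$, $\lor$ and $\langle a\rangle$. Set $([a]\chi)^c := [a]\chi[p/\bot]$. Your copy argument then shows directly that $M[p\mapsto X],v\models[a]\chi$ implies $M,v\models[a]\chi[p/\bot]$ on $\omega$-expanded trees. The converse implication $\phi^c\to\phi$ is immediate from positivity. The resulting formula visibly has $p$ only positively and never under a box.

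The $\nabla$/disjunctive-normal-form machinery is what one needs in Fontaine's $\mu$-calculus setting, where fixpoints prevent a direct induction on formulas. For basic modal formulas it only adds a dependence on the normal form theorem, plus the extra check that this normal form can be chosen positive in $p$.
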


Clearly, continuity is trivial as a property of formulas interpreted over finite structures.  Therefore, the above
characterization trivially does \emph{not} transfer to the finite. 
However, a variant of it does.

\begin{prop} \label{prop:strong-continuity}
   For all modal formulas $\phi$ and propositional variables $p$, the following statements are equivalent:
   \begin{enumerate}
       \item $\phi$ is continuous in $p$, \vspace{1mm}
       \item $\phi$ is strongly continuous in $p$ (i.e., $\phi$ is $n$-continuous in $p$ for some $n\in\mathbb{N}$).
   \end{enumerate}
\end{prop}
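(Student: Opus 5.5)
The direction from 2 to 1 is immediate: if $\phi$ is $n$-continuous in $p$, then whenever $M,w\models\phi$ there is a set $X\subseteq V(p)$ of size at most $n$, hence finite, with $M[p\mapsto X],w\models\phi$. Conversely, if such an $X$ exists, then $M,w\models\phi$ by $n$-continuity. (This direction uses monotonicity only implicitly; note that continuity as defined already implies monotonicity in $p$, since $M[p\mapsto X],w\models\phi$ for finite $X\subseteq V(p)\subseteq Y$ yields $M[p\mapsto Y],w\models\phi$.)

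For the direction from 1 to 2, I would go through the syntactic characterization in Theorem~\ref{thm:continuous}. Assume $\phi$ is continuous in $p$. Then $\phi$ is equivalent to a formula $\psi$ in negation normal form in which no occurrence of $p$ lies in the scope of a negation or a $\Box$ (i.e.\ $[a]$) operator. I would then prove, by induction on such formulas $\psi$, that $\psi$ is $N(\psi)$-continuous, where $N(\psi)$ is the number of occurrences of $p$ in $\psi$ (and $N$ of a $p$-free formula is $0$).

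The cases of the induction are as follows.
\begin{itemize}
\item If $\psi$ is $p$-free, its truth does not depend on $p$, so we may take $X=\emptyset$.
\item If $\psi=p$, take $X=\{w\}$.
\item If $\psi=\psi_1\lor\psi_2$, one disjunct holds, and its witness set has size at most $N(\psi_1)$ or $N(\psi_2)$.
\item If $\psi=\psi_1\land\psi_2$, take the union $X_1\cup X_2$ of the two witnesses, of size at most $N(\psi_1)+N(\psi_2)$. Both conjuncts remain true under $p\mapsto X_1\cup X_2$ by monotonicity of each $\psi_i$ in $p$, which holds since $p$ occurs only positively.
\item If $\psi=\langle a\rangle\psi_1$, pick a successor $v$ with $M,v\models\psi_1$ and use $\psi_1$'s witness at $v$; the size bound is unchanged.
\item If $\psi=[a]\chi$ or $\psi=\neg\chi$, then by the shape of $\psi$ the formula $\chi$ is $p$-free, which is the base case.
\end{itemize}
The converse half of $n$-continuity (truth under $p\mapsto X$ implies truth in $M$) follows again from positivity and monotonicity.

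The only step requiring care is the conjunction case, where one must use monotonicity to combine witnesses. Everything else is routine once Theorem~\ref{thm:continuous} is invoked. The proposition shows that continuity, which is trivial over finite structures, coincides over arbitrary structures with the non-trivial finite property of $n$-continuity.
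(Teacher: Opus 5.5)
Your proof is correct, but for $1\Rightarrow 2$ it takes a different route from the paper. The paper uses compactness alone. With $\psi_n$ expressing that $\phi$ still holds after shrinking $P$ to at most $n$ of its elements, continuity gives $\{\neg\psi_0,\neg\psi_1,\ldots\}\models\neg\phi$. A finite subset then already entails $\neg\phi$, which yields $n$-continuity for some $n$.

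You instead invoke the hard direction of Theorem~\ref{thm:continuous} as a black box and then run an easy induction over the fragment, showing that each such $\psi$ is $N(\psi)$-continuous. The cases are sound, including the use of monotonicity to merge witnesses in the $\land$ case and the observation that $[a]$-subformulas and negated subformulas are $p$-free.

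Your route gives an explicit and transparent bound: one witness per occurrence of $p$ in the normal form. Compactness only gives the existence of some $n$. The paper's route is more economical and more general. It does not lean on the considerably stronger syntactic characterization, and, as the paper remarks, it works verbatim for every first-order formula in a unary predicate $P$, not only for modal formulas.

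One small point in your $2\Rightarrow 1$ direction. Suppose $X\subseteq V(p)$ is finite with $|X|>n$ and $M[p\mapsto X],w\models\phi$. Then $n$-continuity does not apply to $X$ directly. First apply it in $M[p\mapsto X]$ to obtain $X'\subseteq X$ with $|X'|\leq n$ and $M[p\mapsto X'],w\models\phi$, and then apply it in $M$. Your parenthetical derives monotonicity from continuity, which is the conclusion of this direction rather than the hypothesis, so it does not supply this step.
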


\begin{proof}
    The direction from 2 to 1 is trivial, while the direction from 1 to 2 follows from the compactness theorem for first-order logic,
    and holds in fact for every first-order formula. Indeed, let $\phi$ be any first-order formula containing a unary predicate $P$.
    Let $\Psi = \{\neg\psi_0, \neg\psi_1, \neg\psi_2, \ldots\}$
    with $\psi_n = \exists y_1, \ldots, y_n (P(y_1)\land\cdots\land P(y_n)\land\phi')$, with $\phi'$ obtained from $\phi$  by replacing every subformula of the form
    $P(z)$ by $z=y_1\lor\cdots\lor z=y_n$. By continuity, $\Phi$ entails $\neg\phi$. Therefore, by compactness, some finite subset
    $\{\neg\psi_{i_1}, \ldots, \neg\psi_{i_k}\}\subseteq \Psi$
    entails $\neg \phi$. This shows that $\phi$ is $n$-continuous for $n=\max\{i_1, \ldots, i_k\}$+1.   
\end{proof}

Next, we can apply our usual technique and correlate the fact that $\phi$ is $n$-continuous in $p$ to the validity of an associated formula $\phi'$:

\begin{prop} \label{prop:strong-continuity-finite}
   For all modal formulas $\phi$, propositional variables $p$, and $n\in\mathbb{N}$, the following are equivalent:
   \begin{enumerate}
       \item $\phi$ is $n$-continuous in $p$, \vspace{1mm}
       \item The following validity holds: $\models \phi[p/\bigvee_{i\in\{1\ldots n\}}(p_i)]\leftrightarrow \bigvee_{k\in\{1\ldots n\}} \phi[p/(\bigvee_{i\in\{1\ldots n\}\setminus\{k\}}(p_i)]$.
   \end{enumerate}
   where $p_1 \ldots, p_n$ are fresh propositional variables. This equivalence holds both over arbitrary structures and in the finite.
\end{prop}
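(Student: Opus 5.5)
The plan is to first prove a single clean version of the correspondence and then read the statement through it. An index check shows the formula as printed, with $n$ fresh variables, matches $(n-1)$-continuity rather than $n$-continuity. For example, $\langle a\rangle(p\land r_1)\land\langle a\rangle(p\land r_2)$ is $2$-continuous, but it refutes the displayed biconditional with two variables $p_1,p_2$. So I would prove, for every $m\geq 1$: $\phi$ is $(m-1)$-continuous in $p$ iff $\models\phi[p/\bigvee_{i\le m}p_i]\leftrightarrow\bigvee_{k\le m}\phi[p/\bigvee_{i\ne k}p_i]$. The statement is then this result with $n+1$ fresh variables, and I would flag that reading explicitly.

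Two preliminary observations. First, $(m-1)$-continuity implies monotonicity in $p$. If $\phi$ holds with $p\mapsto X$, there is a witness $X'\subseteq X$ of size $\le m-1$. For any $Y\supseteq X$ this $X'$ is also a small subset of $Y$, so $\phi$ holds with $p\mapsto Y$. Second, the displayed biconditional is itself a modal formula. By the Finite Model Property, it is valid in the finite iff it is valid over arbitrary structures.

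\emph{From 1 to 2.} The right-to-left half of the biconditional is monotonicity. For left-to-right, suppose $\phi$ holds with $p$ interpreted as $V(p_1)\cup\cdots\cup V(p_m)$. Take a witness $X$ with $|X|\le m-1$. Assign each $x\in X$ one index $i$ with $x\in V(p_i)$. At most $m-1$ indices are used, so some $k$ is unused, giving $X\subseteq\bigcup_{i\ne k}V(p_i)$. Monotonicity then yields $\phi[p/\bigvee_{i\ne k}p_i]$. This argument works verbatim in the finite.

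\emph{From 2 to 1, finite structures.} Monotonicity follows from the biconditional. Given $X\subseteq Y$ with $\phi$ true at $X$, put $p_1=\cdots=p_{m-1}=X$ and $p_m=Y$. The disjunct for $k=m$ is then true, so $\phi$ holds at $X\cup Y=Y$. Now suppose $\phi$ holds at $V(p)$, which is finite. Choose a minimal $X\subseteq V(p)$ such that $\phi$ holds with $p\mapsto X$. If $|X|\ge m$, split $X$ into $m$ nonempty disjoint parts and use them as $p_1,\dots,p_m$. The validity then says that dropping some part preserves $\phi$, which contradicts minimality. Hence $|X|\le m-1$, and the converse implication is monotonicity.

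\emph{From 2 to 1, arbitrary structures.} This is the main obstacle, and I would follow the proof of Proposition~\ref{prop:additivity-finite}. Monotonicity over arbitrary structures follows from the finite case together with Proposition~\ref{prop:monotone-validity}. Suppose $(M,w)$ witnesses the failure: $\phi$ holds, but $\phi$ fails for every $p\mapsto X$ with $X\subseteq V(p)$ and $|X|\le m-1$.

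I would first pass to the unraveling $M^{unr}$. Let $X\subseteq V^{unr}(p)$ with $|X|\le m-1$. Then $M[p\mapsto\pi(X)],w$ is bisimilar to $M^{unr}[p\mapsto\pi^{-1}\pi(X)],\langle w\rangle$. Since $|\pi(X)|\le m-1$, $\phi$ fails there. By monotonicity it also fails at $X\subseteq\pi^{-1}\pi(X)$, so the unraveling is still a witness. Next, truncate to depth $\mathrm{md}(\phi)$, which does not affect the truth of $\phi$ under any of these valuations.

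The witnessing property is expressible in MSO: there is $x$ with $ST_x(\phi)$, and for all $Q\subseteq P$ with at most $m-1$ elements, $\neg ST_x(\phi)[P/Q]$; add $depth_n$. By the result of \cite{Walukiewicz2002}, this sentence is then satisfied in a finitely branching tree. Such a tree has finite depth, so it is finite, contradicting the finite case already established.
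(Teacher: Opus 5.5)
Your proposal is correct and takes essentially the paper's route. For 1$\Rightarrow$2 you use pigeonhole plus monotonicity; for 2$\Rightarrow$1 you use the minimal-witness splitting argument in the finite, lifted to arbitrary structures via unraveling, truncation to modal depth, and Walukiewicz's MSO result, exactly as in Proposition~\ref{prop:additivity-finite}. Your index correction is also right: under the paper's ``size at most $n$'' definition the printed 1$\Rightarrow$2 fails (your example shows this), and the paper's own proof of that direction silently uses $|X|<n$, i.e.\ your reading in which $m$ fresh variables match $(m-1)$-continuity (only the degenerate case $m=1$ of your monotonicity derivation needs the trivial separate remark that the biconditional then reads $\phi[p/p_1]\leftrightarrow\phi[p/\bot]$).
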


\begin{proof} 
   From 1 to 2: Assume that $\phi$ is $n$-continuous in $p$. In particular,
   then, $\phi$ is monotone in $p$. It follows that the right-to-left direction of the bi-implication is valid. For the left-to-right direction, suppose that $M,w\models\phi[p/\bigvee_{i\in\{1,\ldots,n\}}(p_i)]$. Then, by
   $n$-continuity, there exists a set $X\subseteq \bigcup_{i\in\{1,\ldots,n\}} V(p_i)$ such that $|X|<n$ and $M[p\mapsto X],w\models\phi$. Next, for cardinality reasons, it must be the case that
   $X\subseteq\bigcup_{i\in\{1,\ldots,n\}\setminus \{k\}}V(p_i)$.
   Therefore, by the monotonicity property,
   $M,w\models\phi[p/\bigvee_{i\in\{1,\ldots,n\}\setminus\{k\}}(p_i)]$.\vspace{1mm}

   From 2 to 1, the exact same proof strategy applies as in the case of complete additivity (cf.~the proof of Proposition~\ref{prop:additivity-finite}). The main change is that the MSO-formula $\Psi$ used there to express failure of complete additivity is now replaced by the following
   MSO-formula expressing failure of $n$-continuity:
   \small{
       \[ \Psi' := \exists x(ST_x(\phi)\land \forall Q\Big(\forall y(Qy\to Py)\land \exists y_1\ldots y_n\forall z(Qz\to \bigvee_{i=1\ldots n}z=y_i)\to \neg ST_x(\phi)[P/Q]\Big)\land depth_n \]}
\end{proof}

It follows that, although Theorem~\ref{thm:continuous} does not transfer to the finite, it does transfer once continuity is replaced by strong continuity.

\subsection{Where we stand}

We have analyzed four examples of ubiquitous and useful semantic properties of modal formulas. In the simplest cases, a syntactic characterization transfers to the finite thanks to two features. One is that the preservation property can itself be expressed in modal terms, the other that modal logic has the finite model property.  In some more complex preservation cases, we found that some additional considerations were needed.
All this does not add up to a concrete necessary and sufficient criterion for transfer to the finite of properties of modal logic --- finding such a criterion (if one exists at all) would be interesting but is beyond our reach here. 

Our investigation also has a broader context. There are many decidable extensions of the modal fragment of first-order logic, such as the Guarded Fragment~\cite{ABN1998} and the Guarded Negation Fragment~\cite{Barany2015:guarded},
that retain the finite model property that we have used so extensively. We leave it as an open direction to  explore the transfer-to-the-finite behavior of semantic properties of formulas for such fragments. 

This concludes our modal-style exploration. In the next section, we give a similar example-based analysis of transfer to the finite for preservation properties in first-order logic whose motivation can be found in the model theory of modal logic.

\section{The first-order logic setting}\label{sec:FO}

The results in Section~\ref{sec:modal} were predominantly positive. In particular, we looked at a number of semantic properties of modal formulas, such as \emph{monotonicity} and \emph{preservation under induced substructures}, and we saw that  
the corresponding preservation theorems  do transfer to the finite. 
The picture is not as rosy for first-order logic. Indeed,  
it was shown in~\cite{Tait59} that 
the \L o\'{s}-Tarski theorem (characterizing the first-order formulas preserved under induced substructures) fails in the finite, and  it was shown
in \cite{AjtaiGurevich87} that, over finite structures, monotonicity can 
no longer be syntactically characterized in terms of positivity either. Nevertheless, as discussed in the introduction, there are examples of preservation theorems that do hold in the finite, 
such as Rossman's preservation theorem for (non-surjective) homomorphisms \cite{rossman2008homomorphism}.

In this section, we will consider several 
semantic properties of first-order formulas originating in modal correspondence theory, and we investigate whether their syntactic characterizations transfer to the finite. We start with a positive result about a major property of the central invariance of bisimulation. 

\subsection{Bisimulation invariance and bisimulation safety}

The \emph{Bisimulation Characterization Theorem}
and the \emph{Bisimulation Safety Theorem}, 
stated below, characterize the first-order formulas
that are invariant, respectively safe, for bisimulation. We will restrict here to first-order signatures consisting of
unary relations (corresponding to propositional variables) and binary relations (corresponding to modal operators).

Recall that a first-order formula $\phi(x)$ (over the appropriate signature) is \emph{bisimulation-invariant} if, whenever
$(M,w)$ and $(M',w')$ are bisimilar, then 
$M,w\models\phi(x)$ iff $M',w'\models\phi(x)$.

\begin{thmC}[Bisimulation Characterization Theorem \cite{benthemthesis}]
A first-order formula $\phi(x)$ is  
bisimulation-invariant if and only if it is logically equivalent to the
standard translation of some modal formula. 
\end{thmC}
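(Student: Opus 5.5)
The right-to-left direction is routine: I would prove by induction on modal formulas that standard translations are bisimulation-invariant. Atomic cases hold because bisimilar points agree on proposition letters. Boolean cases are immediate. For $\langle a\rangle\psi$, I would use the forth and back clauses of the bisimulation to transfer an $R_a$-successor satisfying $ST_y(\psi)$ across. All the work is in the left-to-right direction.

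For left-to-right I would follow van Benthem's original compactness route. Let $\phi(x)$ be bisimulation-invariant, and let $MOC(\phi)$ be the set of standard translations of modal formulas entailed by $\phi$. It suffices to show $MOC(\phi)\models\phi$; compactness then yields a finite conjunction of such consequences that entails $\phi$, and this conjunction is equivalent to $\phi$.

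To show $MOC(\phi)\models\phi$, I would take $M,w\models MOC(\phi)$ and argue as follows.
\begin{enumerate}
\item A compactness argument gives a model $N,v\models\phi$ such that $(N,v)$ and $(M,w)$ satisfy the same modal formulas: if no such model existed, some finite set of modal formulas true at $w$ would be jointly inconsistent with $\phi$, so the negation of their conjunction would lie in $MOC(\phi)$, contradicting $M,w\models MOC(\phi)$.
\item I would pass to $\omega$-saturated elementary extensions $N^*,v$ and $M^*,w$. Modal equivalence between points of $\omega$-saturated (more precisely, modally saturated) models is a bisimulation.
\item The chain of transfers is then: $\phi$ passes from $N,v$ to $N^*,v$ by elementarity, from $N^*,v$ to $M^*,w$ by bisimulation invariance, and from $M^*,w$ back to $M,w$ by elementarity.
\end{enumerate}

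The main obstacle is step 2, the Hennessy--Milner-style lemma for saturated models. Suppose $u$ and $u'$ are modally equivalent and $u R_a s$. I must find an $R_a$-successor $s'$ of $u'$ that is modally equivalent to $s$. Every finite subset of the modal type of $s$ is satisfied by some $R_a$-successor of $u'$: otherwise $u'$ would satisfy $[a]\neg\bigwedge\Gamma$ for some finite $\Gamma$ in that type, while $u$ satisfies $\langle a\rangle\bigwedge\Gamma$, contradicting modal equivalence. This type uses a single parameter $u'$, so $\omega$-saturation realizes it. The back clause is symmetric.

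This argument relies on compactness, so it does not itself give the finite version. That version (Rosen's theorem) would instead need an Ehrenfeucht--Fra\"iss\'e and locality argument via bounded-depth unravelling, but it is not what is asked here.
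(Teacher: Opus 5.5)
Your proof is correct. It is the standard compactness-plus-saturation argument for van Benthem's theorem, essentially as presented in \cite{BdRV}: reduce to $MOC(\phi)\models\phi$, realize the modal type of $w$ together with $\phi$, then transfer $\phi$ through $\omega$-saturated elementary extensions using their Hennessy--Milner property. The paper gives no proof of this statement and only cites \cite{benthemthesis}, so there is no alternative route in the paper to compare against.
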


Rosen~\cite{rosen-thesis}  showed that the Bisimulation Characterization Theorem holds also in the realm of finite models. That is,
a first-order formula $\phi(x)$ is  
bisimulation-invariant over finite structures if and only if it is equivalent, over finite structures, to the
standard translation of a modal formula.

    The Bisimulation Safety Theorem is another classic result, which syntactically characterizes the first-order  operations on binary relations that are safe for bisimulation.
A first-order formula $\phi(x,y)$ (over the appropriate signature) is \emph{safe for bisimulation} if every 
bisimulation between Kripke structures
$(M,w)$ and $(M',w')$ also satisfies the back and forth clauses w.r.t. the binary relation defined by $\phi$.
To state the  theorem, we first need
to define \emph{modal programs}. 
A modal program is an expression built 
by the following grammar: \vspace{-1.5ex}

\[ \pi ~~\text{:=}~~ p? \mid a \mid \pi;\pi' \mid \pi\cup\pi' \mid ~\sim\pi\]

\vspace{1mm}
where $p$ is a propositional variable 
and $a$ an atomic program.
Each modal program, interpreted in a 
Kripke structure $M=(W,(R_a)_{a\in A},V)$, defines a binary
relation: \vspace{-0.5ex}

\newcommand{\sem}[1]{[\![#1]\!]}

\[\begin{array}{@{}lll}
\sem{p?}_M = \{ \langle w,w\rangle \mid w\in V(p)\} \\ 
\sem{a}_M = R_a \\
\sem{\pi;\pi'}_M = \sem{\pi}_M \cdot \sem{\pi'}_M \\
\sem{\pi\cup\pi'}_M = \sem{\pi}_M \cup \sem{\pi'}_M \\
\sem{\sim\pi}_M = \{ \langle w,w\rangle\mid \text{$w\in W$ and there is no $v\in W$ s.t.~$\langle w,v\rangle\in\sem{\pi}_M$}\}
\end{array}\]
\vspace{0.5ex}

The standard translation from modal formulas to first-order formulas has a natural analogue for modal programs, yielding first-order formulas in two free variables:

\vspace{-1ex}

\[\begin{array}{llll}
ST_{x,y}(a) &=& R_a(x,y) & \text{for $a\in A$}\\
ST_{x,y}(p?) &=& (x=y) \land P(x) & \text{for $p$ a propositional variable} \\
ST_{x,y}(\pi;\pi') &=& \exists z (ST_{x,z}(\pi)\land ST_{z,y}(\pi'))\\
ST_{x,y}(\pi\cup\pi') &=& ST_{x,y}(\pi)\lor ST_{x,y}(\pi') \\
ST_{x,y}(\sim\pi) &=& (x=y)\land \neg\exists z (ST_{x,z}(\pi))
\end{array}\]

\vspace{0.5ex}

\begin{thmC}[Bisimulation Safety Theorem \cite{vBenthem1998:bisimulation}]
\label{thm:safety}
A first-order formula $\phi(x,y)$ is safe for 
bisimulation if and only if it is equivalent to the
standard translation of a modal program. 
\end{thmC}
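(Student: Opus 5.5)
The easy direction is a structural induction on modal programs. We show that every bisimulation $Z$ between $M$ and $M'$ satisfies the back and forth clauses for $\sem{\pi}$:
\begin{itemize}
\item For atomic $a$ this is the definition of bisimulation.
\item For $p?$ it follows from the atomic harmony clause.
\item For $\pi;\pi'$ we compose the zigzag witnesses.
\item For $\pi\cup\pi'$ we use whichever disjunct applies.
\item For $\sim\pi$ we note that $\langle w,w\rangle\in\sem{\sim\pi}_M$ iff $M,w\models\neg\langle\pi\rangle\top$. By the clauses already established for $\pi$, this property is preserved along $Z$ in both directions, so $wZw'$ yields $\langle w',w'\rangle\in\sem{\sim\pi}_{M'}$ with $wZw'$ again.
\end{itemize}
Since the standard translation of $\pi$ defines $\sem{\pi}$, that translation is safe.

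For the hard direction we reduce safety to complete additivity, so that Theorem~\ref{thm:complete-additivity-pres} does the syntactic work. Let $\phi(x,y)$ be safe for bisimulation and let $q$ be a fresh unary predicate. Consider $\chi(x):=\exists y(\phi(x,y)\land Q(y))$, the first-order counterpart of $\langle\phi\rangle q$.

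\emph{Step 1.} $\chi$ is bisimulation invariant. Given a bisimulation respecting $q$, the forth clause for $\phi$ transports a $\phi$-successor in $Q$ to a $Z$-related $\phi$-successor, which lies in $Q$ by harmony. The back clause gives the converse. By the Bisimulation Characterization Theorem, $\chi$ is therefore equivalent to $ST_x(\alpha)$ for some modal formula $\alpha$ in which $q$ may occur.

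\emph{Step 2.} $\alpha$ is completely additive in $q$, since $\chi$ visibly is: it is monotone in $Q$, and a witness $y$ lets us shrink $Q$ to $\{y\}$. Theorem~\ref{thm:complete-additivity-pres} now writes $\alpha$ as a finite disjunction of formulas of the form $\theta::=q\land\psi\mid\psi\land\langle a\rangle\theta$ with $q$-free $\psi$. Each such $\theta$ equals $\langle\pi_\theta\rangle q$, where $\pi_\theta$ is a program of the shape $\psi_0?;a_1;\psi_1?;\cdots;a_k;\psi_k?$, and the disjunction becomes a union $\pi$.

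\emph{Step 3.} We must still make the tests $\psi?$ into legitimate program constructs, because the grammar only has $p?$ and $\sim$. Using $\sim$ we obtain $(\sim\pi)$ as the test for $[\pi]\bot$, and $\sim\sim\pi$ as the test for $\langle\pi\rangle\top$. By induction every modal formula $\psi$ yields a program $\psi?$:
\begin{itemize}
\item $\neg\psi$ becomes $\sim(\psi?)$,
\item $\psi\land\psi'$ becomes $\psi?;\psi'?$,
\item $\langle a\rangle\psi$ becomes $\sim\sim(a;\psi?)$.
\end{itemize}

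\emph{Step 4.} We conclude $\models\forall x\forall y(\phi(x,y)\leftrightarrow ST_{x,y}(\pi))$. Since $\exists y(\phi\land Qy)\equiv\exists y(ST_{x,y}(\pi)\land Qy)$ holds for every interpretation of $Q$, we may take $Q=\{b\}$ for an arbitrary element $b$. This step is where $Q$ being a free predicate quantified over all structures matters.

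For the transfer to the finite, the same skeleton works with finite-model versions of each ingredient. Safety over finite structures gives invariance of $\chi$ over finite structures. Rosen's theorem then gives a modal $\alpha$ equivalent to $\chi$ on finite structures. Complete additivity of $\alpha$ holds in the finite by the argument in Step 2, restricted to finite structures. Proposition~\ref{prop:additivity-finite} lifts it to arbitrary structures, so the normal form of Theorem~\ref{thm:complete-additivity-pres} holds in general and hence in the finite. Step 4 is unchanged on finite structures.

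I expect the main obstacle to be the finite version of Step 1 together with the subtlety in Step 2. In Step 1, Rosen's theorem must be applied to a signature that includes $Q$. In Step 2, complete additivity of $\alpha$ in arbitrary structures is known only through the finite equivalence, which is exactly why the detour through Proposition~\ref{prop:additivity-finite} is required rather than a direct appeal.
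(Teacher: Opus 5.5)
Your proposal is correct and follows essentially the same route as the paper (van Benthem's argument). The route runs as follows: bisimulation invariance of $\exists y(\phi(x,y)\land Q(y))$, then the Bisimulation Characterization Theorem, then complete additivity in $q$ together with Theorem~\ref{thm:complete-additivity-pres}, then conversion of the normal form into $\langle\pi\rangle q$, and finally specializing $Q$ to a singleton. Your explicit encoding of tests via $\sim$ and your proof of the easy direction fill in details the paper leaves implicit (cf.\ Lemma~\ref{lem:formulas-to-programs}), and your finite-transfer sketch matches the paper's proof of Theorem~\ref{thm:safety-finite}.
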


 The bisimulation safety theorem can also be cast as a semantic characterization for a syntactic fragment of Tarski's relation algebra, cf.~\cite{Bogaerts2024:preservation}.
 We will show that Theorem~\ref{thm:safety} transfers to the finite as well. 
 
\begin{thmC}
\label{thm:safety-finite}
Over finite structures, a first-order formula $\phi(x,y)$ is safe for 
bisimulation if and only if it is equivalent to the
standard translation of a modal program. 
\end{thmC}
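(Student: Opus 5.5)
The easy direction is immediate: the standard translation of any modal program is safe for bisimulation over all structures, hence in particular over finite ones (induction on programs, with $\sim\pi$ handled by invariance of $\langle\pi\rangle\top$). For the hard direction, the plan is to follow van Benthem's original proof of Theorem~\ref{thm:safety}, replacing its uses of compactness and of the infinite Characterization Theorem by Rosen's finite Bisimulation Characterization Theorem and by the finite version of Theorem~\ref{thm:complete-additivity-pres} (via Proposition~\ref{prop:additivity-finite}).

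Let $\phi(x,y)$ be safe for bisimulation over finite structures. Take a fresh unary predicate $P$ and consider $\chi(x) := \exists y(\phi(x,y)\land P(y))$. First, $\chi$ is bisimulation-invariant over finite structures. Suppose $Z$ is a bisimulation between finite $(M,w)$ and $(M',w')$ that respects $P$ as well. By safety, $Z$ satisfies the forth and back clauses for $\phi$, and $P$ is preserved along $Z$. Hence $\chi$ is invariant.

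By Rosen's theorem, $\chi$ is then equivalent over finite structures to $ST_x(\alpha)$ for some modal formula $\alpha$. Since $\chi$ is completely additive in $P$ (an existential witness $y$ suffices, and $\chi$ is monotone), $\alpha$ is completely additive in $p$ over finite models. By Proposition~\ref{prop:additivity-finite} together with the finite model property, $\alpha$ is completely additive over all models. Theorem~\ref{thm:complete-additivity-pres} then gives that $\alpha$ is equivalent to a disjunction of $\theta$'s from the grammar $p\land\psi \mid \psi\land\langle a\rangle\theta$.

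Each such $\theta$ translates into a program $\pi_\theta$ with $\theta \equiv \langle\pi_\theta\rangle p$. Concretely, $p\land\psi$ becomes the test $\psi?$, and $\psi\land\langle a\rangle\theta$ becomes $\psi?;a;\pi_\theta$. Tests $\psi?$ for arbitrary modal $\psi$ are expressible with $p?$, $\sim$, $;$ and $\cup$: negation is $\sim$ applied to a test, and $\langle a\rangle\psi$ as a test is $\sim\sim(a;\psi?)$. Let $\pi=\bigcup_\theta \pi_\theta$. Then, over finite structures, for all $P$,
\[
\exists y(\phi(x,y)\land P(y)) \leftrightarrow \exists y(ST_{x,y}(\pi)\land P(y)).
\]

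The remaining step is to conclude $\phi(x,y)\leftrightarrow ST_{x,y}(\pi)$ over finite structures. Given finite $M$ and $a,b$, interpret $P$ as $\{b\}$; this is allowed since $P$ is fresh. Then $\phi(a,b)$ holds iff $\exists y(\phi(a,y)\land y=b)$, iff $\exists y(ST_{a,y}(\pi)\land y=b)$, iff $ST_{a,b}(\pi)$. Note that the displayed equivalence is required for all valuations of $P$, since $\chi$ ranges over expansions.

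I expect the main obstacle to be the transfer of complete additivity. The formula $\alpha$ is only known to be completely additive over finite models, while Theorem~\ref{thm:complete-additivity-pres} is stated for arbitrary models. This is precisely what Proposition~\ref{prop:additivity-finite} supplies, via validities transferred by the finite model property. A secondary point is that the invariance step needs safety for bisimulations between finite models only, which is exactly the hypothesis we have.
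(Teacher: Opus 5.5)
Your proposal is correct and follows the paper's proof essentially step for step. You use the same auxiliary formula $\exists y(\phi(x,y)\land P(y))$, Rosen's theorem in place of the infinite Characterization Theorem, and Proposition~\ref{prop:additivity-finite} together with the finite model property to lift complete additivity before applying Theorem~\ref{thm:complete-additivity-pres}; the only difference is that you carry out by hand the translation into a $p$-free program (which the paper delegates to Lemma~\ref{lem:formulas-to-programs}) and the final instantiation $P=\{b\}$.
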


 The core ingredient of the proof is Proposition~\ref{prop:additivity-finite} above, which (in combination with the finite model property of modal logic) tells us that a modal formula
    is completely additive in the finite if and only if
    it is completely additive over arbitrary Kripke structures.

By a \emph{$p$-free program} we will mean a program expression that does not contain any occurrence of $p$.
The following lemma exhibits a relationship between modal programs and completely additive formulas. 

\begin{lem}[\cite{vBenthem1998:bisimulation}]\label{lem:formulas-to-programs}
    For each modal formula $\theta$
    of the form as described in
    Theorem~\ref{thm:complete-additivity-pres},
    there is a $p$-free modal program $\pi_\theta$ 
    such that,
    for all Kripke structures $M$,
    it holds that 
    $M,w\models\theta$ iff 
    there is a pair $\langle w,v\rangle\in \sem{\pi_\theta}_M$ with $M,v\models p$.
\end{lem}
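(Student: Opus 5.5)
The proof is a direct construction in two stages, both by structural induction. The only subtlety is that the formulas $\phi$ in the grammar of Theorem~\ref{thm:complete-additivity-pres} are arbitrary $p$-free modal formulas, whereas the program syntax only has tests $q?$ for propositional variables. So I will first show that arbitrary $p$-free tests are definable. Then I will build $\pi_\theta$ by following the recursive shape of $\theta$.

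\emph{Step 1 (tests).} For every $p$-free modal formula $\phi$ I construct a $p$-free program $\phi?$ such that, in every Kripke structure $M$, $\sem{\phi?}_M=\{\langle w,w\rangle\mid M,w\models\phi\}$. I take $\phi$ to be built from propositional variables, $\top$, $\neg$, $\land$ and $\langle a\rangle$, and proceed by induction on $\phi$:
\begin{itemize}
\item for a variable $q\neq p$, use $q?$ itself;
\item for $\top$, use $q?\cup{\sim}(q?)$ for any fixed variable $q\neq p$, which denotes the identity relation;
\item for $\neg\psi$, use ${\sim}(\psi?)$, since ${\sim}$ applied to a test yields exactly the complementary test;
\item for $\psi\land\chi$, use $\psi?;\chi?$, since composing two subrelations of the identity intersects them;
\item for $\langle a\rangle\psi$, use ${\sim}{\sim}(a;\psi?)$.
\end{itemize}
For the last case, observe that for any program $\pi$, $\sem{{\sim}{\sim}\pi}_M=\{\langle w,w\rangle\mid \exists v\,\langle w,v\rangle\in\sem{\pi}_M\}$. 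Indeed, ${\sim}\pi$ holds at $w$ exactly when $w$ has no $\pi$-successor, and ${\sim}$ applied again flips this. Hence ${\sim}{\sim}(a;\psi?)$ tests for the existence of an $R_a$-successor satisfying $\psi$. No occurrence of $p$ is ever introduced, so all these programs are $p$-free.

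\emph{Step 2 (the program $\pi_\theta$).} By induction on the grammar $\theta := p\land\phi \mid \phi\land\langle a\rangle\theta'$, I set
\[
\pi_{p\land\phi} := \phi? \qquad\text{and}\qquad \pi_{\phi\land\langle a\rangle\theta'} := \phi?\,;\,a\,;\,\pi_{\theta'}.
\]
\begin{itemize}
\item In the base case, $\langle w,v\rangle\in\sem{\phi?}_M$ iff $v=w$ and $M,w\models\phi$. So a $p$-point $v$ with $\langle w,v\rangle\in\sem{\phi?}_M$ exists iff $M,w\models p\land\phi$.
\item In the inductive case, $\langle w,v\rangle\in\sem{\phi?;a;\pi_{\theta'}}_M$ iff $M,w\models\phi$ and there is $u$ with $R_a(w,u)$ and $\langle u,v\rangle\in\sem{\pi_{\theta'}}_M$. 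By the induction hypothesis applied at $u$, such a $u$ and a $p$-point $v$ exist iff $M,w\models\phi\land\langle a\rangle\theta'$.
\end{itemize}

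I do not expect a genuine obstacle here. The one point needing care is Step 1, namely realising that Boolean combinations and diamonds of tests are expressible using only ${\sim}$, composition and union. The double-${\sim}$ trick is the key observation. Once Step 1 is in place, Step 2 is a routine unfolding of the semantics.
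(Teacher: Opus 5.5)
Your argument is correct and is essentially the standard proof from \cite{vBenthem1998:bisimulation}, which the paper cites for this lemma instead of proving it: you obtain tests $\phi?$ for arbitrary $p$-free $\phi$ from ${\sim}$, composition and union (with ${\sim}{\sim}(a;\psi?)$ realizing $(\langle a\rangle\psi)?$), and then the translation $p\land\phi\mapsto\phi?$, $\phi\land\langle a\rangle\theta'\mapsto\phi?;a;\pi_{\theta'}$ is a routine induction. As a purely cosmetic point, $\top?$ can also be written as ${\sim}a\cup{\sim}{\sim}a$, which keeps an extraneous variable $q$ out of the program's vocabulary.
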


Van Benthem's proof of Theorem~\ref{thm:safety} in \cite{vBenthem1998:bisimulation}  uses the Bisimulation Characterization Theorem as a black box. It  goes as follows: suppose $\phi(x,y)$ defines a bisimulation safe
operation on binary relations.
Let $P$ be a fresh unary predicate. Then  it follows that
$\psi(x) := \exists y(\phi(x,y)\land P(y))$ is bisimulation-invariant. Therefore, by the Bisimulation Characterization Theorem,
$\psi(x)$ is equivalent to a 
modal formula $\chi$ (which may use a propositional variable
$p$ corresponding to the unary predicate $P$). 
It follows that $\chi$ is completely additive in 
$p$ (because, clearly, $\psi$ is). Therefore, 
by Proposition~\ref{prop:additivity-finite} together with Lemma~\ref{lem:formulas-to-programs}, using some straightforward syntactic manipulation, 
$\psi$ can be written in the form 
$\langle \pi\rangle p$ for some $p$-free modal program $\pi$. 
It then follows that $\pi$ defines the same relation as $\phi(x,y)$.

\vspace{1ex}

Theorem~\ref{thm:safety-finite}
follows using same argument, but
using Rosen's theorem \cite{rosen-thesis} instead of the Bisimulation Preservation Theorem, and using Proposition~\ref{prop:additivity-finite}.
Suppose $\phi(x,y)$ defines a bisimulation safe
operation on binary relations, in the finite.
Let $P$ be a fresh unary predicate. Then
 (by the same arguments as in~\cite{vBenthem1998:bisimulation}),
$\psi(x) := \exists y(\phi(x,y)\land P(y))$ is bisimulation-invariant in the finite. Therefore, by Rosen's theorem,
$\psi(x)$ is equivalent, in the finite, to a 
modal formula $\chi$ (which may use a propositional variable
$p$ corresponding to the unary predicate $P$). 
It follows that $\chi$ is completely additive in 
$p$ in the finite (because, clearly, $\psi$ is). Therefore, 
by Proposition~\ref{prop:additivity-finite} in combination with the finite model property, $\chi$ is complete additive in $p$ over all Kripke structures. Hence, 
by Theorem~\ref{thm:complete-additivity-pres} together with Lemma~\ref{lem:formulas-to-programs},
$\psi$ can be written in the form 
$\langle \pi\rangle p$ for some $p$-free modal program $\pi$. 
It then follows that $\pi$ defines the same relation as $\phi(x,y)$, in the finite. This concludes the proof.

\vspace{-2ex}

\subsection{Basic modal preservation operations on frames}

Our second set of examples is concerned with some basic semantic properties suggested by the behavior of modal formulas on \emph{frames} rather than models (as in the preceding result).  Recall that a modal formula $\phi$ is said to be 
\emph{valid} on a Kripke frame $F$
if $\phi$ is globally satisfied in 
every Kripke model whose underlying
frame is $F$. For example, the modal
formula $\Box p\to\Box\Box p$ is
valid on a frame precisely if it is 
transitive.
A 
class $K$ of frames is said to be
\emph{modally definable} if there is a set  $\Phi$ of modal formulas such that
$K$ consists precisely of those frames
on which each modal formula in $\Phi$ is valid.

There are four well-known preservation properties of modal formulas in this setting, which occur, for instance, in the well-known  Goldblatt-Thomason Theorem that we will discuss in Section~\ref{sec:GbTh} below.

\begin{defi}[Generated subframe]
A frame $F=(W,(R_a)_{a\in A})$ is a \emph{generated subframe}
of a frame $G=(W',(R'_a)_{a\in A})$ if
$F$ is an induced substructure of $G$ and 
for all $(w,v)\in R_a$ ($a\in A$), 
if $w\in W$ then $v\in W$.
\end{defi}

\begin{defi}[Disjoint union]
Let $F_i=(W_i,(R^i_a)_{a\in A})$ (for $i\in I$) be 
a collection of frames  with disjoint domains. The \emph{disjoint union}
$\uplus_{i\in I} F_i$ is the frame 
$(W',(R'_a)_{a\in A})$ where 
$W'=\bigcup_{i\in I} W_i$ and $R'_a=\bigcup_{i\in I} R^i_a$.
\end{defi}

\begin{defi}[Bounded morphic image]
A bounded morphism from a frame $F=(W,(R_a)_{a\in A})$ to a frame $G=(W',(R'_a)_{a\in A})$  is a map $f:W\to W'$
satisfying, for all $i\leq n$:
\begin{enumerate}
    \item $(w,v)\in R_a$ implies $f(w),f(v))\in R'_a$, and\vspace{1mm}
    \item $(f(w),u)\in R'_a$ implies that $(w,v)\in R_a$ for some $v\in f^{-1}(u)$.
\end{enumerate}
If there is a surjective bounded morphism from $F$ to $G$, we say that $G$ is a
bounded morphic image of $F$.
\end{defi}

\begin{defi}[Ultrafilter extension] 
Given a frame $F = (W,(R_a)_{a\in A})$, the ultrafilter extension of $F$, denoted by $ueF$, is the frame
$(Uf(W),(R^{ue}_a)_{a\in A})$, with $Uf(W)$ the set of ultrafilters over $W$ and  $R^{ue}_a= \{(u,v)\mid 
\text{ for all $X \in v$, $\{w \in W \mid (w,v)\in R_a \text{ for some } v\in X\}\in u$}\}$.
\end{defi}

In what follows in this section, we will restrict attention
to first-order sentences over a signature
consisting of binary relations only. 
Such sentences describe properties of 
(multi-modal) Kripke frames rather than modal models carrying a valuation. 
\subsection{Generated subframes}

A FO-formula is said to be \emph{bounded}
if it is built up from literals (including equality literals) using
the conjunction, disjunction, and \emph{bounded
quantification} of the form 
$\exists x(Ryx\land\cdots)$ and/or
$\forall x(Ryx\to\cdots)$, where $R$
is a binary relation and $y$ is a variable 
distinct from $x$. It is easy to see that,
for bounded FO formulas $\phi(x)$, whether
the formula is true of an element $w$ in 
a structure $M$ depends only on the submodel
of $M$ generated from $w$ along the binary relations.  A FO-formula is 
\emph{$\exists$-bounded} if it is similarly built up, but in addition, we allow also unbounded universal quantification. 

\begin{thmC}[\cite{FefermanKreisel1966}]
\label{thm:generated}
    A first-order sentence $\phi$ is preserved under  generated subframes if and only if 
    $\phi$ is equivalent to an $\exists$-bounded FO sentence.
\end{thmC}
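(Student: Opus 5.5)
The easy direction is a direct check. Suppose $\phi$ is (equivalent to) an $\exists$-bounded sentence and $F$ is a generated subframe of $G$ with $G\models\phi$. We prove by induction on $\exists$-bounded formulas $\psi(\bar x)$ that for every tuple $\bar w$ from $F$, $G\models\psi[\bar w]$ implies $F\models\psi[\bar w]$.
\begin{itemize}
\item Literals (including negated relational and equality literals) are absolute between $F$ and $G$, since $F$ is an induced substructure.
\item Conjunction and disjunction are immediate.
\item Unbounded universal quantification passes down because $F\subseteq G$.
\item Bounded quantifiers $\exists x(Ryx\land\cdots)$ and $\forall x(Ryx\to\cdots)$ with $y$ interpreted in $F$ range over the same witnesses in $F$ and $G$, because $F$ is closed under $R$-successors.
\end{itemize}
Note that the existential case holds only because the witness is an $R$-successor of an element of $F$. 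Unbounded $\exists$ would fail here, which explains the shape of the fragment.

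For the hard direction, I would follow the standard compactness/elementary-chain proof of the \L o\'s--Tarski theorem, adapted to generated subframes. Let $\phi$ be preserved under generated subframes, and let $\Gamma$ be the set of $\exists$-bounded consequences of $\phi$. It suffices to show $\Gamma\models\phi$, since compactness then yields a single finite conjunction, which is again $\exists$-bounded. So take $F\models\Gamma$. The aim is to find $G\models\phi$ and a structure $F'\equiv F$ (e.g.\ an elementary extension) that is a generated subframe of $G$. Then preservation gives $F'\models\phi$, hence $F\models\phi$.

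To build $G$, I would introduce constants for the elements of $F$, i.e.\ pass to an elementary extension. I would then show consistency of $\{\phi\}\cup\mathrm{Diag}^{b}(F)$, where $\mathrm{Diag}^{b}(F)$ is the ``bounded diagram'' of $F$. This diagram consists of the literals true in $F$ together with axioms $\forall z(R c z\to \bigvee_{d} z=d)$ stating closure under successors. These axioms are not first-order when $F$ is infinitely branching, so I would instead work with a saturated $F$ and with the set of all $\exists$-bounded formulas with parameters true in $F$ (or rather, the negations of their bounded duals).

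The consistency argument is the dual one. If $\phi$ together with these facts were inconsistent, compactness would give finitely many facts $\delta(\bar c)$ with $\phi\models\neg\delta(\bar c)$. Generalizing the constants, we would get $\phi\models\forall\bar x\,\neg\delta(\bar x)$, and the goal is to check that this is equivalent to an $\exists$-bounded sentence lying in $\Gamma$ yet false in $F$, a contradiction.

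The main obstacle is the middle step: making the embedding land as a \emph{generated} subframe rather than a mere substructure. This requires controlling all $R$-successors of the image of $F$ inside $G$. I would try to handle it with an elementary-chain/back-and-forth construction between saturated models, using $\omega$-saturation so that every successor of an $F$-element in $G$ can be matched to one in $F$ realizing the same bounded type. Iterating this alternately, and taking unions of chains, should produce $F'\equiv F$ sitting as a generated subframe of some $G'\equiv G$.
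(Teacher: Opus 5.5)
The paper gives no proof of this theorem; it only cites Feferman--Kreisel. So your argument has to stand on its own. Your easy direction is correct, and reducing the hard direction to $\Gamma\models\phi$ is the right move. But the proposal stops exactly where the work lies. Obtaining an \emph{actual} generated-subframe embedding into a model of $\phi$ is only announced (``should produce''), and the concrete mechanisms you name do not work as stated.

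First, the set to be shown consistent with $\phi$ must consist of the formulas with parameters true in $F$ that are \emph{negations} of $\exists$-bounded ones (existentials unbounded, universals bounded). Only then is $\neg\delta$ $\exists$-bounded, so $\forall\bar x\,\neg\delta(\bar x)\in\Gamma$ and the contradiction is immediate. With the $\exists$-bounded facts of $F$, as you first write, the generalized formula leaves the fragment. This step needs no saturation. It yields $G\models\phi$ with $F\subseteq G$ satisfying $(\ast)$: every $\exists$-bounded formula with parameters in $F$ that holds in $G$ also holds in $F$.

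Second, no point of $F$ can realize the bounded type over all of $F$ of a successor $b\notin F$, since that type contains $y\neq a$ for every $a\in F$. Over finitely many parameters, $\omega$-saturation does give a match. But then you only get a back-and-forth system of finite partial maps, whereas the hypothesis on $\phi$ concerns real generated subframes.

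The missing piece is the alternating chain, together with the two finite-satisfiability facts that drive it.
\begin{itemize}
\item[(i)] Given $F_i\subseteq G_i$ with $(\ast)$, take $F_{i+1}\succeq F_i$ realizing the joint $\exists$-bounded type over $F_i$ of all $G_i$-successors of points of $F_i$, including equality and $R$-literals. Finitely many requirements are already met in $F_i$: the formula $\exists y_1(R x_{j_1}y_1\wedge\exists y_2(R x_{j_2}y_2\wedge\cdots\wedge\psi))$ is $\exists$-bounded, so $(\ast)$ applies. This is where boundedness of $\exists$ is used.
\item[(ii)] Then take $G_{i+1}\succeq G_i$ containing a copy of $F_{i+1}$ that extends the identifications made so far and satisfies $(\ast)$ again. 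Finite satisfiability holds because negated $\exists$-bounded formulas allow unbounded $\exists$, so $F_{i+1}\models\exists\bar z\,\chi$ transfers to $G_i$. This is where unboundedness of $\forall$ is used.
\end{itemize}
In the union, every $G_\omega$-successor of a point of $F_\omega$ was absorbed at some stage. Hence $F_\omega\succeq F$ is a generated subframe of $G_\omega\succeq G\models\phi$, so $F_\omega\models\phi$ and therefore $F\models\phi$.

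An equivalent route: take $\omega$-saturated $F$ and $G$. Check that the relation $(\ast)$ on finite tuples has forth for arbitrary points of $F$ and back for $R$-successors in $G$. Then apply downward L\"owenheim--Skolem to the pair expanded by this relation, and run a countable back-and-forth.
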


As we will show, this preservation theorem fails in the finite. 
\begin{thm}\label{thm:generated-fail}
    There is a FO sentence that is preserved under taking generated subframes in the finite but is not equivalent in the finite to any $\exists$-bounded FO  sentence.
\end{thm}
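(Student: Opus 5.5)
The plan is to adapt Tait's counterexample to the \L o\'s--Tarski theorem in the finite to the setting of generated subframes. We use a signature with two binary relations $S$ and $R$, and choose $S$ so that it controls what the generated subframes look like. Let $\mathrm{Lin}(S)$ be the first-order sentence saying that $S$ is a strict linear order (irreflexive, transitive, total). In any frame satisfying $\mathrm{Lin}(S)$, the subframe generated by a point $w$ is an $S$-final segment starting at $w$, closed under $R$-successors. In a \emph{finite} such frame it therefore has an $S$-minimum, namely $w$, and an $S$-maximum.

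We then look for a sentence of the form $\phi := \neg\mathrm{Lin}(S)\lor\chi$, where $\chi$ refers to the $S$-minimum and $S$-maximum and to how $R$ behaves between them. A natural candidate is $\chi$ = ``$R$ is an injective partial function whose graph lies inside $S$, and either the $S$-minimum has an $R$-successor or the $S$-maximum has no $R$-predecessor''. The exact shape of $\chi$ is to be tuned so that two things hold:
\begin{itemize}
\item[(i)] On finite frames, $\phi$ is preserved under generated subframes. Any frame violating $\mathrm{Lin}(S)$ either has all its generated subframes also violating it, or we handle those subframes by a case analysis. Frames satisfying $\mathrm{Lin}(S)$ pass $\chi$ down to final segments, and this is a counting argument that uses the existence of endpoints in finite linear orders.
\item[(ii)] $\phi$ is not preserved under generated subframes in some \emph{infinite} frame built from an order of type $\omega+\omega^*$, with $R$ acting as successor on the $\omega$-part and as predecessor on the $\omega^*$-part. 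Here a final segment loses the property that endpoints are linked by a finite chain.
\end{itemize}

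Next we show that $\phi$ is not equivalent in the finite to any $\exists$-bounded sentence. Suppose $\theta$ were one. Every $\exists$-bounded sentence is preserved under generated subframes over \emph{all} structures, by a routine induction on formulas. So it suffices to produce, for each quantifier rank $k$, a finite frame $A_k$ with a generated subframe $B_k$ such that $A_k\models\phi$, $B_k\not\models\phi$ fails, i.e.\ $B_k\models\phi$ as well, yet $\theta$ behaves differently from $\phi$ on them. We get this by comparing with the infinite counterexample from (ii).

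Concretely, we take long finite chains of length $\gg 2^k$ on both sides and show, by an Ehrenfeucht--Fra\"iss\'e argument, that $A_k$ is $k$-equivalent to the infinite frame $A_\infty\models\phi$. Likewise its generated subframe $B_k$ is $k$-equivalent to the infinite generated subframe $B_\infty\not\models\phi$. If $\theta$ has rank $k$, then $\theta$ agrees with $\phi$ on $A_k$, so $A_\infty\models\theta$. Preservation gives $B_\infty\models\theta$, hence $B_k\models\theta$. But the construction arranges $B_k\not\models\phi$, which is the desired contradiction. An alternative is to phrase this last step via compactness, using the sentences ``every $R$-chain has length $>n$''.

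The main obstacle is choosing $\chi$ so that (i) and (ii) both hold simultaneously. Finite preservation must rely genuinely on finiteness (existence of $S$-endpoints and finite $R$-chains), while the $\omega+\omega^*$ model must break preservation in a way that finite approximants cannot distinguish up to rank $k$. I would first try the ``chain connecting the endpoints'' style condition above, adjusting it as needed. The EF-game step is then standard, since linear orders with a successor-like function are well understood.
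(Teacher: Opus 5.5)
There is a genuine gap. The whole argument rests on a sentence $\phi$ satisfying (i) and (ii), which you never produce, and the design you sketch provably cannot produce one.

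\emph{The disjunctive shape is trivial.} A sentence of the form $\neg\mathrm{Lin}(S)\lor\chi$ that is preserved under generated subframes in the finite is automatically trivial. Let $L$ be any finite frame with $L\models\mathrm{Lin}(S)$. Then $L\uplus\{p\}$, with $p$ a fresh point without edges, violates totality of $S$, so it satisfies $\phi$. Since $L$ is a generated subframe of it, we get $L\models\chi$. Hence $\phi$ holds on all finite frames and is equivalent in the finite to $\forall x\,(x=x)$, which is $\exists$-bounded. The ``case analysis'' you postpone in (i) is exactly where this breaks. There are also smaller problems:
\begin{itemize}
\item Your concrete candidate fails (i) even on linear frames. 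The frame $0<1<2<3$ with $R=\{(0,1),(2,3)\}$ satisfies $\chi$, but its generated final segment $\{1,2,3\}$ does not.
\item ``$R$ acts as predecessor'' in (ii) contradicts ``graph inside $S$''.
\item In the transfer step, $B_k$ cannot be a generated subframe of $A_k$ with $B_k\not\models\phi$, since that contradicts (i).
\end{itemize}

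\emph{No choice of $\chi$ rescues the linear-order strategy.} Your transfer argument needs $A_\infty$ and $B_\infty$ to be $k$-equivalent to finite frames for every $k$. So both satisfy every sentence true in all finite frames. Where $S$ is a strict linear order, generated subframes are $S$-final segments. Since ``a strict linear order has a least element'' is true in the finite, $B_\infty$ has a least element $m$ and equals $\{v\mid v=m\lor Smv\}$. This set is first-order definable from $m$, and so is its closure under $R$. Let $\mathrm{Cl}(m)$ express that $R$-closure, and let $\phi'$ be the relativization of $\phi$ to $\{v\mid v=m\lor Smv\}$. Then (i) yields the single sentence $\mathrm{Lin}(S)\to\forall m\,(\mathrm{Cl}(m)\land\phi\to\phi')$, true in all finite frames. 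Hence it holds in $A_\infty$, which forces $B_\infty\models\phi$ and contradicts (ii).

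A linear order makes generated subframes too tame. What is needed is a sentence where membership in the generated subframe depends on a non-first-order reachability condition whose behaviour is tied to finiteness. The paper does this with $\chi=\forall x(\exists y R_4(x,y)\to\psi(x))$. Here $\psi$ forces an $R_2$-chain that must loop in a finite frame, and this triggers an $R_4$-edge that pulls the chain into the generated subframe. The paper also avoids infinite structures altogether. It compares finite $\mathfrak{A}_n\models\chi$ and $\mathfrak{B}_n\not\models\chi$, where the single $R_1$-edge sits at a leaf of depth $n$ in $\mathfrak{A}_n$ and is removed in $\mathfrak{B}_n$. The comparison uses the one-sided $\exists$-bounded game, in which Duplicator uses subtree-swapping automorphisms to keep all pebbles away from that edge.
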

Our proof of this theorem proceeds in two steps.
First, we define a first-order sentence $\chi$ that is preserved under generated subframes in the finite.
Second, for every $n$, we construct finite structures $\mathfrak{A}_n$ and $\mathfrak{B}_n$ such that $\mathfrak{A}_n \models \chi$ while $\mathfrak{B}_n \not\models \chi$.
We then show that no $\exists$-bounded first-order sentence of quantifier rank at most $n$ can distinguish between $\mathfrak{A}_n$ and $\mathfrak{B}_n$, witnessing that $\chi$ is not definable by any $\exists$-bounded sentence of quantifier rank $n$.

It was recently shown in \cite{Bogaerts24:preservation} that there is a first-order formula $\psi(x)$ that is invariant for generated submodels in the finite but that is not equivalent in the finite to a formula in the bounded fragment.
The formula in question is
$\psi(x) := \exists u(R_1(u,x)\land\phi(u))$ where
$\phi(u)$ is the conjunction of the following five FO-formulas:
  \begin{itemize}
    \item $R_3(u,u)$
    \item $\forall v(R_3(u,v)\to \exists w(R_3(v,w)\land R_2(w,v)))$
    \item $\forall vw(R_3(u,v)\land R_3(v,w)\to R_3(u,w))$
    \item $\neg\exists v R_2(u,v)$
    \item $\forall vw(R_3(u,v)\land \exists^{\geq 2}s (R_2(v,s)\land R_3(u,s))\land R_1(u,w)\to R_4(w,v))$ \footnote{Here we slightly modify the original formula; this modification does
not affect the result in \cite{Bogaerts24:preservation}.}
\end{itemize}

Intuitively, the formula $\psi(x)$ forces the existence of an $R_1$-predecessor $u$ of $x$ and an infinite $R_2$-chain arriving at $u$, which, in a finite structure, must eventually loop, triggering a forward $R_4$-edge from $x$ that eventually makes the entire chain belong to the generated submodel.
% See Figure~\ref{fig:counterexample-forward}. 
We now turn this observation into a sentence-level counterexample.
Define
\[
\chi := \forall x\bigl(\exists y\, R_4(x,y) \rightarrow \psi(x)\bigr).
\]
Since $\psi(x)$ is invariant under generated submodels in the finite, we
get the following result.

\begin{lem}\label{lem:chi-preserved}
The first-order sentence $\chi$ is preserved under taking generated subframes over finite frames.
\end{lem}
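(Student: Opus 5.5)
We will derive this directly from the cited result of \cite{Bogaerts24:preservation}. That result says $\psi(x)$ is invariant under generated submodels in the finite: whenever $F$ is a finite generated subframe of a finite frame $G$ and $w$ is a world of $F$, we have $F\models\psi[w]$ iff $G\models\psi[w]$. We only need the direction from $G$ to $F$. We also use the elementary fact that $\exists y\,R_4(x,y)$ is itself a bounded formula, so its truth at $w$ depends only on the part of the frame generated from $w$.

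Concretely, we argue as follows. Let $G$ be a finite frame with $G\models\chi$, and let $F$ be a generated subframe of $G$; in particular $F$ is finite. Fix an arbitrary world $w$ of $F$ with $F\models\exists y\,R_4(x,y)[w]$.
\begin{enumerate}
\item Since $F$ is an induced substructure of $G$, any $R_4$-successor of $w$ in $F$ is also one in $G$. Hence $G\models\exists y\,R_4(x,y)[w]$.
\item Since $G\models\chi$, we get $G\models\psi[w]$.
\item By the finite generated-submodel invariance of $\psi$, applied to the finite frames $F\subseteq G$ and the world $w\in F$, we get $F\models\psi[w]$.
\end{enumerate}
As $w$ was arbitrary, $F\models\chi$. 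Step 1 uses only the induced-substructure part of the definition of generated subframe.

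The main obstacle is the invariance step, since the cited result is phrased for generated \emph{submodels} rather than subframes. Our signature here has only binary relations, so a model is just a frame with no valuation. Moreover, $\psi$ contains no unary predicates, so the two notions coincide. If one wants to avoid relying on the citation, the key point to check is what $F$ contains when $G\models\psi[w]$. In $G$ there is a $u$ with $R_1(u,w)$ satisfying $\phi$. The $R_3$-closure conditions and the requirement that every $R_3$-successor of $u$ have an $R_2$-predecessor that is also an $R_3$-successor force, in a finite frame, an $R_2$-cycle among the $R_3$-successors of $u$. This yields some $v$ with at least two $R_2$-successors in $R_3[u]$, and the last conjunct then gives $R_4(w,v)$. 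Since $F$ is closed under successors and contains $w$, it contains $v$. From $v$, following the $R_2$/$R_3$ structure, we would argue that $u$ and the relevant part of $R_3[u]$ lie in $F$, so that $\phi(u)$ still holds there. We expect this closure argument to be the delicate part, and would defer to \cite{Bogaerts24:preservation} for it.
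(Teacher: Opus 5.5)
Your proof is correct and follows the paper's own route: the paper obtains the lemma in one line from the finite generated-submodel invariance of $\psi(x)$ cited from \cite{Bogaerts24:preservation}, and your steps 1--3 make that derivation explicit (only the downward direction of the invariance and the upward preservation of $\exists y\,R_4(x,y)$ are needed). Your optional sketch of the invariance is also on the right track: the node where the backward $R_2$-chain from $u$ first closes into a cycle has two distinct $R_2$-successors in $R_3[u]$ and reaches $u$ along $R_2$-edges, so $u$ and everything $\phi(u)$ quantifies over lie in $F$.
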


We now proceed to the second step of the proof, namely the construction of the structures $\mathfrak{A}_n$ and $\mathfrak{B}_n$.
Intuitively, the structure $\mathfrak{A}_n$ is designed so that the unique $R_1$-edge in $\mathfrak{A}_n$ is not reachable within depth $n$ from the root of the tree.
This will ensure that $\exists$-bounded formulas of quantifier rank at most $n$ cannot detect its presence.

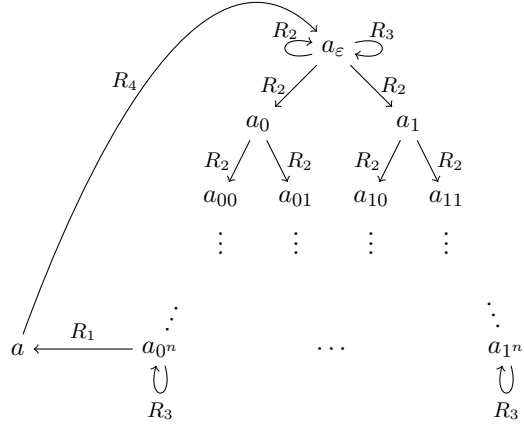
\begin{figure}
\centering
\begin{tikzpicture}
  [align=center,scale=1,
   every node/.style={scale=.9},
   ell/.style={inner sep=1pt, font=\large},
   edge label/.style={font=\footnotesize}]

  % nodes 
    \node (a)  at (-1.2,0) {$a$};
    \node (be) at (3,4) {$a_{\varepsilon}$};
    \node (b0) at (2,3) {$a_{0}$};
    \node (b1) at (4,3) {$a_{1}$};
    \node (b00) at (1.5,2) {$a_{00}$};
    \node (b01) at (2.5,2) {$a_{01}$};
    \node (b10) at (3.5,2) {$a_{10}$};
    \node (b11) at (4.5,2) {$a_{11}$};
    % ellipses below leaves (full binary tree continues)
    \node[ell] (e00) at (1.5,1.5) {$\vdots$};
    \node[ell] (e01) at (2.5,1.5) {$\vdots$};
    \node[ell] (e10) at (3.5,1.5) {$\vdots$};
    \node[ell] (e11) at (4.5,1.5) {$\vdots$};
    % deep chain endpoints
    \node (b0n) at (0.7,0) {$a_{0^n}$};
    \node (b1n) at (5.3,0) {$a_{1^n}$};

    % sloped ellipses (invisible paths)
    \draw[draw=none] (b0n) -- node[pos=0.13,sloped] {$\cdots$} (b00);
    \draw[draw=none] (b1n) -- node[pos=0.13,sloped] {$\cdots$} (b11);
    \draw[draw=none] (b0n) -- node[midway] {$\cdots$} (b1n);

    % R1 edge
    \draw [-to] (b0n) to node[above, edge label] {$R_1$} (a);
    
    % R2 edges (tree)
    \draw [-to] (be) -- node[left,edge label] {$R_2$} (b0);
    \draw [-to] (be) -- node[right,edge label] {$R_2$} (b1);
    \draw [-to] (b0) -- node[left,edge label] {$R_2$} (b00);
    \draw [-to] (b0) -- node[right,edge label] {$R_2$} (b01);
    \draw [-to] (b1) -- node[left,edge label] {$R_2$} (b10);
    \draw [-to] (b1) -- node[right,edge label] {$R_2$} (b11);
    % R2 loop at root
    \draw [-to] (be) to [loop left] node[above,edge label] {$R_2$} (be);

    %R3 loops
    \draw [-to] (be) to [loop right] node[above,edge label] {$R_3$} (be);
    \draw [-to] (b0n) to [loop below] node[below,edge label] {$R_3$} (b0n);
    \draw [-to] (b1n) to [loop below] node[below,edge label] {$R_3$} (b1n);

    % R4 edge
    \draw [-to] (a) to [out=70] node[left,edge label] {$R_4$} (be);

\end{tikzpicture}

\vspace{1ex}
\caption{The structure $\mathfrak{A}_n$. In addition to the edges shown, there is an $R_3$-edge from $a_w$ to $a_u$ whenever $u$ is a proper prefix of $w$.}
\label{An}
\end{figure}

Fix $n \ge 1$.
The structure $\mathfrak{A}_n=(A_n,R_1,R_2,R_3,R_4)$ is defined as follows.
The universe $A_n$ consists of a distinguished element $a$ and elements $a_w$ indexed by binary words $w$ with $|w|\le n$, forming a full binary tree of height $n$ with root $a_\varepsilon$ (see Figure~\ref{An}).
The relations are given below.

\begin{itemize}
  \item[$R_1$:]
  A single edge $R_1(a_{0^n},a)$ from the leftmost leaf to the distinguished
  element. This is the only edge connecting the tree to $a$ that is absent
  in the variant $\mathfrak{A}_n'$ defined below.

  \item[$R_2$:]
  Tree edges encoding the binary branching structure: for each $w$ with
  $|w|<n$, the node $a_w$ has exactly two $R_2$-successors, namely
  $a_{w0}$ and $a_{w1}$. In addition, the root $a_\varepsilon$ is
  $R_2$-reflexive.

  \item[$R_3$:]
  The ancestor relation on the tree: whenever $u$ is a proper prefix of $w$, there is an $R_3$-edge from $a_w$ to $a_u$. Moreover, the root $a_\varepsilon$ and all leaves $a_w$ with $|w|=n$ are $R_3$-reflexive. For readability, only some $R_3$-edges are shown in the figure.

  \item[$R_4$:]
  The edge linking the distinguished element to the tree, namely $R_4(a,a_\varepsilon)$.

  % together with a loop $R_4(a,a)$.
\end{itemize}

\noindent
No other $R_i$-edges are present.

\medskip

Let $\mathfrak{A}_n'$ be the structure obtained from $\mathfrak{A}_n$ by
removing the $R_1$-edge $R_1(a_{0^n},a)$.
Let $\mathfrak{B}_n$ be a copy of $\mathfrak{A}_n'$.
For notational convenience, we rename the distinguished element $a$ as $b$ and each element $a_w$ as $b_w$. Then $\mathfrak{A}_n \models \chi$, whereas $\mathfrak{B}_n \not\models \chi$.

To show that no $\exists$-bounded FO sentence of quantifier rank $n$ can distinguish between $\mathfrak{A}_n$ and $\mathfrak{B}_n$ (Lemma~\ref{Lem:An-Bn}), we introduce the $\exists$-bounded game characterizing $\exists$-bounded FO sentences.

The $\exists$-bounded game is a variant of the
Ehrenfeucht--Fra\"{\i}ss\'e game.
For every natural number $n$, the \emph{$n$-round $\exists$-bounded game} played on a pair of structures $\mathfrak{A}$ and $\mathfrak{B}$ of same signature proceeds as follows.
After $i$ rounds, the current configuration is the pair
$
(\vec{x},\vec{y})=(x_1,\dots,x_i \ ;\ y_1,\dots,y_i),
$
where $x_j$ and $y_j$ are the elements selected in round~$j$ in $\mathfrak{A}$ and $\mathfrak{B}$, respectively.
In round $i+1$, Spoiler chooses one of the two structures.
If Spoiler plays in $\mathfrak{B}$, he may select an arbitrary element $y_{i+1}\in B$.
If Spoiler plays in $\mathfrak{A}$, he must select an element $x_{i+1}\in A$ that is \emph{bounded} by the current configuration, that is, $x_{i+1}$ must be reachable from some previously chosen element $x_j$ by one step along the underlying relations.
Duplicator then responds by selecting an arbitrary element in the other structure, thereby extending the configuration to
$
(x_1,\dots,x_{i+1} \ ;\ y_1,\dots,y_{i+1}).
$

A configuration $(\vec{x},\vec{y})$ forms a \emph{partial isomorphism} if the mapping $x_j\mapsto y_j$ is an isomorphism between the substructures of $\mathfrak{A}$ and $\mathfrak{B}$ induced by $\vec{x}$ and $\vec{y}$, respectively.
Duplicator wins the $n$-round $\exists$-bounded game if she can maintain a partial isomorphism after every round; otherwise, Spoiler wins.

It can be shown that Duplicator has a winning strategy in the $n$-round $\exists$-bounded game on $\mathfrak{A}$ and $\mathfrak{B}$ if and only if, for every $\exists$-bounded first-order sentence $\varphi$ of quantifier rank at most $n$, $\mathfrak{A} \models \varphi$ implies $\mathfrak{B} \models \varphi$. The proof follows standard arguments and is omitted.

\begin{lem}\label{Lem:An-Bn}
For every $\exists$-bounded sentence $\theta$ of quantifier rank at most $n$, if $\mathfrak{A}_n \models \theta$ then $\mathfrak{B}_n \models \theta$.
\end{lem}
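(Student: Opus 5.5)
The plan is to use the $\exists$-bounded game characterization stated just before the lemma. It then suffices to give Duplicator a winning strategy in the $n$-round $\exists$-bounded game on $\mathfrak{A}_n$ and $\mathfrak{B}_n$.

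\emph{The only danger.} $\mathfrak{A}_n$ and $\mathfrak{B}_n$ differ only in the edge $R_1(a_{0^n},a)$. So a configuration that is a partial isomorphism for the identity renaming $b\mapsto a$, $b_w\mapsto a_w$ fails only if both $a$ and $a_{0^n}$ occur in it. The strategy will therefore answer moves via the renaming twisted by a tree automorphism, chosen so that $a$ and $a_{0^n}$ are never both selected.

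\emph{First observation: where Spoiler can move in $\mathfrak{A}_n$.} In the first round Spoiler cannot play in $\mathfrak{A}_n$, because no element is bounded by the empty configuration. Afterwards, every move in $\mathfrak{A}_n$ is a single forward step from an already chosen element. The forward steps available are: $R_4$ from $a$ to the root; $R_2$ from a node to a child (or the root to itself); $R_3$ from a node to an ancestor; and $R_1$ from $a_{0^n}$ to $a$. Hence every element chosen in $\mathfrak{A}_n$ is either a Duplicator response or a one-step successor of a previously chosen element. In particular, reaching $a_{0^n}$ starting from $a$ costs $n+1$ forward steps: first $R_4$ to the root, then $n$ downward $R_2$-steps. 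That is more than the $n$ rounds available.

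\emph{Duplicator's strategy.}
\begin{enumerate}
\item When Spoiler picks $b$ in $\mathfrak{B}_n$, Duplicator answers $a$.
\item When Spoiler picks a tree node $b_w$ in $\mathfrak{B}_n$, Duplicator answers some $a_{w'}$ with $|w'|=|w|$. She chooses $w'$ lazily, extending a partial automorphism of the full binary tree of height $n$, so that prefix relations and root/leaf status are preserved; this keeps $R_2$, $R_3$ and the loops correct.
\item Whenever a branching choice is free, Duplicator sends fresh branches away from the leftmost path $0^{\le n}$. This is always possible, since every internal node has two children, and a partial automorphism defined on at most $n$ points can always be extended while avoiding the leftmost branch below any point where it is not forced.
\item When Spoiler moves in $\mathfrak{A}_n$ along $R_2$, $R_3$ or $R_4$, Duplicator answers with the corresponding move under the inverse map in $\mathfrak{B}_n$, which exists because the same edges are present there.
\end{enumerate}

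\emph{The invariant.} Duplicator maintains the following: if $a$ has been chosen with $k$ rounds remaining, then every chosen node on the leftmost path is at depth less than $n-k$. Together with the strategy above, this ensures $a_{0^n}$ can never be selected in $\mathfrak{A}_n$ while $a$ is present. Symmetrically, if $a_{0^n}$ has been chosen, then $a$ is never played: Spoiler cannot reach $a$ by forward steps except via the $R_1$-edge, and Duplicator never answers $b$ with anything other than $a$. So the only dangerous move would be Spoiler playing $b$ in $\mathfrak{B}_n$ when $a_{0^n}$ is already used; this is ruled out by the rule that leftmost-branch nodes are never chosen as images of freely chosen $\mathfrak{B}_n$-nodes.

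\emph{Main obstacle.} The hard part is the bookkeeping in the interaction between the two rules. Images of $\mathfrak{B}_n$-nodes must avoid the leftmost branch, yet after Spoiler's bounded moves in $\mathfrak{A}_n$ the configuration may already contain leftmost-path nodes (reached from the root via $R_2$ descents). I would handle this with a distance argument: the leftmost-path nodes present are always within $(\text{rounds used})$ steps of the root, while a leaf at depth $n$ lies at distance $n$. So Spoiler runs out of rounds before $a_{0^n}$ can be forced, either as a direct pick or as a Duplicator response that preserves prefix relations.
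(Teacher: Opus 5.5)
\textbf{There is a genuine gap: your invariant does not carry the induction.} Your overall plan matches the paper's: Duplicator answers via the renaming twisted by tree automorphisms and steers away from the leftmost branch. The argument has two concrete holes.

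First, your invariant only constrains chosen nodes \emph{on} the leftmost path. But Duplicator's answers to $\mathfrak{B}_n$-moves are forced by prefix constraints with \emph{all} chosen nodes. For example, a configuration containing $a_{0^{n-1}1}\mapsto b_{0^{n-1}1}$ satisfies your invariant vacuously. Yet when Spoiler then plays the sibling $b_{0^n}$, your rule 2 forces the answer $a_{0^n}$. What must be bounded is how deep \emph{any} chosen node branches off the spine. Your closing distance argument is phrased only for leftmost-path nodes, so it does not supply this. Read literally, it also permits depth $i$ after $i$ rounds, and hence $a_{0^n}$ in round $n$.

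Second, the ``symmetric'' case is wrong. Once $a_{0^n}$ is selected with a round left, Spoiler plays $a$ in $\mathfrak{A}_n$ along $R_1$, a legal bounded move you listed yourself. He wins immediately, since $\mathfrak{B}_n$ has no $R_1$-edges. So you must show that $a_{0^n}$ is never selected at all, whether or not $a$ is present. An invariant conditioned on $a$ having been chosen cannot give this.

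The missing idea is the paper's invariant: after round $i$ there is a full isomorphism $f:\mathfrak{T}_{a_{\varepsilon}}\to\mathfrak{T}_{b_{\varepsilon}}$ extending the configuration, and no chosen element lies in $\mathfrak{T}_{a_{0^{i}}}$. It is maintained as follows.
\begin{itemize}
\item One-step moves in $\mathfrak{A}_n$ from outside $\mathfrak{T}_{a_{0^{i}}}$ land outside $\mathfrak{T}_{a_{0^{i+1}}}$.
\item If Spoiler's $\mathfrak{B}_n$-move pulls back under $f$ into $\mathfrak{T}_{a_{0^{i+1}}}$, Duplicator replaces $f$ by $f\circ h$. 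Here $h$ swaps $\mathfrak{T}_{a_{0^{i+1}}}$ with $\mathfrak{T}_{a_{0^{i}1}}$ and fixes all earlier choices, since those lie outside $\mathfrak{T}_{a_{0^{i}}}$.
\end{itemize}
Since $\mathfrak{T}_{a_{0^n}}=\{a_{0^n}\}$, the node $a_{0^n}$ is never played in $n$ rounds. This disposes of both orders (``$a$ first'' and ``$a_{0^n}$ first'') at once.

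Your greedy rule 3 does in fact preserve this invariant, so configurations like the one above never arise while rounds remain. But that verification is precisely the step your proposal omits.
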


\begin{proof}
It suffices to show that Duplicator has a winning strategy in the $n$-round $\exists$-bounded game on $\mathfrak{A}_n$ and $\mathfrak{B}_n$.

We introduce some notation.
We write $\mathfrak{T}_{a_w}$ and $\mathfrak{T}_{b_w}$ for the subtrees rooted at $a_w$ and $b_w$ in $\mathfrak{A}_n$ and $\mathfrak{B}_n$, respectively; that is,
the induced substructures on the sets
\[
\{\, a_u \mid w \text{ is a prefix of } u \,\}
\quad\text{and}\quad
\{\, b_u \mid w \text{ is a prefix of } u \,\},
\]
respectively. 
For simplicity, we let the initial configuration be $(a\ ;\ b)$.
We describe a winning strategy for Duplicator in the $n$-round $\exists$-bounded game between $\mathfrak{A}_n$ and $\mathfrak{B}_n$ starting from this configuration. During the game, we denote by $x_i$ and $y_i$ the elements selected in round $i$ in $\mathfrak{A}_n$ and $\mathfrak{B}_n$, respectively.
Thus, the configuration after round $i$ is written as
\[
(a,x_1,\dots,x_i \ ;\ b,y_1,\dots,y_i).
\]

\noindent
\textbf{Claim.} Duplicator has a strategy ensuring that, after each round $i\le n$, the configuration $(a,x_1,\dots,x_i \; ; \; b,y_1,\dots,y_i)$ satisfies the following invariants (we may assume that $a\notin\{x_1,\dots,x_i\}$ and
$b\notin\{y_1,\dots,y_i\}$):
\begin{enumerate}
  \item[\textup{(I)}]
  There exists an isomorphism
  \[
  f:\mathfrak{T}_{a_{\varepsilon}} \to \mathfrak{T}_{b_{\varepsilon}}
  \]
  that extends $(x_1,\dots,x_i \ ;\ y_1,\dots,y_i)$, i.e., $f(x_j)=y_j$ for all $j\le i$.
  \item[\textup{(II)}]
  Each $x\in\{x_1,\dots,x_i\}$ lies in $\mathfrak{T}_{a_{\varepsilon}}\setminus \mathfrak{T}_{a_{0^{i}}}$.
\end{enumerate}
Invariant~\textup{(I)} ensures that $(x_1,\dots,x_i;y_1,\dots,y_i)$ forms a partial isomorphism between $\mathfrak{A}_n$ and $\mathfrak{B}_n$.
Moreover, invariant~\textup{(II)} implies that no $x\in\{x_1,\dots,x_i\}$ is equal to $a_{0^n}$, so the $R_1$-edge from $a_{0^n}$ to $a$ is never involved in the game.
Together, these two invariants ensure that $(a,x_1,\dots,x_i \ ;\ b,y_1,\dots,y_i)$ forms a partial isomorphism.
Thus, Duplicator can win the game by maintaining these invariants.

\smallskip
\noindent
\emph{Proof of the claim.}
We proceed by induction on $i$.

\smallskip
\noindent
\emph{Base case ($i=0$).}
Trivial.

\smallskip
\noindent
\emph{Induction step.}
Assume the invariants hold after round $i$ and that the configuration is $(a,x_1,\dots,x_i \ ;\ b,y_1,\dots,y_i)$.
We consider round $i+1$.

\smallskip
\noindent
\emph{Case~1: Spoiler plays in $\mathfrak{A}_n$.}
Suppose Spoiler selects an element $x_{i+1}$.
By invariant~\textup{(I)}, there exists an isomorphism
$f:\mathfrak{T}_{a_{\varepsilon}} \to \mathfrak{T}_{b_{\varepsilon}}$ extending the current configuration.
Duplicator responds by choosing $y_{i+1}=f(x_{i+1})$. Invariant~\textup{(I)} is clearly maintained.

To see that invariant~\textup{(II)} is also maintained, note that by the induction hypothesis, $x_j\in\mathfrak{T}_{a_{\varepsilon}}\setminus \mathfrak{T}_{a_{0^{i}}}$ for all $j\le i$.
Since Spoiler is restricted to the set $\{a,x_1,\dots,x_i\}$ in $\mathfrak{A}_n$, it follows that $x_{i+1}\in\mathfrak{T}_{a_{\varepsilon}}\setminus \mathfrak{T}_{a_{0^{i+1}}}$.
Hence invariant~\textup{(II)} is maintained for round $i+1$.

\smallskip
\noindent
\emph{Case~2: Spoiler plays in $\mathfrak{B}_n$.}
Suppose Spoiler selects an element $y_{i+1}$, and let $f:\mathfrak{T}_{a_{\varepsilon}} \to \mathfrak{T}_{b_{\varepsilon}}$ be an isomorphism as in invariant~\textup{(I)}.

\smallskip
\noindent
\emph{Subcase~2a.}
If $f^{-1}(y_{i+1})\in\mathfrak{T}_{a_{\varepsilon}}\setminus \mathfrak{T}_{a_{0^{i+1}}}$, Duplicator responds with
$x_{i+1}=f^{-1}(y_{i+1})$.
Invariant~\textup{(I)} is clearly maintained.
Since $\mathfrak{T}_{a_{0^{i+1}}}\subseteq \mathfrak{T}_{a_{0^{i}}}$,
we have
$x_{i+1}\in\mathfrak{T}_{a_{\varepsilon}}\setminus \mathfrak{T}_{a_{0^{i+1}}}$, so invariant~\textup{(II)} is also maintained.

\smallskip
\noindent
\emph{Subcase~2b.}
If $f^{-1}(y_{i+1})\in \mathfrak{T}_{a_{0^{i+1}}}$, Duplicator responds with an element $x_{i+1}$ symmetric to $f^{-1}(y_{i+1})$ with respect to $a_{0^{i}}$.
More precisely, if $f^{-1}(y_{i+1})=a_{0^{i+1}\cdot u}$ for some binary word $u$, then Duplicator selects $x_{i+1}=a_{0^{i}\cdot 1\cdot u}$.
Invariant~\textup{(II)} is maintained, since $a_{0^{i}\cdot 1\cdot u}\in \mathfrak{T}_{a_{\varepsilon}}\setminus \mathfrak{T}_{a_{0^{i+1}}}$.
To see that invariant~\textup{(I)} is also maintained, let $h$ be the automorphism of $\mathfrak{T}_{a_{\varepsilon}}$ defined as follows: for every $a_w$, if $w=0^{i+1}\cdot u$ for some word $u$, we set
$h(a_w) = a_{0^{i}\cdot 1\cdot u}$ and $h(a_{0^{i}\cdot 1\cdot u}) = a_{0^{i+1}\cdot u}$. If $w$ does not have $0^{i+1}$ as a prefix, we let $h(a_w) = a_w$.
Define $f' := f \circ h$. Note that $f'(x_{i+1})=y_{i+1}$.
It remains to verify that $f'$ is an isomorphism between $\mathfrak{T}_{a_{\varepsilon}}$ and $\mathfrak{T}_{b_{\varepsilon}}$ extending $(x_1,\dots,x_{i+1}; y_1,\dots,y_{i+1})$.
Since $f$ is an isomorphism and $h$ is an automorphism of $\mathfrak{T}_{a_{\varepsilon}}$, their composition $f'$ is again an isomorphism. Moreover, by the induction hypothesis, every element $x \in \{x_1,\dots,x_i\}$ lies in $\mathfrak{T}_{a_{\varepsilon}} \setminus \mathfrak{T}_{a_{0^{i}}}$.
Hence $h(x_j) = x_j$ for all $j \le i$, and therefore
$f'(x_j) = f(x_j) = y_j$ for all $j \le i$.
Finally, we have $x_{i+1} = a_{0^{i}\cdot 1\cdot u}$ for some $u$,
and thus $x_{i+1} \in
\mathfrak{T}_{a_{\varepsilon}} \setminus \mathfrak{T}_{a_{0^{i+1}}}$.
Consequently, $f'(x_{i+1}) = f(h(x_{i+1})) = f(a_{0^{i+1}\cdot u})= y_{i+1}$, and invariant~\textup{(I)} is maintained.

This completes the induction and hence the proof of the claim. Since Duplicator can maintain the invariants for $n$ rounds, she wins the $n$-round $\exists$-bounded game. The statement of the lemma follows.
\end{proof}

\medskip
\noindent
\emph{Proof of Theorem~\ref{thm:generated-fail}.}
By Lemma~\ref{lem:chi-preserved}, the sentence $\chi$ is preserved under taking generated subframes over finite frames.
Furthermore, for every $n$, there exist finite structures $\mathfrak A_n$ and $\mathfrak B_n$ such that $\mathfrak A_n \models \chi$ while $\mathfrak B_n \not\models \chi$.
By Lemma~\ref{Lem:An-Bn}, every $\exists$-bounded first-order sentence of quantifier rank at most $n$ that is true in $\mathfrak A_n$ is also true in $\mathfrak B_n$.
Since $n$ is arbitrary, it follows that $\chi$ is not equivalent in the finite to any $\exists$-bounded first-order sentence. This completes the proof of the theorem. \hfill $\square$

\medskip
We leave it as an open question whether the above counterexample can be modified to use a single binary relation.

\subsection{Disjoint unions}\label{subsec:disjoint unions}
Recall that a first-order sentence $\phi$ is preserved under disjoint unions if for every family of structures $(\mathfrak A_i)_{i\in I}$,
if $\mathfrak A_i\models\phi$ for all $i\in I$, then
$\biguplus_{i\in I}\mathfrak A_i\models\phi$.
In \cite{vanBenthem1985}, it was conjectured 
that every first-order sentence preserved under disjoint unions is equivalent to a sentence of the form
$\forall x\alpha(x)$, where $\alpha(x)$ is constructed from atomic formulas
and their negations using $\land$, $\lor$, $\exists$, and restricted universal quantifiers of the forms $\forall y(Rzy \to \cdots)$ and
$\forall y(Ryz \to \cdots)$, where $z$ is a variable distinct from $y$.

We provide a simple counterexample showing that this conjecture is false. It works both over arbitrary frames and over finite frames.
% For simplicity, we formulate the counterexample over vocabularies containing two unary predicate symbols $P$ and $Q$. 
% The example easily adapts to vocabularies containing only binary relation symbols.
Consider the following first-order sentence with two binary predicate symbols:
\[
\psi := \exists x P (x,x)  \wedge
\bigl((\exists^{\le 1} x  Q(x,x) \vee (\exists^{\ge 2} x P(x,x)\bigr).
\]
Thus, $\psi$ states that there exists a $P$-reflexive element, and that either there is at most one $Q$-reflexive element
or at least two $P$-reflexive elements.
It is immediate that $\psi$ is preserved under disjoint unions, since the disjoint union of two models of $\psi$ contains at least two $P$-reflexive elements.

Next, consider two structures $\mathfrak A$ and $\mathfrak B$. 
Structure $\mathfrak A$ has domain $\{a,b\}$ with $P(a,a)$ and $Q(b,b)$, 
while structure $\mathfrak B$ has domain $\{c,d,e\}$ with $P(c,c)$ and $Q(d,d), Q(e,e)$.
Clearly, $\mathfrak A \models \psi$ while $\mathfrak B \not\models \psi$.
However, for every first-order sentence of the form
$\forall x\alpha(x)$, with all universal quantifiers in
$\alpha(x)$ restricted, if $\mathfrak A \models \forall x\,\alpha(x)$, then
$\mathfrak B \models \forall x\,\alpha(x)$.  
This can be shown using an Ehrenfeucht--Fra\"iss\'e game corresponding to the formula class above on the pair $(\mathfrak A,\mathfrak B)$.
In this game, Spoiler's moves other than the initial one are restricted to either placing a pebble in the left structure or placing a pebble on a node in the right structure that is reachable from some already pebbled node.
Thus Spoiler cannot force a configuration in which both $Q$-reflexive nodes $d$ and $e$ of $\mathfrak B$ are pebbled simultaneously. It follows that Duplicator wins the game on $(\mathfrak A,\mathfrak B)$.

\begin{rem}\label{rem:disjoint-unions}
Observe that on the class of structures in which exactly one element is $P$-reflexive, the sentence $\psi$ above is equivalent to
$\exists^{\le 1}x Q(x,x)$. 
More generally, if $\phi$ is any first-order sentence and we define
$
\psi_\phi := \exists x P(x,x)  \wedge  (\phi  \vee  \exists^{\ge 2}x\,P(x,x)),
$ then $\psi_\phi$ is also preserved under disjoint unions, for the same reason as for $\psi$ above.
Moreover, over the class of structures with exactly one $P$-reflexive element,
$\psi_\phi$ is equivalent to $\phi$. Thus, over this class of structures,
sentences preserved under disjoint unions have the full expressive power of first-order logic.
This may indicate that preservation under disjoint unions is difficult to capture by a ``natural'' syntactic characterization.

Nevertheless, our counterexample relies on the presence of identity in the language. We leave it as an open problem whether the conjecture holds for
first-order logic without identity.
\end{rem}

While preservation under disjoint unions does not admit the conjectured syntactic characterization, the situation for \emph{invariance} is different.
Already in the classical setting, \cite{vanBenthem1985} shows that a first-order sentence is invariant under disjoint unions if and only if it is equivalent to a sentence of the form $\forall x\alpha(x)$, where $\alpha(x)$ is a two-way restricted first-order formula, that is, a formula in which the only allowed quantifiers are restricted universal quantifiers $\forall y (Rzy \to \cdots)$ and $\forall y (Ryz \to \cdots)$, and existential quantifiers $\exists y (Ryz \wedge \cdots)$ and $\exists y (Rzy \wedge \cdots)$, where $z$ is a variable distinct from $y$.
We show,
building on results from~\cite{OTTO2004173}, that this invariance theorem also holds in the finite.

We first recall some standard notions of locality. These notions will also be used in Section~\ref{subsec:McKinsey}.
The \emph{Gaifman graph} $\mathcal{G}(\mathfrak A)$ of a structure $\mathfrak A$ is the simple undirected graph with vertex set $A$,
where two distinct elements $u,v\in A$ are adjacent if they occur together in some tuple of a relation of $\mathfrak A$.
The \emph{distance} $\dist_{\mathfrak A}(u,v)$ between two elements $u,v\in A$ is their distance in the Gaifman graph $\mathcal G(\mathfrak A)$.
For $r\in\mathbb N$ and $a\in A$, the \emph{$r$-neighbourhood} $\mathcal N_r^{\mathfrak A}(a)$ of $a$ in $\mathfrak A$ is the substructure of $\mathfrak A$ induced by the set
$
\{v\in A : \dist_{\mathfrak A}(a,v)\le r\}.$

\begin{prop}
For a first-order sentence $\theta$, the following are equivalent over finite structures:
\begin{enumerate}
\item $\theta$ is invariant under disjoint unions, i.e., for every family of structures $(\mathfrak A_i)_{i\in I}$,
$
\mathfrak A_i \models \theta \text{ for all } i\in I
\Longleftrightarrow
\biguplus_{i\in I}\mathfrak A_i \models \theta.
$
  \item $\theta$ is equivalent to a sentence of the form $\forall x\,\varphi(x)$, where $\varphi(x)$ is a two-way restricted formula.
\end{enumerate}
\end{prop}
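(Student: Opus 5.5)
The plan is to reduce the statement to a locality-based characterization of \emph{connected-component-local} sentences. The direction from 2 to 1 is routine and holds over arbitrary structures. A two-way restricted formula $\varphi(x)$ only quantifies along edges, so its truth at $a$ depends only on the connected component of $a$ in the Gaifman graph. For that reason $\forall x\,\varphi(x)$ holds in $\biguplus_i \mathfrak A_i$ iff it holds in every $\mathfrak A_i$. (For $I=\emptyset$ the empty structure satisfies a universal sentence, which is consistent with the equivalence.) Formally, this is an induction on $\varphi$ showing that $\mathfrak A_i,a\models\varphi$ iff $\biguplus\mathfrak A_i,a\models\varphi$. The one caveat is equality literals between variables, which remain local because all variables are chained to $x$ via edges.

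For the direction from 1 to 2, I first observe that invariance under disjoint unions makes $\theta$ determined by the set of isomorphism types of its connected components. More precisely, $\mathfrak A\models\theta$ iff every component $\mathfrak C$ of $\mathfrak A$ satisfies $\theta$. Since each component is connected, this reduces the task to the following: over finite \emph{connected} structures, find a two-way restricted $\varphi(x)$ such that, for every finite connected $\mathfrak C$ and every $c\in C$, $\mathfrak C\models\theta$ iff $\mathfrak C,c\models\varphi$. Then $\forall x\,\varphi(x)$ works for an arbitrary finite $\mathfrak A$, since for $a$ in a component $\mathfrak C$ we have $\mathfrak A,a\models\varphi$ iff $\mathfrak C,a\models\varphi$ iff $\mathfrak C\models\theta$.

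The key step is to produce $\varphi$, and here I would invoke Otto's results \cite{OTTO2004173} on the finite-model-theoretic version of the characterization of first-order properties invariant under connected components. The idea is to apply Gaifman's theorem to $\theta$, giving a Boolean combination of basic local sentences $\exists x_1\ldots x_k(\bigwedge_{i\neq j}\dist(x_i,x_j)>2r\wedge\bigwedge\lambda^{(r)}(x_i))$. In a connected structure of large diameter, ``there exist $k$ far-apart points with local type $\lambda$'' is not expressible by two-way restricted formulas, which can only speak about a bounded neighborhood of $x$. The obstacle is therefore to eliminate scattered counting.

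Here invariance is exploited. If $\mathfrak C\models\theta$, then $\biguplus_{m}\mathfrak C\models\theta$ for every $m$, and conversely. Taking many disjoint copies saturates all scattered-occurrence counts, so that $\theta$ reduces to a Boolean combination of statements ``some element has $r$-local type $\lambda$''. Otto's theorem handles exactly this situation, showing that $\theta$ is equivalent, over finite structures, to a sentence that is a Boolean combination of $\exists x\,\lambda(x)$ with $\lambda$ local. What remains is to express, from a point $c$ of a connected structure, ``some element of my component has local type $\lambda$'', using only two-way restricted quantifiers. This cannot be done with bounded depth in general, since components may be arbitrarily large. I expect this to be the crux, and I would take it from \cite{OTTO2004173}, where it is shown that connected-component-invariant FO properties are expressible by guarded/restricted formulas in the finite via a finite-model construction (finite covers that make relevant types occur near every point). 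Concretely, I would first pass from $\mathfrak C$ to a finite bisimilar-style cover in which every realized $r$-type occurs within a bounded distance of every point while $\theta$ is unchanged (by invariance under disjoint unions plus a connecting gadget), and then translate the existential type statements into bounded two-way restricted searches.
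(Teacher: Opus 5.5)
There is a genuine gap in your direction from 1 to 2, and it lies in the reduction you set up. You ask for a two-way restricted $\varphi(x)$ such that for every finite connected $\mathfrak C$ and \emph{every} $c\in C$, $\mathfrak C\models\theta$ iff $\mathfrak C,c\models\varphi$. This target is unattainable even for sentences already of form (2).

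A two-way restricted formula of quantifier depth $d$ is $d$-local, since every bound variable is linked by an edge to an earlier one. So its truth at $c$ depends only on the $d$-neighbourhood of $c$. Take $\theta=\forall x\,\neg R(x,x)$, which is invariant under disjoint unions. Compare the directed path $v_0\to v_1\to\cdots\to v_N$, with $N>d$, with and without a loop at $v_N$. Both are connected and $v_0$ has the same $d$-neighbourhood in each, yet they disagree on $\theta$.

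You notice this yourself (``cannot be done with bounded depth''), but the proposed repair does not address it. Building finite covers or attaching a connecting gadget changes the structure, whereas $\varphi$ must be evaluated at $c$ in the given $\mathfrak C$. Moreover, gluing a gadget yields a connected structure, about which invariance under disjoint unions says nothing, so $\theta$ need not be preserved. Form (2) never requires $\varphi$ to detect, at every point, a defect located elsewhere in the component. The outer $\forall x$ only needs $\varphi$ to fail at \emph{some} point, namely where the defect is.

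The missing idea is to use invariance under disjoint unions a second time, after Otto's Proposition~19 has given you a Boolean combination of sentences $\exists x\,\lambda(x)$ with $\lambda$ local. Up to that point your plan matches the paper's.
\begin{itemize}
\item Write the combination in disjunctive normal form $\bigvee_i(\Gamma_i^\forall\wedge\Gamma_i^\exists)$, discarding finitely unsatisfiable disjuncts. Here $\Gamma_i^\forall$ is a conjunction of sentences $\forall x\,\alpha(x)$ and $\Gamma_i^\exists$ a conjunction of sentences $\exists x\,\beta(x)$, with $\alpha,\beta$ local.
\item The existential parts can be dropped. Suppose $\mathfrak A\models\Gamma_i^\forall$ and choose a finite $\mathfrak B\models\Gamma_i^\forall\wedge\Gamma_i^\exists$. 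Then $\mathfrak A\uplus\mathfrak B\models\theta$, because neighbourhoods in the union are those in the parts. Hence $\mathfrak A\models\theta$ by the union-to-components half of invariance.
\item The disjunction $\bigvee_i\Gamma_i^\forall$ collapses to a single disjunct. Otherwise pick $\mathfrak C_i\models\theta$ with $\mathfrak C_i\not\models\Gamma_i^\forall$ for each $i$. Then $\biguplus_i\mathfrak C_i\models\theta$, so it satisfies some $\Gamma_k^\forall$. That conjunction passes down to $\mathfrak C_k$, a contradiction.
\end{itemize}
Thus $\theta\equiv\forall x\,\alpha(x)$ with $\alpha$ local. Relativizing $\alpha$ to the $r$-neighbourhood of $x$ via edge-guarded quantifiers then gives (2). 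Your direction from 2 to 1 is fine.
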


\begin{proof}
We only show that (1) implies (2); the converse is straightforward.

Recall that A first-order formula $\alpha(x)$ is local if there exists a $r\in\mathbb{N}$ such that for every structure $\mathfrak A$ and $a\in A$, whether $\mathfrak A\models \alpha(a)$ holds depends only on the $r$-neighbourhood of $a$ in $\mathfrak A$.
Observe that over frames every local formula $\alpha(x)$ can be rewritten as a two-way restricted formula, we just need to relativize quantifiers to the $r$-neighbourhood of $x$ for some $r$. Hence it suffices to show the following claim (in the finite):

\medskip

\noindent\textbf{Claim.}
Every first-order sentence invariant under disjoint unions is equivalent to a sentence of the form $\forall x \alpha(x)$, where $\alpha(x)$ is local.

\emph{Proof of the claim.} Suppose that $\theta$ is a first-order sentence invariant under disjoint unions in the finite. Then $\theta$ is in particular \emph{invariant under disjoint copies}, meaning that, for all finite
structures $\mathfrak{A}$ and for all $q\geq 1$,
$
\mathfrak A\models \theta
\Longleftrightarrow
q\cdot \mathfrak A\models \theta
$
where $q\cdot \mathfrak A$ denotes the disjoint union of $q$ copies of $\mathfrak A$.
Hence, by Proposition~19 of \cite{OTTO2004173}, $\theta$ is equivalent over finite structures to a Boolean combination of sentences of the form $\exists x\,\alpha(x)$, where $\alpha(x)$ is local.
Putting this Boolean combination into disjunctive normal form, we may write $\theta$ as
\[
 \bigvee_{i=1}^m (\Gamma_i^\forall \wedge \Gamma_i^\exists),
\]
where $\Gamma_i^\forall$ is a conjunction of formulas of the form $\forall x\alpha(x)$, and $\Gamma_i^\exists$ is a conjunction of formulas of the form $\exists x\,\beta(x)$, with $\alpha(x)$ and $\beta(x)$ local. Removing unsatisfiable disjuncts, we may assume that each
$
\Gamma_i^\forall \wedge \Gamma_i^\exists
$
is satisfiable in a finite structure.

We show that over finite structures, $\bigvee_{i=1}^m (\Gamma_i^\forall \wedge \Gamma_i^\exists)$ is equivalent to
$
\bigvee_{i=1}^m \Gamma_i^\forall.
$
One direction is immediate. For the converse, suppose that a finite structure $\mathfrak A$ satisfies
$
\Gamma_i^\forall
$
for some $i$. Since $\Gamma_i^\forall \wedge \Gamma_i^\exists$ is finitely satisfiable, choose a finite structure $\mathfrak B$ such that
$
\mathfrak B \models \Gamma_i^\forall \wedge \Gamma_i^\exists.
$
Then
$
\mathfrak A \uplus \mathfrak B \models \Gamma_i^\forall \wedge \Gamma_i^\exists,
$
because the universal conjuncts hold in both components, while the existential conjuncts are already witnessed in $\mathfrak B$. Hence
$
\mathfrak A \uplus \mathfrak B \models \bigvee_{i=1}^m (\Gamma_i^\forall \wedge \Gamma_i^\exists).
$
By invariance under disjoint unions, it follows that
$
\mathfrak A \models \bigvee_{i=1}^m (\Gamma_i^\forall \wedge \Gamma_i^\exists).
$
This shows that $\bigvee_{i=1}^m \Gamma_i^\forall$ implies $\bigvee_{i=1}^m (\Gamma_i^\forall \wedge \Gamma_i^\exists)$ over finite structures, and hence they are equivalent over finite structures.

Next we show that $\bigvee_{i=1}^m \Gamma_i^\forall$
is equivalent to one of its disjuncts. Suppose not. Then for each $i\in\{1,\dots,m\}$ there exists a finite structure $\mathfrak C_i$ such that
$\mathfrak C_i \models \bigvee_{j=1}^m \Gamma_j^\forall$
 but 
$\mathfrak C_i \not\models \Gamma_i^\forall.$
By invariance under disjoint unions, the disjoint union
$
 \biguplus_{i=1}^m \mathfrak C_i
$
also satisfies $\bigvee_{i=1}^m \Gamma_i^\forall$.
Thus, for some $k$,
$
 \biguplus_{i=1}^m \mathfrak C_i \models \Gamma_k^\forall.
$
Since $\Gamma_k^\forall$ is a conjunction of universal sentences of the form $\forall x \alpha(x)$, where $\alpha(x)$ is local, it is preserved under taking components of a disjoint union. Therefore
$
\mathfrak C_i \models \Gamma_k^\forall
$
for every $i$, in particular
$
\mathfrak C_k \models \Gamma_k^\forall,
$
contradicting the choice of $\mathfrak C_k$. It follows that
$
\bigvee_{i=1}^m \Gamma_i^\forall
$
is equivalent to $\Gamma_i^\forall$ for some $i$.
Finally, $\Gamma_i^\forall$ can be written in the form
$
\forall x \alpha(x),
$
where $\alpha(x)$ is local. This completes the proof of the claim.
\end{proof}

\vspace{-1.5ex}

\subsection{Bounded morphic images}

Let a \emph{p-formula} be a first-order formula obtained from atomic formulas (including equality statements) using conjunction, disjunction, existential and universal quantifiers, and bounded universal quantifiers of the form $\forall x(Ryx \to\cdots)$. A \emph{p-sentence} is a p-formula that is a sentence. An inductive argument shows that
p-sentences are preserved under taking images of bounded morphisms. In fact,
the converse holds as well, modulo logical equivalence: 

\begin{thmC}[\cite{vanBenthem1985}]
\label{thm:bmi}
A first-order sentence $\phi$ is preserved under surjective bounded morphisms iff $\phi$ is equivalent to a p-sentence.
\end{thmC}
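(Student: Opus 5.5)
The easy direction is an induction on p-formulas, showing that they are preserved under surjective bounded morphisms. Let $f:F\to G$ be a surjective bounded morphism. I prove, for every p-formula $\alpha(\vec x)$ and tuple $\vec w$ in $F$, that $F\models\alpha(\vec w)$ implies $G\models\alpha(f(\vec w))$. The cases go as follows.
\begin{itemize}
\item \emph{Atomic formulas} $R_a(x,y)$ and $x=y$ are preserved because $f$ is a homomorphism. Conjunction and disjunction are immediate.
\item \emph{Existential quantifiers}: we push the witness forward along $f$.
\item \emph{Unrestricted universal quantifiers} use surjectivity. Every $u\in G$ has the form $f(v)$, and $F\models\alpha(v,\vec w)$ gives $G\models\alpha(u,f(\vec w))$.
\item \emph{Bounded universal quantifiers} $\forall x(R_a yx\to\cdots)$ use the back clause. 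If $(f(w),u)\in R'_a$, pick $v$ with $(w,v)\in R_a$ and $f(v)=u$, then apply the induction hypothesis.
\end{itemize}
Negation never appears, so no other cases arise.

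For the hard direction I follow the standard Lyndon/Łoś–Tarski pattern, using compactness and saturated models. Suppose $\phi$ is preserved under surjective bounded morphisms. Let $\Gamma$ be the set of p-sentences entailed by $\phi$. It suffices to show $\Gamma\models\phi$, since compactness then gives a finite conjunction of members of $\Gamma$, which is again a p-sentence, equivalent to $\phi$.

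So take $B\models\Gamma$. I want some $A\models\phi$ such that every p-sentence true in $A$ is true in $B$. To find it, consider $\{\phi\}\cup\{\neg\sigma : \sigma \text{ a p-sentence}, B\not\models\sigma\}$. If this set were inconsistent, compactness would give $\phi\models\sigma_1\lor\dots\lor\sigma_k$. That disjunction is a p-sentence, so it lies in $\Gamma$ and would then hold in $B$, which is a contradiction. Hence the set has a model $A$.

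Next I pass to $\omega$-saturated (or sufficiently saturated) elementary extensions $A^*\succeq A$ and $B^*\succeq B$, and aim to build a surjective bounded morphism $A^*\to B^*$. Then preservation gives $B^*\models\phi$, hence $B\models\phi$.

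This construction is the main obstacle. The natural relation to use is
\[
Z=\{(w,v) : \text{every p-formula } \alpha(x) \text{ true of } w \text{ in } A^* \text{ is true of } v \text{ in } B^*\},
\]
or the analogous relation on finite tuples, but a bounded morphism must be a \emph{function} and must be \emph{surjective}.
\begin{itemize}
\item \emph{Totality and surjectivity.} I plan to use saturation with unrestricted $\forall$ and $\exists$ in p-formulas. The $\exists$ case gives a $Z$-partner in $B^*$ for each element of $A^*$. The $\forall$ case gives, for each element of $B^*$, a $Z$-preimage in $A^*$: if the type of $v$ had no preimage, some p-formula would be falsified, and its negation turned around via $\forall$ would yield a contradiction.
\item \emph{Back clause.} This comes from the bounded universal quantifiers, again via saturation.
\item \emph{Forth clause.} This comes from the fact that atomic formulas and $\exists$ are available.
\end{itemize}
Since $Z$ need not be functional, I would replace $A^*$ by a bisimilar copy, namely a suitable unraveling or product of $A^*$ with $Z$. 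Concretely, I take the frame whose points are pairs $(w,v)\in Z$, with edges determined by those of $A^*$ and $B^*$. I then check three things: its projection onto $A^*$ preserves the truth of $\phi$ (this needs elementary equivalence with $A^*$, obtainable by taking the construction inside a saturated model of the theory, or via a two-sorted compactness argument), and its projection onto $B^*$ is a surjective bounded morphism. Getting the first projection to preserve $\phi$ is the step I expect to require the most care. If it resists, I would fall back on an elementary chain argument alternating between the two structures.
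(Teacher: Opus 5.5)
Your easy direction is exactly the inductive argument the paper alludes to. Your compactness reduction is also the right architecture: find $A\models\phi$ whose p-theory is contained in that of $B$, then push $\phi$ along a surjective bounded morphism between elementary extensions. The paper gives no proof of the converse, only the citation. The gap is in the step you flag, and the repair you propose does not work.

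\emph{First}, the one-variable relation $Z$ cannot simply be thinned to a function. The forth condition of a bounded morphism is the \emph{homomorphism} condition $(w,v)\in R_a\Rightarrow(f(w),f(v))\in R'_a$. It is not the bisimulation zig that your $\exists$-argument yields, and values $f(w),f(v)$ chosen independently inside $Z$ need not be related.

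\emph{Second}, the frame $C$ on the pairs of $Z$ may well map onto both $A^*$ and $B^*$ by surjective bounded morphisms. But $\phi$ only travels \emph{forward} along such maps, so $A^*\models\phi$ tells you nothing about $C$. You would need $\phi$ to be bisimulation invariant, and neither saturation nor two-sorted compactness makes $C\equiv A^*$. Concretely, let $A^*=B^*$ be the two-element frame with the universal relation, and let $\phi$ be the p-sentence $\forall x\forall y\forall z(x=y\lor x=z\lor y=z)$. Everything is finite, hence saturated. $Z$ relates every point to every point, because there is a swapping automorphism. So $C$ has four points and $\phi$ fails in $C$, even though the identity is a perfectly good bounded morphism.

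\emph{Third}, merely $\omega$-saturated $A^*,B^*$ need not admit any surjection $A^*\to B^*$ at all, e.g.\ if $|A^*|<|B^*|$.

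Your fallback, an alternating elementary chain, is the right route. The invariant, however, must be p-preservation for \emph{tuples}: maintain partial maps $\vec a\mapsto\vec b$ such that every p-formula true of $\vec a$ on the $A$-side is true of $\vec b$ on the $B$-side. Each quantifier then does one job:
\begin{itemize}
\item Functionality and the homomorphism condition come for free, because $x_i=x_j$ and $R_ax_ix_j$ are themselves p-formulas.
\item $\exists$ lets you extend the domain.
\item Unrestricted $\forall$ yields preimages, giving surjectivity.
\item The bounded $\forall y(R_ax_iy\to\cdots)$ yields, whenever $(b_i,u)\in R'_a$, an $R_a$-successor of $a_i$ that can be sent to $u$.
\end{itemize}
To make the map total and onto, you have two options. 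One is to use special (or saturated) models of the same cardinality. The other is to build chains $A\preceq A_1\preceq\cdots$ and $B\preceq B_1\preceq\cdots$, where each step adds, by compactness over elementary diagrams, images, preimages and back-witnesses for all current elements at once. The union is then a surjective bounded morphism $A_\omega\to B_\omega$ with $A\preceq A_\omega$ and $B\preceq B_\omega$.

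Finally, the back step can require the empty disjunction, so $\bot$ must count as a p-formula. Otherwise take $\exists x\forall y\,\neg R_axy$: it is preserved under surjective bounded morphisms. Yet every $\bot$-free p-sentence holds on the one-point frame with all relations reflexive, where this sentence fails.
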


We conjecture that this preservation theorem fails in the finite. 
Suggestive evidence for this is the fact
that Lyndon's positivity theorem (linking positivity to preservation under surjective homomorphisms) fails in the finite as well, as was orinally proved by Ajtai and Gurevich \cite{AjtaiGurevich87}.
More precisely, as pointed out in \cite{rossman2008homomorphism}, results of Ajtai and Gurevich \cite{AjtaiGurevich87} establish failure of Lyndon’s Positivity Theorem via a detour through circuit complexity. Namely, they showed that Monotone $\cap$ AC$^0$
$\neq$ Monotone-AC$^0$, that is, there is a (semantically) monotone Boolean function that is computable by AC$^0$ circuits,
but not by (syntactically) monotone AC$^0$
circuits. The failure of Lyndon’s Positivity theorem on finite structures follows via the descriptive complexity correspondence between AC$^0$ and first-order logic.
Recently, a much simpler and direct 
proof of the failure of Lyndon's positivity theorem in the finite was obtained by Kuperberg~\cite{kuperberg2023positive}. We suspect that Kuperberg's proof can be adjusted to prove that Theorem~\ref{thm:bmi} fails in the finite.
\vspace{-0.5ex}

\subsection{Ultrafilter extensions}

When it comes to preservation under taking ultrafilter extensions, the situation is quite different. Over arbitrary structures, there is no nice syntactic characterization of the FO formulas that are preserved under taking ultrafilter extensions. Indeed, preservation under ultrafilter extensions is   
    $\Sigma^1_1$-hard, which precludes the existence of a syntactic characterization in terms of a recursively enumerable class of formulas.
The crux of the $\Sigma^1_1$-hardness proof given in \cite{BalderPhD} lies in the following proposition.

\begin{table}[t]
\centering
\caption{Formula characterizing \((\mathbb{N}, <,\mathsf{Suc})\)}
\hrule
\begin{tabular}{ll}
\(\forall x\,\forall y\,\forall z\,(x < y \land y < z \rightarrow x < z)\)
    & (transitivity) \\

\(\forall x\,\forall y\,(x < y \lor y < x \lor x = y)\)
    & (trichotomy) \\

\(\forall x\,\exists y\,(x < y)\)
    & (unboundedness on the right) \\

\(\exists x\,\forall y\,\neg (y < x)\)
    & (boundedness on the left) \\

\(\exists x\,(x < x) \rightarrow \exists x y\,(x < x \wedge x < y \wedge \neg(y < y))\) & (reflexive before irreflexive) \\
\(\forall x\,\exists y\, \mathsf{Suc}(x,y)\) \\
\(\forall x\,\forall y\,(\mathsf{Suc}(x,y) \rightarrow x < y)\) \\
\(\forall x\,\forall y\,(\mathsf{Suc}(x,y) \rightarrow \forall z\,(x < z \rightarrow (y = z \lor y < z)))\)
\end{tabular}
\hrule
\label{tab:ue-encoding}
\end{table}

\begin{prop}[\cite{BalderPhD}]
Let \(\vartheta\) be the conjunction of the formulas displayed in Table~\ref{tab:ue-encoding}. For all models \(M\),
\[
M \cong (\mathbb{N}, <, \mathsf{Suc}) \quad\text{iff}\quad
M \models \vartheta\ \text{and}\ 
\mathrm{ue}M \not\models \vartheta.
\]
\end{prop}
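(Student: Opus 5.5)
The plan is to prove the two directions separately, with almost all the work in showing that $\mathrm{ue}M\not\models\vartheta$ when $M\cong(\mathbb{N},<,\mathsf{Suc})$. I read $\mathrm{ue}$ of a structure with two binary relations as applying the definition of ultrafilter extension relationwise. Thus $u<^{ue}v$ iff for all $X\in v$ we have $\{w\mid \exists x\in X,\ w<x\}\in u$, and similarly for $\mathsf{Suc}$.

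\emph{Left to right.} Checking $(\mathbb{N},<,\mathsf{Suc})\models\vartheta$ is routine; the reflexivity clause holds vacuously because $<$ is irreflexive. To refute $\vartheta$ in $\mathrm{ue}(\mathbb{N},<,\mathsf{Suc})$, I would compute the structure of $\mathrm{ue}$. Principal ultrafilters reproduce $\mathbb{N}$. For a non-principal $v$ and any $u$, every $X\in v$ is infinite, so its $<$-predecessor set is all of $\mathbb{N}$, which lies in $u$; hence $u<^{ue}v$ for every $u$. In particular each non-principal ultrafilter is $<^{ue}$-reflexive. Conversely, for principal $v=\hat n$ we have $u<^{ue}\hat n$ iff $\{0,\dots,n-1\}\in u$, so non-principal $u$ are never below principal ones. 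The reflexive points therefore form a final block, the ``reflexive before irreflexive'' conjunct fails, and we get $\mathrm{ue}M\not\models\vartheta$. (Alternatively, trichotomy-type conditions could be checked, but this conjunct is the one designed to break.)

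\emph{Right to left.} Assume $M\models\vartheta$ and $\mathrm{ue}M\not\models\vartheta$; I claim $M\cong\mathbb{N}$. The first five conjuncts make $<$ a total preorder-like relation with a least element and no maximum, and the reflexivity conjunct prevents a reflexive point from preceding an irreflexive one. The $\mathsf{Suc}$ conjuncts make $\mathsf{Suc}$ an immediate-successor function, so the initial segment $0,\mathsf{Suc}(0),\dots$ is a copy of $\mathbb{N}$, and everything else lies above it or is reflexive.

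The key step, and the main obstacle, is to show that if $M\not\cong\mathbb{N}$ then $\mathrm{ue}M\models\vartheta$, using the fact that universal-existential properties definable this way are preserved under $\mathrm{ue}$ except where the structure ``closes off.'' My first attempt would be a case split on where $M$ deviates from $\mathbb{N}$:
\begin{enumerate}
\item $M$ contains reflexive points. These must sit at the top, and I would argue that $\mathrm{ue}M$ places all new ultrafilters suitably.
\item $M$ contains elements above the $\omega$-chain, all irreflexive. Then $M$ looks like $\omega+L$ for some nonempty linear order $L$ with successors, and the non-principal ultrafilters concentrating on $\omega$ sit strictly below those concentrating on $L$.
\end{enumerate}
In both cases one verifies each conjunct in $\mathrm{ue}M$ directly, via ultrafilter computations like the one above. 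I expect the irreflexive-but-nonstandard case to be delicate: one must show that non-principal ultrafilters on $L$-type cofinal parts are irreflexive or else preceded appropriately, and that $\mathsf{Suc}^{ue}$ still satisfies the immediate-successor conjunct. If the direct case analysis gets out of hand, I would fall back on following the argument of \cite{BalderPhD}.
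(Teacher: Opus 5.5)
Your left-to-right direction is correct, and it reads conjunct~5 correctly: whenever a reflexive point exists, some reflexive point must lie $<$-below an irreflexive one. This fails in $\mathrm{ue}(\mathbb{N},<,\mathsf{Suc})$ because the reflexive (non-principal) points form a final block. In the converse, however, you read the same conjunct backwards (``prevents a reflexive point from preceding an irreflexive one''). So the proposal is internally inconsistent, and your case~1 starts from a false premise: in a model of $\vartheta$ the reflexive points do \emph{not} all sit at the top. Read correctly, case~1 is immediate. If $x<x$, $x<y$ and $\neg(y<y)$ in $M$, then in $\mathrm{ue}M$ the point $\hat x$ is reflexive, $\hat x<^{ue}\hat y$, and $\hat y$ is irreflexive. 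Case~2 needs no analysis of ultrafilters on cofinal parts of $L$ either. Let $\omega_0=\{0,\mathsf{Suc}(0),\mathsf{Suc}(\mathsf{Suc}(0)),\dots\}$ and let $u$ be any non-principal ultrafilter containing $\omega_0$. Then $u$ is reflexive, since every $X\in u$ meets $\omega_0$ in an infinite set whose $<$-downset contains $\omega_0$. Moreover $u<^{ue}\hat y$ for any $y\notin\omega_0$, because such $y$ lie above $\omega_0$ by trichotomy and the immediate-successor clause, and $\hat y$ is irreflexive. Also note that $\omega_0$ is a copy of $\mathbb{N}$ only when $M$ is irreflexive. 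At reflexive points $\mathsf{Suc}$ need not be functional, and the chain can get stuck in a reflexive cluster (e.g.\ $0<c<d<e$ with $c,e$ reflexive), so that claim belongs inside case~2 only.

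The real gap is that the part of the converse that actually needs proof is left to unspecified ``direct computations'' or to the thesis. That part is showing that the \emph{other seven} conjuncts of $\vartheta$ hold in $\mathrm{ue}M$. The heuristic you invoke is not a fact. Conjunct~5 is itself equivalent to a $\forall\exists$ sentence and is exactly the one that fails, and even the universal sentence $\forall x\,\neg(x<x)$ fails in $\mathrm{ue}(\mathbb{N},<,\mathsf{Suc})$. What you need is that each remaining conjunct is preserved by ultrafilter extensions on arbitrary structures.
\begin{itemize}
\item Transitivity, seriality of $<$ and of $\mathsf{Suc}$, the inclusion $\mathsf{Suc}\subseteq{<}$, and the immediate-successor clause are first-order correspondents of Sahlqvist formulas, hence carry over to $\mathrm{ue}M$. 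The last one corresponds to $\langle <\rangle q\to[\mathsf{Suc}](q\lor\langle <\rangle q)$.
\item Boundedness on the left is witnessed by $\hat 0$, since $u<^{ue}\hat 0$ would force $\{w\mid w<0\}=\emptyset\in u$.
\item Trichotomy needs a short separate argument. Suppose $u\neq v$ were incomparable. Pick $X\in v$ witnessing $\neg(u<^{ue}v)$, $Y\in u$ witnessing $\neg(v<^{ue}u)$, and $Z\in u$ with $Z\notin v$. Then some $y\in Y\cap Z$ has no $<$-successor in $X$, and some $x\in X\setminus Z$ has no $<$-successor in $Y$. So $x\neq y$ are $<$-incomparable in $M$, a contradiction.
\end{itemize}
With these preservation facts plus the short treatment of conjunct~5 above, the converse goes through; as written, your proposal does not establish it.
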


In other words, $(\mathbb{N},<,\mathsf{Suc})$ is  uniquely characterized (up to isomorphism) 
by being a counterexample to preservation under ultrafilter extensions
for $\vartheta$. 

\begin{cor}
\label{cor:ue}
Let $N$ be a unary predicate, and 
let \(\varphi\) be any first-order formula preserved under ultrafilter
extensions (possibly containing the predicates $<$, $\mathsf{Suc}$, and $N$). The following are equivalent:
\begin{enumerate}
    \item \(\varphi\) has a model whose submodel defined by \(N\) is an expansion of
    \((\mathbb{N}, <, \mathsf{Suc})\);
    \item \(\varphi \wedge \vartheta\) is not preserved under ultrafilter extensions.
\end{enumerate}
where $\vartheta^N$ is the result of relativizing all quantifiers in $\vartheta$ by $N$.
\end{cor}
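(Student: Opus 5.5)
The statement involves $\vartheta^N$, so I read item~2 as saying that $\varphi\wedge\vartheta^N$ is not preserved under ultrafilter extensions. The plan is to reduce both directions to the preceding Proposition through one structural lemma. That lemma says that relativizing to $N$ commutes with taking ultrafilter extensions.

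Here, for a structure $M$ with unary predicate $N$, the ultrafilter extension interprets $N$ as $\{u\in Uf(M)\mid N\in u\}$, as usual. For a model $M$, write $M_N$ for the $\{<,\mathsf{Suc}\}$-reduct of the submodel of $M$ induced by $N$. The lemma I would prove is
\[
(\mathrm{ue}M)_N \;\cong\; \mathrm{ue}(M_N).
\]
The bijection sends an ultrafilter $u$ on $M$ with $N\in u$ to its trace $\{X\cap N^M\mid X\in u\}$. The inverse sends an ultrafilter $v$ on $N^M$ to the ultrafilter on $M$ that it generates.

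To check that this bijection preserves the relation $R\in\{<,\mathsf{Suc}\}$, let $u,v$ be ultrafilters that both contain $N$. For $X\subseteq M$, write $\langle R\rangle X=\{w\mid \exists x\in X\, R(w,x)\}$. First, since $N\in v$, we have $X\in v$ iff $X\cap N\in v$. Second, $\langle R\rangle(X\cap N)\subseteq \langle R\rangle X$, so it suffices to test the defining condition of $R^{ue}$ on sets $X\subseteq N$. Third, for such $X$, the preimage computed in $M_N$ is $\langle R\rangle X\cap N$, and since $N\in u$ this set lies in $u$ iff $\langle R\rangle X$ does. I expect this routine but fiddly verification to be the main obstacle. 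Everything else is bookkeeping.

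Given the lemma, the standard relativization property $K\models\vartheta^N \iff K_N\models\vartheta$ (for $K_N$ nonempty) yields, for every $M$,
\[
\mathrm{ue}M\models\vartheta^N \iff \mathrm{ue}(M_N)\models\vartheta .
\]
The nonemptiness caveat is harmless: $\vartheta$ forces a nonempty domain, and if $N^M$ is empty then no ultrafilter contains $N$, so $\vartheta^N$ fails in both $M$ and $\mathrm{ue}M$.

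For (1)$\Rightarrow$(2), take $M\models\varphi$ whose $N$-part expands $(\mathbb N,<,\mathsf{Suc})$. Then $M_N\cong(\mathbb N,<,\mathsf{Suc})$, so $M_N\models\vartheta$ and hence $M\models\varphi\wedge\vartheta^N$. By the preceding Proposition, $\mathrm{ue}(M_N)\not\models\vartheta$. By the lemma, $\mathrm{ue}M\not\models\vartheta^N$. So $M$ witnesses that $\varphi\wedge\vartheta^N$ is not preserved under ultrafilter extensions.

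For (2)$\Rightarrow$(1), take $M\models\varphi\wedge\vartheta^N$ with $\mathrm{ue}M\not\models\varphi\wedge\vartheta^N$. Since $\varphi$ is preserved under ultrafilter extensions, $\mathrm{ue}M\models\varphi$, so $\mathrm{ue}M\not\models\vartheta^N$. By the lemma, $\mathrm{ue}(M_N)\not\models\vartheta$, while $M_N\models\vartheta$. The preceding Proposition then gives $M_N\cong(\mathbb N,<,\mathsf{Suc})$. Hence $M$ is a model of $\varphi$ whose $N$-submodel is an expansion of $(\mathbb N,<,\mathsf{Suc})$.
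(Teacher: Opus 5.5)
Your proof is correct and follows the route the paper intends. The paper states the corollary without proof, as an immediate consequence of the preceding Proposition, and the one step it leaves implicit is exactly your lemma $(\mathrm{ue}M)_N \cong \mathrm{ue}(M_N)$, which you verify correctly. Your reading of item~2 as $\varphi\wedge\vartheta^N$ is the intended one, and you rightly use the ue-preservation of $\varphi$ only in the direction (2)$\Rightarrow$(1).
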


Corollary~\ref{cor:ue} can be used to prove that the set of first-order formulas
preserved under ultrafilter extensions is \(\Pi^1_1\)-hard by encoding the \(\Sigma^1_1\)-complete recurrent tiling problem~\cite{Harel85:recurring}.
We note that the unary relation symbol $N$ used in the above proof can easily be replaced by a binary relation. We leave it as an open question whether a single binary relation suffices for $\Sigma^1_1$-hardness.

Over finite structures, on the other hand, preservation under ultrafilter extensions trivializes, since the ultrafilter extension of a finite structure is always isomorphic to the structure in question.

\subsection{Where we stand}
Our analysis of transfer to the finite of preservation theorems for first-order logic has focused on semantic properties that are relevant to modal logic. The results can be described as a mixed bag. We did establish a new positive transfer result, namely for the Bisimulation Safety Theorem. On the other hand, 
most preservation theorem pertaining to operation on frames do not transfer to the finite. 
As in the modal case, we do not have an all-encompassing criterion for when  transfer to the finite is possible. 

In this final part of Section~\ref{sec:FO}, we just offer two general lines of thought. They apply to the frame setting  considered in Section~\ref{sec:FO}, but also to the model setting of Section~\ref{sec:modal}.

\vspace{1ex}

First, consider the \emph{complexity of defining} the semantic notions that we have studied, measured in the usual way in the Arithmetical Hierarchy. In the first-order case, consider the typical example of the  preservation theorem for submodels (but many examples would do). The condition that a formula $\phi$ is preserved under taking submodels can be stated equivalently as the validity of some effectively associated formula $\phi'$ describing the preservation. Since validity in FO is $\Sigma_0^{1}$, so is the preservation condition. The preservation theorem now says that this is equivalent to the existence of a formula $\psi$ with universal syntax such that $\phi \leftrightarrow \psi$ is valid. This second property is easily seen to be $\Sigma_0^{1}$ as well, and thus, the two sides in the preservation theorem match qua complexity of definition.

But now consider the restriction to finite models. Validity is obviously $\Pi_0^{1}$ there, but not $\Sigma_0^{1}$ (since it is undecidable by Trahtenbrot's Theorem). If we now match up the two sides of the same preservation theorem with the same formulas as before, the preservation condition restated as a validity is  $\Pi_0^{1}$. However, the claim that there exists a equivalent universal formula  introduces an existential quantifier that is in general irreducible, so we end up with the different complexity $\Sigma_0^{2}$. This mismatch alone may already explain why so many classical preservation theorems for first-order logic fail in the finite. 

But then, why do at least some preservation results transfer? At least in the prominent  cases that we mentioned, the point is this. The preservation theorem in the finite has an \emph{effective bound} on the potential definitions in the semantic format. For instance, in van Benthem's theorem the  equivalent modal formula can be taken to have a modal depth at most exponentially larger than the quantifier depth of the given first-order formula. But with such an effective bound, the  existential quantifier in the above $\Sigma_0^{2}$ form reduces to a finite disjunction of candidates, and we end up with $\Pi_0^{1}$ after all. The two sides  match. 

Could there be a  general result  that transfer of preservation theorems to the finite implies the existence of effective bounds? We leave this as an interesting  question to pursue.

\vspace{1ex}

Next, here is another general perspective  beyond the case-by-case analysis that we have given. There may also be a more positive challenge behind our negative results in the preceding sections. The noted failures applied to \emph{one particular formulation}  of the relevant preservation property that works for first-order logic. But as is shown in \cite{Barwise_vanBenthem_1999} for the case of Craig interpolation, transfer of meta-properties may depend on their formulation. Formulations that are equivalent for first-order logic may come apart for other  systems. 

To illustrate this way of thinking in the present setting, take the example  of Lyndon-style  preservation theorems for first-order logic (again, many examples would do).  The Lyndon theorem in first-order logic states that over arbitrary structures, first-order sentences with the relevant positive syntax  are characterized by being preserved under surjective homomorphisms. 
As is shown with a concrete language example in \cite{kuperberg2023positive}, the Lyndon theorem fails in the finite: the same semantic property no longer characterizes this formula class in the finite. 

But this negative observation leaves open the question whether, on finite models, this syntactic class is still semantically characterized by \emph{something stronger}, viz. a conjunction of (a) preservation under surjective homomorphisms, plus (b) some further natural semantic property guaranteed by the special positive syntax. In this way, analyzing know proofs of failure of classical preservation theorems in the finite might contain cues for extracting positive preservation theorems in the finite after all.\footnote{Some care is needed, since, to maintain the complexity match that we have discussed above, the added condition itself may have to be of $\Sigma_0^{2}$-complexity.}

Vice versa, looking at the Lyndon theorem in the opposite direction, one could also ask for a positive characterization of preservation under surjective homomorphisms in the finite in terms of generalizing the syntax used in the original preservation theorem.

\section{Computational aspects}
\label{sec:computational-aspects}

\medskip

What follows here is a brief exploration of some complexity-theoretic aspects of the preservation properties considered in this paper.  It can be treated as a digression.
\vspace{-2ex}

\subsection{The first-order case}

Consider the computational complexity of checking whether given formulas possess the semantic preservation properties  we have studied. We start with the first-order setting of Section~\ref{sec:FO}, turning to the modal setting of Section~\ref{sec:modal}  afterwards.

As in the modal case, many preservation properties can be encoded effectively as valid implications of a suitable sort, now inside the first-order language. This is immediate for preservation under homomorphisms using a simple syntactic relativization to a new unary predicate for preservation under submodels, and it can be done even for invariance under bisimulations, coding up the setting for a bisimulation in an obvious first-order manner.

By itself, the preceding observations tell us that testing for such properties is semi-decidable (RE), by the completeness theorem for first-order logic.
But what more can we say about the complexity?

We start with a result from \cite{ExploringLogicalDynamics}.
\begin{thm} It is undecidable if a given first-order formula is invariant for bisimulation.
\end{thm}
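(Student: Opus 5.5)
We will reduce first-order validity (or unsatisfiability) to bisimulation invariance, which gives undecidability via Church's theorem. Given an arbitrary first-order sentence $\alpha$ over some signature, we build effectively a formula $\phi_\alpha(x)$ in the bisimulation signature (unary predicates and binary relations) such that $\phi_\alpha$ is bisimulation-invariant if and only if $\alpha$ is unsatisfiable. Since unsatisfiability of first-order sentences is undecidable (indeed co-RE-complete), so is bisimulation invariance. As a preliminary step, we translate $\alpha$ into a sentence $\alpha'$ using a single binary relation $R$. This is the standard reduction preserving satisfiability. We then rename symbols so that $\alpha'$ uses a binary relation $S$ that does not otherwise occur in our modal signature, or simply use $R$ itself.

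The candidate construction is
\[
\phi_\alpha(x) := \alpha' \land \chi(x),
\]
where $\chi(x)$ is a fixed formula that is \emph{not} bisimulation-invariant, say $\chi(x) := \exists y\exists z(R_a(x,y)\land R_a(x,z)\land y\neq z)$ for an atomic program $a$ not occurring in $\alpha'$. If $\alpha$ is unsatisfiable, then $\phi_\alpha$ is equivalent to $\bot$ and hence trivially invariant. Conversely, suppose $\mathfrak A\models\alpha'$. We take $M$ to be $\mathfrak A$ extended with a fresh point $w$ having two $a$-successors, so that $M,w\models\phi_\alpha$. We take $M'$ to be the same structure except that $w$ has a single $a$-successor. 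Then $(M,w)$ and $(M',w)$ are bisimilar but disagree on $\phi_\alpha$.

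The main obstacle is making sure that adding $w$ and its successors does not destroy the truth of $\alpha'$, since $\alpha'$ quantifies over the whole domain. We handle this by relativizing $\alpha'$ to a fresh unary predicate $D$. That is, we use $\alpha'^D$, with all quantifiers bounded to $D$, and let $D$ be interpreted as the original domain of $\mathfrak A$ while leaving the new points outside $D$. The relativized sentence is then unaffected by the extra points. It remains to check that the bisimulation above respects $D$ and $S$ or $R$ as well, so that the models are genuinely bisimilar in the full signature. For this we let the bisimulation be the identity on $\mathfrak A$ together with linking $w$ to $w$ and both $a$-successors to the single one. These successors are placed outside $D$ and carry no further edges, so the back and forth clauses and agreement on unary predicates are immediate. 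This completes the reduction, and hence undecidability follows.
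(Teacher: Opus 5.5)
Your route is the same as the paper's. You conjoin a relativized copy of the input sentence with a fixed formula that is not bisimulation-invariant. In the satisfiable case you extend the witnessing bisimulation by the identity on a model of the input sentence. Your bisimulation check is fine.

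\textbf{The gap.} The other direction breaks once you relativize. You justified ``if $\alpha$ is unsatisfiable, then $\phi_\alpha$ is equivalent to $\bot$'' for $\alpha'\land\chi(x)$, but you never re-checked it for $\alpha'^D\land\chi(x)$, and there it is false. Nothing forces $D$ to be nonempty. With $D=\emptyset$, $\alpha'^D$ is evaluated as if in the empty structure, where every sentence whose outermost quantifier is universal is vacuously true. Take $\alpha'$ to be $\forall y\,(S(y,y)\land\neg S(y,y))$, which is unsatisfiable; since your translation $\alpha\mapsto\alpha'$ is unspecified, nothing excludes this shape. Then $\alpha'^D$ holds in any model with $D$ empty. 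Your own pair of models, taken with $D=\emptyset$ and $w$ having two and one $a$-successors respectively, then shows that $\phi_\alpha$ is \emph{not} bisimulation-invariant. So the reduction as stated sends some unsatisfiable $\alpha$ to a non-invariant formula, and the equivalence you need fails.

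\textbf{The fix.} Add the conjunct $\exists y\,D(y)$; this is exactly what $\exists x\,Px$ does in conjunct (i) of the paper's formula \#($\varphi$).
\begin{itemize}
\item If $\alpha$ is unsatisfiable: any model of $\exists y\,D(y)\land\alpha'^D$ gives a model of $\alpha'$, namely the reduct of the substructure induced by the nonempty set $D$ (the signature is relational). So $\phi_\alpha$ is again equivalent to $\bot$.
\item If $\alpha$ is satisfiable: the new conjunct holds in both of your models, because $D$ is interpreted as the nonempty domain of $\mathfrak A$.
\end{itemize}
With this repair your argument is correct and essentially coincides with the paper's. The remaining differences are cosmetic: you leave $\chi$ unrelativized instead of relativizing it to a second partition predicate, and you reduce from unsatisfiability rather than validity.
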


We present a slightly streamlined proof which  allows for further generalization. In what follows, for a pair of structures $M,N$, we will denote
by $[M,N]$ the ``model pair'' consisting of the disjoint union of $M$ and $N$ expanded with fresh unary predicates $P,Q$ denoting the two 
respective parts of the disjoint union. This is a standard construction in model theory.

\begin{proof} Take any first-order formula $\alpha$ which is not invariant for bisimulation (for instance, $\exists x Px$). Now choose a pointed model $(M, s)$ where $\alpha$ is true, and a bisimulation $Z$ from $(M, s)$ to a model $(N, t)$ with $\alpha$ false. Next, consider any first-order formula $\varphi$ whose vocabulary is disjoint from that of $\alpha$, choose two  new unary predicate letters $P, Q$ and define a new formula \#($\varphi$) with one free variable $u$ as the conjunction of the following three formulas:

\begin{enumerate}
    \item[(i)] $\exists x Px \land \exists x Qx \land \forall x(Px \lor Qx) \land \neg\exists x (Px \land Qx)$, \vspace{1mm}
\item[(ii)] $(\neg \varphi(x))^{P}$,\vspace{1mm}
\item[(iii)] $(\alpha(u))^{Q}$,
\end{enumerate}

\noindent where $\chi^P$ denotes the result of relativizing all quantifiers in $\chi$ by $P$, i.e., replacing $\exists x$ and $\forall x$ by
$\exists x(Px\land\cdots)$ and $\forall x(Px\to\cdots)$, respectively. The following equivalence holds:

\medskip

{\bf Claim} \quad $\varphi$ is valid if and only if \#($\varphi$) is invariant for bisimulation.

\medskip

From left to right, if $\alpha$ is valid, then \#($\varphi$) is equivalent to a contradiction $\bot$ which is trivially invariant for bisimulation. From right to left, suppose that $\varphi$ is not valid, and has some countermodel $K$. Now consider the two model pairs $[K, M], [K, N]$. We extend the given bisimulation $Z$ with the identity map on the domain of $K$ to obtain a bisimulation between $[K,M]$ and $[K,N]$. Now clearly, the model $[K, M]$ satisfies the above conjuncts (i), (ii). But $[K, N]$ does not since conjunct (iii) is false there. This shows that the formula \#($\varphi$) is not invariant for bisimulation.

Thus we have effectively reduced the validity problem for first-order logic to the problem of testing for bisimulation invariance. It follows that the latter problem is undecidable. \end{proof}

Inspecting this proof, only few assumptions are made, that can be stated as follows.

\begin{enumerate}
    \item[(a)] The preservation property to be tested does not hold for all formulas, and\vspace{1mm}
    \item[(b)] If the relevant semantic `transfer relation' holds between two models $M, N$, then 
it also holds between model pairs of the form $[K, M]$ and $[K, N]$.
\end{enumerate}

Let us now consider 
\emph{abstract preservation properties}
specified by a class $\mathcal{K}$ consisting 
of pairs of structures with the same signature (but where different pairs may have different signatures). We say
that a FO sentence $\phi$ \emph{satisfies} such an abstract
preservation property if, 
for all $(M,N)\in\mathcal{K}$ in the signature of $\phi$, it holds that 
$M\models\phi$ implies $N\models\phi$.
We say that $\mathcal{K}$ is
\emph{non-trivial} if not every
FO sentence satisfies it (corresponding to condition (a) above), and
that $\mathcal{K}$ is 
\emph{context-independent} if 
for all $(M,N)\in\mathcal{K}$
and for all structures $K$, also
$([K,M],[K,N])\in\mathcal{K}$ (corresponding to condition (b) above).

By the same proof as above, we then 
obtain the following more general undecidability result, inspired by Rice's theorem in recursion theory:

\begin{thm} For each abstract preservation properties that are non-trivial and context-independent,
testing if a FO sentence satisfies it is undecidable. \end{thm}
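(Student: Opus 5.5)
The plan is to reduce first-order validity to the problem of testing the abstract preservation property $\mathcal{K}$, mimicking the bisimulation-invariance proof above. Undecidability then follows from Church's theorem.

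First, use non-triviality to fix a sentence $\alpha$ and a pair $(M,N)\in\mathcal{K}$ in the signature of $\alpha$ with $M\models\alpha$ and $N\not\models\alpha$. This pair is fixed once and for all and need not be computable; the reduction only uses the sentence $\alpha$ as a parameter.

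Given an arbitrary FO sentence $\varphi$, rename symbols so that its vocabulary is disjoint from that of $\alpha$ and from two fresh unary predicates $P,Q$. Then define $\#(\varphi)$ as the conjunction of:
\begin{enumerate}
\item[(i)] the partition axiom $\exists x Px\land\exists x Qx\land\forall x(Px\lor Qx)\land\neg\exists x(Px\land Qx)$;
\item[(ii)] $(\neg\varphi)^P$;
\item[(iii)] $\alpha^Q$.
\end{enumerate}
This map is clearly computable.

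We then verify the claim that $\varphi$ is valid if and only if $\#(\varphi)$ satisfies $\mathcal{K}$.

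\emph{Left to right.} If $\varphi$ is valid, then $\#(\varphi)$ is unsatisfiable, because any model of (i) has a nonempty $P$-part, and that part must then satisfy $\neg\varphi$. An unsatisfiable sentence trivially satisfies $M'\models\#(\varphi)\Rightarrow N'\models\#(\varphi)$ for every pair.

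\emph{Right to left.} Suppose $K\models\neg\varphi$. Context-independence gives $([K,M],[K,N])\in\mathcal{K}$. The structure $[K,M]$ satisfies (i), (ii) and (iii). The structure $[K,N]$ fails (iii), since $N\not\models\alpha$. Hence $\#(\varphi)$ does not satisfy $\mathcal{K}$.

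The main delicate point is bookkeeping about signatures. The model pair $[K,M]$ must be a structure in the signature of $\#(\varphi)$, so each of $K$, $M$, $N$ must be given some interpretation of the other structure's symbols.
\begin{itemize}
\item I would read $[K,M]$ with the symbols of $\varphi$ interpreted as in $K$ and empty on the $M$-part.
\item Likewise, the symbols of $\alpha$ are interpreted as in $M$ and empty on the $K$-part.
\item Relativization guarantees that (ii) and (iii) only see the intended parts. This holds provided $K\models\neg\varphi$ is evaluated on $K$ alone, which is exactly what $(\neg\varphi)^P$ checks.
\end{itemize}
I would also make sure the definition of context-independence is applied with $K$ in the disjoint vocabulary, as the statement permits. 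The empty-domain convention is avoided because $K$, $M$ and $N$ are nonempty.
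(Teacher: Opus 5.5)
Your proposal is correct and follows the paper's own argument: non-triviality supplies a fixed witness $\alpha$ with a pair $(M,N)\in\mathcal{K}$; you map $\varphi$ to the conjunction of the partition axiom, $(\neg\varphi)^P$ and $\alpha^Q$; and you apply context-independence to $([K,M],[K,N])$ for a countermodel $K$ of $\varphi$, which reduces first-order validity to testing $\mathcal{K}$. Your bookkeeping fills in details the paper leaves implicit, namely interpreting each vocabulary as empty on the other part of the model pair and noting that only $\alpha$, not $M,N$, needs to be computable.
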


Examples of abstract preservation properties include: bisimulation-preservation, preservation under submodels, and preservation under homomorphic images. We leave it to the reader to verify that each of these is both non-trivial and
context-independent, as defined above.
It follows that
these are all undecidable. On the other
hand, for example, preservation 
under ultrafilter extensions does not naturally fit under the above regime (since taking the ultrafilter extension of a model
pair $[K,M]$ requires taking the ultrafilter extension of $K$) and
preservation under disjoint unions does not either (since this involves a relation between a set of structures and a structure, instead of a relationship between individual structures). We leave it for future research to find more general ways to phrase the result so that it also encompasses these cases.

In a sense, the preceding analysis has inverted the earlier style of thinking where preservation properties can be effectively encoded as first-order consequences. We have shown how, conversely,  arbitrary first-order consequences can be effectively encoded as preservation questions. Thus deciding these preservation properties inherits the complexity of first-order logic as a whole.

\vspace{-1ex}

\subsection{The modal case} 

Let us now turn to the more restricted modal realm. In Section~\ref{sec:modal}, we have seen many examples of semantic properties of modal formulas that can be effectively decided, due to the fact that they reduce  to testing a modal validity. 
In fact, it shows that such properties can be decided in PSpace, which is the computational complexity of validity, or satisfiability, for the basic (multi-)modal language. 
Below, we will provide a converse of sorts: we show that, for many semantic properties of modal formulas, testing them is as hard as testing for a modal validity.

\begin{defi} 
A \emph{semantic property} of  modal formulas is a property satisfying: 
\begin{enumerate} \item If two modal formulas $\phi, \psi$ are equivalent then $\phi$ has the property iff $\psi$ does. 
\item $\bot$ has the property. 
\item There are formulas without the property (``non-triviality'') 
\item Whenever two modal formulas $\phi,\psi$ have disjoint signatures and are both satisfiable, then $\phi\land\psi$ has the property if and only if both $\phi$ and $\psi$ have the property.
    \end{enumerate}
\end{defi}
Here, by the \emph{signature} of a modal formula we mean the set of propositional variables $p$ occurring in the formula as well as the set of 
modalities $a$ for which $\langle a\rangle$ or $[a]$ occurs in the formula.
It may be verified that all the semantic properties of modal formulas we have considered in Section~\ref{sec:modal} fit this template.

\begin{thm}
    Every semantic property of modal formulas is PSpace-hard to test.
\end{thm}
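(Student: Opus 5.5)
We reduce modal satisfiability, which is PSpace-complete for the basic (multi-)modal language, to testing the property. Since PSpace is closed under complement, PSpace-hardness of satisfiability gives PSpace-hardness of validity and of unsatisfiability as well, so we may reduce from either. Fix a semantic property $\mathcal{P}$. By non-triviality (condition 3) choose once and for all a formula $\alpha$ without $\mathcal{P}$. Since $\bot$ has $\mathcal{P}$ (condition 2) and $\mathcal{P}$ is invariant under equivalence (condition 1), $\alpha$ is satisfiable. This $\alpha$ is a constant of the reduction and does not depend on the input.

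Given an input formula $\varphi$, first rename its propositional variables and modalities so that its signature is disjoint from that of $\alpha$. Renaming is a polynomial-time operation and preserves satisfiability. We output $\varphi\land\alpha$ and aim to prove the following claim.

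\medskip
\noindent\textbf{Claim.} $\varphi\land\alpha$ has $\mathcal{P}$ if and only if $\varphi$ is unsatisfiable.
\medskip

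If $\varphi$ is unsatisfiable, then $\varphi\land\alpha$ is equivalent to $\bot$. Hence it has $\mathcal{P}$ by conditions 1 and 2.

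If $\varphi$ is satisfiable, then $\varphi$ and $\alpha$ are both satisfiable and have disjoint signatures. Condition 4 then says that $\varphi\land\alpha$ has $\mathcal{P}$ iff both conjuncts do. Since $\alpha$ lacks $\mathcal{P}$, the conjunction lacks $\mathcal{P}$. This proves the claim.

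The map $\varphi\mapsto\varphi\land\alpha$ is therefore a polynomial-time (indeed logspace) many-one reduction from modal unsatisfiability to testing $\mathcal{P}$, which gives PSpace-hardness.

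The main obstacle is the signature-disjointness requirement in condition 4, which is the only step needing care. Renaming must not collapse distinct symbols of $\varphi$, and the fresh symbols must avoid those of $\alpha$. Both are easy because $\alpha$ is fixed and finite. One point should be checked: PSpace-hardness must hold for the fragment of inputs actually produced. If $\alpha$ uses a modality $a$, then $\varphi$ must be renamed to use other modalities, so we need modal satisfiability to remain PSpace-hard when restricted to modalities outside a fixed finite set. It does, since a single modality already suffices for PSpace-hardness (Ladner).
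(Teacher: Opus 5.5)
Your proof is correct and is exactly the paper's one-line reduction: fix a (necessarily satisfiable) formula lacking the property, conjoin it with the signature-disjoint input, and apply conditions 1, 2 and 4. Your Claim also has the right polarity ($\varphi\land\alpha$ has $\mathcal{P}$ iff $\varphi$ is unsatisfiable), whereas the paper's sketch states the equivalence with the direction flipped; this is harmless, since PSpace is closed under complement.
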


The proof is a one-line reduction from satisfiability: fix any formula $\chi$ that fails to have the 
property. Note that, by conditions 1 and 2, $\chi$ must be satisfiable. Then, a modal formula $\phi$, whose signature we may assume to be  disjoint from that of $\chi$, is
satisfiable iff $\chi\land \phi$ has the property in question.

\begin{rem}     The above proof uses multiple modalities. It
    could be modified to apply to  uni-modal formulas at the cost of complicating the definition of ``semantic property''. One way to do this is to (i) require that a formula $\phi$ has the property if and only if its syntactic relativisation $\phi^p$ does, where $p$ is a propositional variable not occurring $\phi$, and (ii) replace the use of the conjunction $\phi\land\psi$ by $\phi^p\land \psi^q$ (for fresh $p,q$), which ensures, intuitively, that the two conjuncts cannot ``interact'' with each other. \end{rem}

This concludes our brief exploration of complexity issues for preservation and related semantic properties. We leave more general results this to further investigation.

\section{Modal and first-order definability on  finite frames}
\label{sec:GbTh}

Finally, to broaden the scope of our comparative analysis for modal and first-order logic and provide some more perspective, we have added a further section on transfer to the finite of some other basic results from modal definability theory on frames. This will also highlight other basic features of working in the setting of finite frames, such as the many connections between logical languages and computational complexity classes.

\subsection{Goldblatt-Thomason theorems (characterizing modal definability)}

Recall that a
class $K$ of frames is 
\emph{modally definable} if there is a set  $\Phi$ of modal formulas such that
$K$ consists precisely of those frames
on which each modal formula in $\Phi$ is valid.
The following celebrated result of classical modal model theory tells us which first-order definable frame classes are modally definable.
Note that, as before, we restrict attention
to first-order sentences over a signature
consisting of binary relations only. 
Such sentences describe properties of 
(multi-modal) Kripke frames. 

\begin{thmC}[\cite{GoldblattThomason1975}]
Let ${\bf K}$ be any class of frames that is definable by a set of first-order sentences.
Then the following are equivalent:
\begin{enumerate}
    \item ${\bf K}$ is 
modally definable.
    \item ${\bf K}$ is closed under taking generated subframes disjoint unions and bounded morphic images, and  is complement is closed under taking ultrafilter extensions.
\end{enumerate}
\end{thmC}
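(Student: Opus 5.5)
The plan is to follow the classical proof. The direction from 1 to 2 is routine, and the substantial work is in 2 to 1.

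For 1 $\Rightarrow$ 2, I would check that validity of a single modal formula is preserved under generated subframes, disjoint unions and bounded morphic images, and is reflected by ultrafilter extensions. For the first three, any valuation on the new frame induces a valuation on the old frame or frames: restrict it to the generated subframe, split it over the components, or pull it back along the bounded morphism. The relevant generated submodel, disjoint union or bounded morphism is a bisimulation, so modal truth is invariant. For ultrafilter extensions, I would use that $\mathrm{ue}\mathfrak{M}$ with the valuation $p\mapsto\{u\mid V(p)\in u\}$ satisfies $u\models\phi$ iff $V(\phi)\in u$. A refutation on $F$ at $w$ then lifts to a refutation at the principal ultrafilter, so $\mathrm{ue}F\models\phi$ implies $F\models\phi$. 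Intersecting over $\Phi$ gives all the closure conditions.

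For 2 $\Rightarrow$ 1, let $\Phi$ be the modal theory of ${\bf K}$ and take a frame $G\models\Phi$. The goal is to show $G\in{\bf K}$, in four steps.
\begin{enumerate}
\item For each finite set $\Delta$ of subsets of $G$, use the theory of $G$ to form the model $\mathfrak{M}_G$ with a propositional letter $p_X$ for each $X\subseteq G$. Each ultrafilter $u$ of $\mathrm{ue}G$ then yields a modal type over these letters.
\item Since every formula in that type is satisfiable on ${\bf K}$ (else its negation would be in $\Phi$ and fail on $G$), obtain frames $F_\Delta\in{\bf K}$ with valuations realizing each finite fragment. Take their disjoint union, which stays in ${\bf K}$ by closure under disjoint unions.
\item Pass to a countably saturated elementary extension, which stays in ${\bf K}$ because ${\bf K}$ is elementary. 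Saturation makes the model m-saturated.
\item Show that the map sending a point to the ultrafilter $\{X\mid p_X \text{ true there}\}$ is a surjective bounded morphism onto $\mathrm{ue}G$. This gives $\mathrm{ue}G\in{\bf K}$, and since the complement of ${\bf K}$ is closed under ultrafilter extensions, $G\in{\bf K}$.
\end{enumerate}

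In detail, I would follow Goldblatt--Thomason's argument via the canonical-model construction. Take $\mathfrak{M}$ to be the disjoint union, over finite sets $\Delta$ of ultrafilter-types, of models on frames in ${\bf K}$ realizing them. Take an $\omega$-saturated ultrapower $\mathfrak{M}^*$, whose underlying frame lies in ${\bf K}$ by elementarity. The key lemma is the bounded morphism property in step 4: the back condition uses m-saturation, the forth condition uses the definition of $R^{ue}$, and surjectivity comes from the satisfiability of each finite part of each ultrafilter. Generated subframes are used to cut down to the relevant points if needed, though the canonical argument can avoid this.

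I expect the main obstacle to be the back clause of the bounded morphism. Suppose $f(x)\,R^{ue}\,v$. The set of formulas $\{p_X\mid X\in v\}$ has to be finitely satisfiable among successors of $x$: for $X\in v$, the fact $m_R(X)\in f(x)$ gives $x\models\Diamond p_X$. I would then realize the whole set using m-saturation, which follows from $\omega$-saturation by compactness of the standard translation. Getting the valuation coherent across all $X$ simultaneously, with Boolean structure respected ($p_{X\cap Y}\leftrightarrow p_X\wedge p_Y$, $p_{G\setminus X}\leftrightarrow\neg p_X$, $p_{m_R(X)}\leftrightarrow\Diamond p_X$), requires working with the type of a point in $\mathfrak{M}_G$ over all these letters. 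That is exactly what the finite-fragment construction in steps 1 and 2 must deliver.
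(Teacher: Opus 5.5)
\paragraph{The gap: closure under generated subframes is never used.}
You follow the usual model-theoretic route (the paper only cites this theorem), but step 4 has a genuine gap. The warning sign is that closure under generated subframes is never used, and you even suggest it can be avoided. It cannot. The class of frames satisfying $\exists x\exists y\,Rxy$ has these properties:
\begin{itemize}
\item it is elementary;
\item it is closed under disjoint unions and bounded morphic images;
\item its complement (frames with empty relation) is closed under ultrafilter extensions.
\end{itemize}
Yet the one-point frame with empty relation validates its modal theory, being a generated subframe of the two-point frame with a single edge, without belonging to it.

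The failure sits in step 4. The set $f(x)=\{X\mid p_X\text{ true at }x\}$ is an ultrafilter, and the forth clause holds at $x$, only if $x$ satisfies the coherence axioms $p_G$, $p_{X\cap Y}\leftrightarrow(p_X\land p_Y)$, $p_{G\setminus X}\leftrightarrow\neg p_X$ and $p_{m_R(X)}\leftrightarrow\Diamond p_X$. Note that forth needs $\Diamond p_X\to p_{m_R(X)}$ at $x$, not just the definition of $R^{ue}$. Realizing finite fragments of ultrafilter types at designated points of ${\bf K}$-models says nothing about the other points of those models. Neither the disjoint union nor the saturated extension repairs this, so $f$ need not even map into $Uf(G)$. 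In the example above, a fragment of the type of the unique ultrafilter (which contains $\Box\bot$) is realized at the dead end of the one-edge frame. At the edge itself the forth clause must fail, since $R^{ue}$ is empty.

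\paragraph{The repair: cut down to a generated submodel.}
In the saturated model $\mathfrak M^*$, each \emph{full} type $T_u=\{\varphi\mid V(\varphi)\in u\}$ is realized at some point $x_u$, because its finite parts are realized in $\mathfrak M$. Moreover $T_u$ contains $\Box^n\chi$ for every coherence axiom $\chi$ and every $n$, since these formulas have extension $G$ in $\mathfrak M_G$. Hence the coherence axioms hold everywhere in the submodel $\mathfrak N$ of $\mathfrak M^*$ generated by $\{x_u\mid u\in Uf(G)\}$. Then:
\begin{itemize}
\item the frame of $\mathfrak N$ is in ${\bf K}$ precisely by closure under generated subframes;
\item $\mathfrak N$ is still m-saturated, because successor sets are unchanged;
\item $f$ restricted to $\mathfrak N$ is a bounded morphism with $f(x_u)=u$, hence surjective onto $\mathrm{ue}G$.
\end{itemize}
Equivalently, in the usual ordering: for each $u$, realize $T_u$ at a point of a ${\bf K}$-model by compactness, pass to the generated submodel, take the disjoint union over all $u$, and only then saturate.

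A smaller point: with $2^{|G|}$ proposition letters, an ultrapower over a non-principal ultrafilter on $\omega$ need not be $\omega$-saturated. Use an $\omega$-saturated elementary extension instead, which exists for any vocabulary.
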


The Goldblatt-Thomason theorem has a purely model-theoretic proof without detours into algebraic representation methods, cf. \cite{benthem1993:modal}. Using this proof method, the cited paper also notes that on finite transitive frames, closure under generated subframes, disjoint unions and bounded morphic images is necessary and sufficient for modal definability. 

In what follows, we analyze the situation slightly differently, using arbitrary frames but an extended modal language. We work with an \emph{extended modal language} containing both $\Diamond \varphi$ and the \emph{universal modality} $U\varphi$. 

First, consider any finite frame $F$, and take propositional variables $p_{\{x\}}$ for each point $x$ in $F$. Now expand $F$ to a model $M$ by interpreting these propositional variables as true only at their points. Then the following formula \$($F$) will be true in $M$:\medskip

Take one universal modality over a conjunction of \vspace{1ex}

(a) every point satisfies exactly one of the $p_{\{x\}}$, \vspace{1ex}

(b) for all $s, t$, if $s R t$ in $F$, then $p_{\{s\}} \rightarrow \Diamond p_{\{t\}}$, \vspace{1ex}

(c) for all $s, t$, if not $s R t$ in $F$, then $p_{\{s\}} \rightarrow \Box \neg p_{\{t\}}$
\medskip

Now the following statement holds for all models $(N, t)$:
\begin{lem}
$(N, t) \models \,\,$\$($F$) if and only if there is a bounded morphism (i.e., a functional bisimulation) from the model $N$ onto the model $M$.
\end{lem}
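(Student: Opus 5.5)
The approach is to read off a map from $N$ to $M$ using the propositional variables $p_{\{x\}}$, and to use clauses (a)--(c) of \$($F$) to check that this map is a bounded morphism. Here the model $M$ carries the valuation that makes $p_{\{x\}}$ true exactly at $x$, and a bounded morphism between models is a functional bisimulation: it preserves the truth of every $p_{\{x\}}$ and satisfies the forth and back clauses of the frame definition given earlier. Since $U$ is the universal modality, $(N,t)\models$ \$($F$) does not depend on $t$; it just says that (a)--(c) hold at every world of $N$. So the statement amounts to: (a)--(c) hold globally in $N$ if and only if there is a bounded morphism from $N$ onto $M$. I will treat the ``onto'' requirement separately, since it is the delicate point.

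For the direction from left to right, define $f:N\to F$ by letting $f(u)$ be the unique $x$ with $N,u\models p_{\{x\}}$; this is well defined by clause (a). Because $M$ makes $p_{\{x\}}$ true only at $x$, the map $f$ preserves all propositional variables. For the forth condition, suppose $uR^Nv$, $f(u)=s$ and $f(v)=t$. If $sRt$ failed in $F$, clause (c) would give $N,u\models\Box\neg p_{\{t\}}$, which contradicts $N,v\models p_{\{t\}}$. For the back condition, suppose $f(u)=s$ and $sRt$ in $F$. Clause (b) gives $N,u\models\Diamond p_{\{t\}}$, so some $v$ with $uR^Nv$ satisfies $p_{\{t\}}$, that is, $f(v)=t$. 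The same argument works for each relation $R_a$ in the multi-modal case, with (b) and (c) stated per modality.

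Surjectivity is where I expect the main obstacle. Clauses (a)--(c) only guarantee that the image of $f$ is closed under $R$-successors: by the back condition, the image of the generated submodel of $u$ is the subframe of $F$ generated by $f(u)$. They do not obviously force every point of $F$ to be hit. My first attempt will be to add $E p_{\{x\}}$ (with $E=\neg U\neg$) for each $x$ to the conjunction, since clause (b) as stated does not give this. If \$($F$) is to be read exactly as written, I will instead prove the weaker conclusion: $f$ is a bounded morphism onto the generated subframe of $M$ at $f(t)$ (or onto the union of such generated subframes), and I will note that full surjectivity holds under the reading in which \$($F$) includes the existence conjuncts. I would flag this reading explicitly.

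For the direction from right to left, let $g:N\to M$ be a bounded morphism. Every modal formula, including those with $U$, is preserved at $u$ and $g(u)$ once $g$ is surjective, because a surjective functional bisimulation is total and onto, which is exactly what the universal modality needs. Since $M$ satisfies (a)--(c) at every world by construction, each world of $N$ satisfies them as well, so $(N,t)\models$ \$($F$). Surjectivity also secures any added $E p_{\{x\}}$ conjuncts.
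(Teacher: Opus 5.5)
Your argument is the paper's own: send each world of $N$ to the unique $x$ whose $p_{\{x\}}$ it satisfies, and read the forth and back conditions off clauses (c) and (b); the converse, which the paper leaves implicit, follows from your preservation argument. Your worry about surjectivity is justified, and the paper's one-line proof passes over it. If $F$ consists of two isolated points $x,y$ and $N$ is a single world where only $p_{\{x\}}$ holds, then $N$ satisfies \$($F$) but admits no map onto $M$. So the conjuncts $\neg U\neg p_{\{x\}}$ you propose (expressible in the language $\{\Box,U\}$) are genuinely needed. They are needed both for the lemma and for the subsequent theorem, where $F$ must come out as a bounded morphic image of $G$.
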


\begin{proof} 
The proof uses the obvious embedding sending all points satisfying the propositional variable $p_{\{x\}}$ in $N$ to the single point $x$ in $M$, where the clauses (b), (c) in the formula \$ enforce exactly the truth of the two back-and-forth clauses governing a bisimulation.\footnote{This method is similar to the analysis of finite models up to bisimulation in \cite{vanBenthem1998DynamicOdds}. }
\end{proof}

Using this, we can formulate the following version of the Goldblatt-Thomason Theorem for finite frames.

\begin{thm} A class ${\bf K}$ of finite frames is definable by a set of modal formulas in the language \{$\Box, U$\} if and only if ${\bf K}$ is closed under bounded morphic images.

\end{thm}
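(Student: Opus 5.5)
The plan is to prove the two directions separately. For the \emph{soundness} direction, I would check by induction on formulas that modal formulas in $\{\Box,U\}$ are preserved along surjective bounded morphisms of models. For $\Box$ this is the usual back-and-forth argument. For $U$ it uses surjectivity: $U\varphi$ holds in $N$ iff $\varphi$ holds at every point of $N$, and every point of the image has a preimage. Now let $f:G\to F$ be a surjective bounded morphism of frames and $V'$ a valuation on $F$. Then $V(p)=f^{-1}(V'(p))$ turns $f$ into a bounded morphism of models. So a formula valid on $G$ is true everywhere in $(G,V)$, hence everywhere in $(F,V')$, hence valid on $F$. Consequently any class defined by a set of such formulas is closed under bounded morphic images.

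For the \emph{completeness} direction, assume $\mathbf K$ is closed under bounded morphic images (within finite frames). I would set
\[\Phi := \{\neg\$(F) \mid F \text{ a finite frame with } F\notin\mathbf K\}\]
and show that it defines $\mathbf K$ among finite frames. First I would make sure the lemma above yields an \emph{onto} map. To this end I would add to the conjunction inside $\$(F)$ the clause $\neg U\neg p_{\{x\}}$ for every point $x$ of $F$, saying that each $p_{\{x\}}$ is realized somewhere; this is harmless, as it is still true in the intended model $M$. Because $\$(F)$ starts with $U$, it is true at one point of a model iff it is true at all points. Hence $\neg\$(F)$ is valid on a frame $G$ iff no valuation on $G$ makes $\$(F)$ true, which by the lemma holds iff there is no surjective bounded morphism from $G$ onto $F$. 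In one direction, a valuation making $\$(F)$ true gives the map $y\mapsto$ the unique $x$ with $y\in V(p_{\{x\}})$. In the other, a surjective bounded morphism $f$ gives the valuation $V(p_{\{x\}})=f^{-1}(x)$.

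With this characterization the two inclusions are short. If $G\in\mathbf K$ and some $\neg\$(F)\in\Phi$ failed on $G$, there would be a surjective bounded morphism from $G$ onto $F$. Closure under bounded morphic images would then give $F\in\mathbf K$, contradicting the choice of $F$. So $G$ validates $\Phi$. If instead $G\notin\mathbf K$, then $\neg\$(G)\in\Phi$, and the identity map on $G$ refutes it. This means $G$ does not validate $\Phi$.

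The main obstacle I expect is the bookkeeping in the lemma. I must verify that clauses (b) and (c), together with the new existence clauses, give exactly the forth and back conditions plus surjectivity at the frame level. Clause (c) is what yields the forth condition: if $sRt$ in $G$ then $f(s)R'f(t)$. Clause (b) yields the back condition. A further subtlety is that $\mathbf K$ must be understood as closed under isomorphism, which follows from closure under bounded morphic images since isomorphisms are surjective bounded morphisms. I would handle the multi-modal case by repeating clauses (b) and (c) for each $R_a$.
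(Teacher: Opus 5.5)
Your proof is correct and follows essentially the paper's route---the diagram formula \$($F$) and the lemma that it forces a bounded morphism onto $F$. Your explicit defining set of all $\neg$\$($F$) with $F\notin\mathbf K$ is just a repackaging of the paper's ``$\mathbf K$ is defined by its own $\{\Box,U\}$-theory'' argument, and you also supply the soundness direction, which the paper leaves implicit. Your added conjuncts $\neg U\neg p_{\{x\}}$ are genuinely needed: clauses (a)--(c) alone only give a possibly non-surjective bounded morphism, whose image is merely a generated subframe of $F$. So the paper's lemma as literally stated overclaims ``onto'', and your version is what makes the argument go through.
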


\begin{proof} We show that the class ${\bf K}$ is defined by its own modal theory in our language. Consider any frame $F$ which satisfies this theory, expand it to a model $M$ as described above, and construct the formula \$($F$). Obviously, the formula $\neg$\$($F$) is not valid in the set ${\bf K}$ (since $F$ satisfies the modal theory of ${\bf K}$) , so there exists at least one frame $G$ in ${\bf K}$ which refutes it for some valuation. But then, using the preceding observation, it follows that $F$ is a bounded morphic image of $G$, and therefore $F$ belongs to the class ${\bf K}$. 
\end{proof}

Of course, this is a brute-force method, producing in principle, a set of modal formulas defining the given class. Thus, obvious questions remain about the scope of our analysis. 

One is when there is \emph{one single modal formula} defining the given set ${\bf K}$. We suspect that the assumption that ${\bf K}$ be definable by one first-order sentence suffices for this, but we have no proof.

Another immediate  question is this. Can the above result be extended to work for just the basic modal language without the universal modality?

Our final question is as follows. The preceding  results were about {\it extending} the basic modal language. What about the converse direction: What happens to the preceding analysis when we {\it restrict} the basic modal language to, say,  positive fragments of various sorts? 

\vspace{-3ex}

\subsection{Modal correspondence theory (characterizing first-order definability)}\label{subsec:McKinsey}

In Section~\ref{sec:modal}, we have seen that many semantic properties of modal formulas
(monotonicity, preservation under induced substructures, complete additivity, strong continuity)
hold in the finite if and only if they hold over arbitrary  Kripke
structures. The same is not always true. In particular, it  does \emph{not} hold for the property of \emph{defining a first-order frame condition}.

Let us first recapitulate some known facts. \emph{Frame validity} of a modal formula $\varphi$ is in principle a monadic universal second-order property, obtained by prefixing the first-order standard translation ST($\varphi$) with universal quantifiers over its unary predicate letters. However, many modal formulas have corresponding properties definable in first-order logic, and a large class of these is described by the Sahlqvist Theorem (cf.~\cite{BdRV}). 
A famous example of a natural modal formula
that corresponds to a non-first-order property of frames is L\"ob's Axiom $\Box(\Box p \rightarrow p) \rightarrow \Box p)$  in provability logic, which defines the class of transitive and upward well-founded relational frames. 

Let us now turn to correspondence on finite frames. Here is a preliminary observation.

\begin{rem}[A matter of definability] It is a well known fact that, for modally definable frames classes, 
\emph{being definable by a set of FO sentences} implies \emph{being definable by a single FO sentence}. On finite frames, on the other hand, \emph{being definable by a set of FO sentences} trivializes: every isomorphism-closed class of finite frames is definable by a set of first-order sentences.
Indeed, it suffices to take, for each finite frame not in the class, the negation of the first-order sentence describing it up to isomorphism. The resulting set of sentences defines the given class. However, in the finite realm, it does remain meaningful to study definability by a single FO sentence.
\end{rem}

Restricted to finite frames, correspondence results of the preceding type may simplify, since some second-order modal axioms become first-order definable. For instance, L\"ob's Axiom now defines the transitive irreflexive frames (a first-order definable class).
\begin{thm}
For finite Kripke frames $\mathfrak{F}=(W,R)$, it holds that  $\mathfrak{F}\models \Box(\Box p \to p)\to \Box p$  iff $R$ is transitive and irreflexive.
\end{thm}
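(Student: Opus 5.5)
The plan is to start from the classical correspondence for L\"ob's Axiom: on arbitrary frames, $\Box(\Box p\to p)\to\Box p$ is valid iff $R$ is transitive and conversely well-founded, i.e., there is no infinite chain $w_0Rw_1Rw_2R\cdots$. I will not simply invoke this, though. Instead I will prove both directions directly on finite frames, because the argument is short and it shows exactly where finiteness enters. The key finite-frame fact is that, when $W$ is finite and $R$ is transitive, converse well-foundedness is equivalent to irreflexivity. An infinite $R$-chain in a finite set must repeat some point $w$, and transitivity then gives $wRw$. Conversely, a reflexive point $wRw$ yields the infinite chain $w,w,w,\ldots$.

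For the direction from right to left, I assume $R$ is transitive and irreflexive on finite $W$ and suppose some valuation and world $w$ make $\Box(\Box p\to p)$ true but $\Box p$ false. Then the set $X=\{v: wRv,\ v\not\models p\}$ is nonempty. Because $R$ is transitive and irreflexive on a finite set, it is a strict partial order, so $X$ has an $R$-maximal element $v$: no $u\in X$ satisfies $vRu$. Any $u$ with $vRu$ also satisfies $wRu$ by transitivity, so $u\notin X$ forces $u\models p$. Hence $v\models\Box p$. But $v\models \Box p\to p$ holds because $wRv$, so $v\models p$, contradicting $v\in X$.

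For the direction from left to right, I assume the axiom is valid on $\mathfrak F$ and derive both properties by exhibiting countermodels. For \emph{transitivity}, given $wRv$, $vRu$ and not $wRu$, I set $V(p)=W\setminus\{u\}$. Every $w$-successor $x$ with $x\neq u$ satisfies $p$, hence $\Box p\to p$. Every $w$-successor is in fact distinct from $u$, since $wRu$ fails. So $w\models\Box(\Box p\to p)$. However $v\not\models\Box p$ because $vRu$ and $u\not\models p$, so $w\not\models\Box p$; this is the standard argument, and it gives a counterexample. For \emph{irreflexivity}, rather than using finiteness ad hoc, I will use the known general fact that the axiom forces converse well-foundedness: if $wRw$, put $V(p)=\emptyset$. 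Then $w\not\models\Box p$, since $w$ is its own successor and $p$ is false there. Moreover every point $x$ with $xRx$ has $x\not\models\Box p$, so $x\models\Box p\to p$ fails only where $\Box p$ holds. That is not quite enough by itself, so I will instead take $V(p)=W\setminus C$, where $C$ is the set of points lying on an infinite chain (here, points $x$ with $w R^{*} x$ and $x$ reaching a reflexive point). The verification that $\Box(\Box p\to p)$ holds at $w$ uses the fact that each point of $C$ has a successor in $C$, so it satisfies $\neg\Box p$.

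I expect the main obstacle to be this last step: choosing the right valuation for irreflexivity and checking that $\Box(\Box p\to p)$ holds at $w$. The cleanest choice is $V(p)=\{x:\text{no infinite }R\text{-path starts at }x\}$. Points outside this set have a successor outside it and so falsify $\Box p$, while points inside it make $p$ true. Thus $\Box p\to p$ is true everywhere, while $w$, which lies on the loop, has the successor $w\notin V(p)$. With this choice the verification is routine, and finiteness is needed only in the right-to-left direction, where it supplies the maximal element.
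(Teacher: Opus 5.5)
Your right-to-left direction is correct, and so is your final valuation for irreflexivity, $V(p)=\{x : \text{no infinite } R\text{-path starts at } x\}$. It even gives converse well-foundedness on arbitrary frames. You should delete the abandoned attempts with $V(p)=\emptyset$ and with $C$ that precede it.

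The genuine error is in the transitivity step. You take $wRv$, $vRu$, not $wRu$, and $V(p)=W\setminus\{u\}$. You then observe that every $w$-successor differs from $u$ and hence satisfies $p$. But that means $w\models\Box p$, so L\"ob's axiom is \emph{true} at $w$ and you have no counterexample. The inference ``$v\not\models\Box p$, so $w\not\models\Box p$'' is a non sequitur: $v\not\models\Box p$ only gives $w\not\models\Box\Box p$. What you have written is the standard refutation of the $4$-axiom $\Box p\to\Box\Box p$, not of L\"ob's axiom.

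The fix is easy. Take $V(p)=W\setminus\{v,u\}$; note $v\neq u$, since $wRv$ but not $wRu$.
\begin{itemize}
\item $v\not\models p$, so $w\not\models\Box p$.
\item $v\not\models\Box p$ because $vRu$ and $u\notin V(p)$, so $v\models\Box p\to p$ vacuously.
\item Every other $w$-successor differs from both $u$ and $v$, so it satisfies $p$.
\end{itemize}
Hence $w\models\Box(\Box p\to p)$ while $w\not\models\Box p$.

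Alternatively, keep your valuation and first observe that $\Box p\to\Box\Box p$ is derivable in $\mathbf{K}$ from L\"ob's axiom (substitute $p\wedge\Box p$ for $p$). Since the formulas valid on a frame are closed under modus ponens, necessitation and uniform substitution, your countermodel to $4$ then suffices.

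For comparison: the paper states this theorem without proof, treating it as an immediate consequence of the classical correspondence (transitive and conversely well-founded). Your ``key finite-frame fact'' paragraph is exactly the reduction the paper implicitly relies on. Citing the classical correspondence together with that paragraph would already be a complete proof.
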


On the other hand, the McKinsey Axiom $\Box\Diamond p\to \Diamond\Box p$, whose non-first-orderness was shown in \cite{Benthem75:note}, still lacks a first-order correspondent even when we restrict attention to finite frames. We prove this by showing that, for every $n$, there exist two finite frames that satisfy the same first-order sentences of quantifier rank at most $n$, but differ on the validity of the McKinsey Axiom.

\begin{defi}
Let $n \ge 0$ be an integer. We define a finite frame
$
\mathfrak{F}_n = (W,R)
$
as follows.
The domain is
\[
W = \{r\} \cup \{a_i : 0 \le i \le n\} \cup \{b_i : 0 \le i \le n\},
\]
where all elements are assumed to be distinct.
The relation $R \subseteq W \times W$ is defined by:
\begin{enumerate}
  \item $(r,a_i) \in R$ for all $0 \le i \le n$;
  \item $(a_i,b_j) \in R$ iff $j=i$ or $j=i+1$, where the index $i+1$ is understood to be $0$ when $i=n$.
  \item $(b_i,b_i) \in R$ for all $0 \le i \le n$.
\end{enumerate}
No other pairs belong to $R$.

We further define a relational structure $\mathfrak{G}_n = (W,R',P)$ on the same domain $W$, where $R'$ is a binary relation and $P$ is a unary relation.
The relation $R'$ is obtained from $R$ in $\mathfrak{F}_n$ by removing
all edges of the form $(r,a_i)$, that is,
\[
R' = R \setminus \{(r,a_i) : 0 \le i \le n\}.
\]
The unary relation $P$ is interpreted as the singleton $\{r\}$.
\end{defi}

For example, The frames $\mathfrak{F}_5$ and $\mathfrak{G}_5$ are depicted below. Solid arrows indicate the relation $R$, while dashed arrows indicate the relation $R'$. The dashed circle marks the unique element satisfying the unary predicate $P$.
\tikzset{
  Rstyle/.style={->, >=Stealth, thick},
  Rprimestyle/.style={->, >=Stealth, thick, dashed}
}

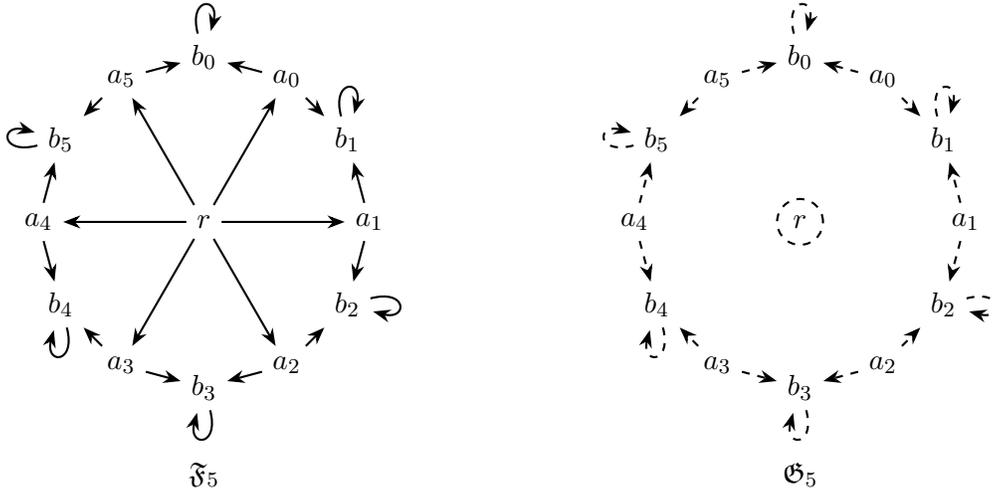
\begin{figure}[htbp]
\centering

\begin{minipage}{0.48\linewidth}
\centering
\begin{tikzpicture}[Rstyle]
\node (r) at (0,0) {$r$};

\def\radius{2.2}

\node (b0) at (90:\radius) {$b_0$};
\node (b1) at (30:\radius) {$b_1$};
\node (b2) at (-30:\radius) {$b_2$};
\node (b3) at (-90:\radius) {$b_3$};
\node (b4) at (-150:\radius) {$b_4$};
\node (b5) at (150:\radius) {$b_5$};

\node (a0) at (60:\radius) {$a_0$};
\node (a1) at (0:\radius) {$a_1$};
\node (a2) at (-60:\radius) {$a_2$};
\node (a3) at (-120:\radius) {$a_3$};
\node (a4) at (180:\radius) {$a_4$};
\node (a5) at (120:\radius) {$a_5$};

% r -> a_i  (only in F_5)
\foreach \i in {0,...,5} {
  \draw (r) -- (a\i);
}

% a_i -> adjacent b's
\draw (a0) -- (b0);
\draw (a0) -- (b1);
\draw (a1) -- (b1);
\draw (a1) -- (b2);
\draw (a2) -- (b2);
\draw (a2) -- (b3);
\draw (a3) -- (b3);
\draw (a3) -- (b4);
\draw (a4) -- (b4);
\draw (a4) -- (b5);
\draw (a5) -- (b5);
\draw (a5) -- (b0);

% reflexive loops on b_i
\draw (b0) edge[loop above] ();
\draw (b1) edge[loop above] ();
\draw (b2) edge[loop right] ();
\draw (b3) edge[loop below] ();
\draw (b4) edge[loop below] ();
\draw (b5) edge[loop left] ();

\end{tikzpicture}

{ $\mathfrak{F}_5$}
\end{minipage}
\hfill
\begin{minipage}{0.48\linewidth}
\centering
\begin{tikzpicture}[Rprimestyle]
\node[draw, circle] (r) at (0,0) {$r$};

\def\radius{2.2}

\node (b0) at (90:\radius) {$b_0$};
\node (b1) at (30:\radius) {$b_1$};
\node (b2) at (-30:\radius) {$b_2$};
\node (b3) at (-90:\radius) {$b_3$};
\node (b4) at (-150:\radius) {$b_4$};
\node (b5) at (150:\radius) {$b_5$};

\node (a0) at (60:\radius) {$a_0$};
\node (a1) at (0:\radius) {$a_1$};
\node (a2) at (-60:\radius) {$a_2$};
\node (a3) at (-120:\radius) {$a_3$};
\node (a4) at (180:\radius) {$a_4$};
\node (a5) at (120:\radius) {$a_5$};

% a_i -> adjacent b's
\draw (a0) -- (b0);
\draw (a0) -- (b1);
\draw (a1) -- (b1);
\draw (a1) -- (b2);
\draw (a2) -- (b2);
\draw (a2) -- (b3);
\draw (a3) -- (b3);
\draw (a3) -- (b4);
\draw (a4) -- (b4);
\draw (a4) -- (b5);
\draw (a5) -- (b5);
\draw (a5) -- (b0);

% reflexive loops on b_i
\draw (b0) edge[loop above] ();
\draw (b1) edge[loop above] ();
\draw (b2) edge[loop right] ();
\draw (b3) edge[loop below] ();
\draw (b4) edge[loop below] ();
\draw (b5) edge[loop left] ();

\end{tikzpicture}

{ $\mathfrak{G}_5$}
\end{minipage}

\caption{The frames $\mathfrak{F}_5$ and $\mathfrak{G}_5$.}

\label{fig:parity-frames}

\end{figure}

\begin{lem}\label{lem:mckinsey-parity}
The McKinsey axiom $\Box\Diamond p \to \Diamond\Box p$ is valid in $\mathfrak{F}_n$
if and only if $n$ is even.
\end{lem}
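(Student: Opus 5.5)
We will verify the McKinsey axiom pointwise. Validity in $\mathfrak{F}_n$ means that $\Box\Diamond p\to\Diamond\Box p$ holds at every world under every valuation $V$. The plan is to show that the $b_i$ and the $a_i$ never falsify it, and that at $r$ the condition reduces to a combinatorial statement about the cycle of length $n+1$ on the indices $\{0,\dots,n\}$.

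\emph{Worlds $b_i$.} The only successor of $b_i$ is $b_i$ itself, so both $\Box\Diamond p$ and $\Diamond\Box p$ are equivalent to $p$ at $b_i$, and the axiom holds there.

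\emph{Worlds $a_i$.} Since each $b_j$ is a reflexive point with no other successors, we have $b_j\models\Diamond p$ iff $b_j\models p$ iff $b_j\models\Box p$. Hence $a_i\models\Box\Diamond p$ iff $p$ holds at both $b_i$ and $b_{i+1}$. In that case $b_i\models\Box p$, so $a_i\models\Diamond\Box p$.

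\emph{The root $r$.} Put $S=\{j\le n : b_j\in V(p)\}$, where indices are read modulo $n+1$. By the observation about the $b_j$, $a_i\models\Diamond p$ iff $S\cap\{i,i+1\}\neq\emptyset$, and $a_i\models\Box p$ iff $\{i,i+1\}\subseteq S$. Therefore:
\begin{itemize}
\item $r\models\Box\Diamond p$ iff $S$ is a vertex cover of the cycle $C_{n+1}$ with edges $\{i,i+1\}$;
\item $r\models\Diamond\Box p$ iff $S$ contains some edge of $C_{n+1}$.
\end{itemize}
Every subset of $\{0,\dots,n\}$ arises as such an $S$ for a suitable valuation. So the axiom is valid at $r$ iff every vertex cover of $C_{n+1}$ contains two adjacent vertices. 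Equivalently, validity fails iff some vertex cover is also an independent set.

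\emph{Cycle argument.} A set that is both a vertex cover and independent contains exactly one endpoint of every edge. Such a set is one side of a proper $2$-colouring of the cycle. A $2$-colouring of $C_{n+1}$ exists iff $n+1$ is even.
\begin{itemize}
\item If $n$ is odd, the even indices form such a set, and the valuation making $p$ true exactly at $b_0,b_2,\dots,b_{n-1}$ refutes the axiom at $r$.
\item If $n$ is even, an odd cycle has no such set. Concretely, a cover $S$ with no adjacent pair would have to alternate around the cycle, which is impossible for odd length. Hence the axiom holds at $r$.
\end{itemize}
The degenerate case $n=0$ fits this pattern. There $C_1$ consists of the single index $0$, and $a_0$ has the single successor $b_0$ (because $0+1$ is read as $0$). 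So the only nonempty cover is $\{0\}$, which contains the "edge" $\{0,0\}$, and validity holds as predicted for even $n$.

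The main step is the reduction at $r$. Everything there depends on the observation that each $b_j$ is a reflexive dead end, which collapses $\Diamond$ and $\Box$ to $p$ at the $b_j$. Once that is in place, the rest is the parity fact about odd cycles.
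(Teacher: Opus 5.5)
Your proof is correct and follows essentially the same route as the paper. For odd $n$ you use the same even-index valuation to refute the axiom at $r$; for even $n$, your remark that a vertex cover which is also independent would properly $2$-colour the odd cycle $C_{n+1}$ is precisely the paper's parity-propagation argument around $b_0,\dots,b_n$, and you additionally spell out the checks at the $a_i$ and $b_i$ and the case $n=0$, which the paper leaves as straightforward.
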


\begin{proof}
\noindent($\Rightarrow$) Assume that $n$ is odd. We show that the axiom is not valid in $\mathfrak{F}_n$. Define a valuation $V$ by
$
V(p)=\{ b_i : i \text{ is even and } 0\leq i\leq n\}.
$
We claim that $\mathfrak{F}_n,V,r \models \Box\Diamond p$ but $\mathfrak{F}_n,V,r \not\models \Diamond\Box p$.

Indeed, the successors of $r$ are exactly the points $a_i$ for $i=0,\dots,n$.
Each $a_i$ has exactly two successors, namely $b_i$ and $b_{i+1}$,
where $b_{i+1}$ is understood as $b_0$ when $i = n$.
Since $n$ is odd, exactly one of $b_i$ and $b_{i+1}$ has odd index.
Hence $\mathfrak{F}_n,V,a_i \models \Diamond p\wedge \Diamond \neg p$ for all $i (0\leq i\leq n)$, so $\mathfrak{F}_n,V,r \models \Box\Diamond p$ and $\mathfrak{F}_n,V,r \not\models \Diamond\Box p$.
This refutes the McKinsey axiom at $r$.

\noindent($\Leftarrow$) 
Assume that $n$ is even, and let $V$ be an arbitrary valuation. 
It is straightforward to verify that $\mathfrak{F}_n,V,w \models \Box\Diamond p \to \Diamond\Box p$
for every $w \in W\setminus\{r\}$. Hence it suffices to show that
\[
\mathfrak{F}_n,V,r \models \Box\Diamond p \to \Diamond\Box p .
\]
Suppose, towards a contradiction, that $\mathfrak{F},V,r \models \Box\Diamond p\wedge \Box\Diamond\neg p$. Then we have $a_i \models \Diamond p \wedge \Diamond \neg p$  for every $0\leq i \leq n$. By the definition of $R$, the only successors of $a_i$ are $b_i$ and $b_{i+1}$. Hence, for each $0 \le i \le n$, among the pair $\{b_i,b_{i+1}\}$, at least one satisfies $p$ and at least one satisfies $\neg p$; equivalently,
\begin{equation}\label{eq:split}
\mathfrak{F}_n, V, b_i \models p \quad\Longleftrightarrow\quad \mathfrak{F}_n, V, b_{i+1} \models \neg p .
\end{equation}

Now if $\mathfrak{F}_n, V, b_0 \models p$, then by repeatedly applying~\eqref{eq:split} we obtain
$
\mathfrak{F}_n, V, b_i \models p \ \text{iff}\ i \ \text{is even}.
$
Since $n$ is even, it follows that $\mathfrak{F}_n, V, b_n \models p$.
Moreover, by the definition of $R$, $a_n$ has no successor other than $b_n$ and $b_0$, and therefore
$
\mathfrak{F}_n, V, a_n \models \Box p .
$
This contradicts $\mathfrak{F}_n, V, a_n \models \Diamond \neg p$.
In the case of $\mathfrak{F}_n,V,b_0 \models \neg p$,
the same propagation argument yields $\mathfrak{F}_n,V,b_n \models \neg p$, and hence
$\mathfrak{F}_n,V,a_n \models \Box \neg p$, contradicting $\mathfrak{F}_n,V,a_n \models \Diamond p$.
Therefore,
$
\mathfrak{F}_n, V, r \models \Box\Diamond p \rightarrow \Diamond\Box p.
$ Since $V$ was arbitrary, we conclude that the McKinsey axiom is valid in $\mathfrak{F}_n$.
\end{proof}
    
In the following we show that the McKinsey axiom is not first-order definable over finite frames by a locality argument based on Hanf's locality theorem; see~\cite{EF} for further details on locality theorems.

To state Hanf's theorem, we recall the notions of Gaifman graph and $r$-neighbourhood (see Section~\ref{subsec:disjoint unions}), and use the following notion of $r$-neighbourhood type.
Two elements $u$ and $v$ are said to have the same $r$-neighbourhood type if their $r$-neighbourhoods are isomorphic via an isomorphism mapping $u$ to $v$.

Informally, Hanf's theorem asserts that two structures cannot be distinguished by first-order sentences if, for every neighbourhood type, they contain either the same number of elements of that type or sufficiently many elements of that type.

\begin{thmC}[Hanf's Theorem]
    For every first-order sentence $\phi$, there exists $r \in \mathbb{N}$ such that for every $e\in \mathbb{N}$, if structures $\mathfrak A$ and $\mathfrak B$ satisfy the following conditions, then $\mathfrak A\models \phi$ iff $\mathfrak B\models \phi$:
\begin{enumerate}
    \item every $r$-neighbourhood in $\mathfrak{A}$ and $\mathfrak{B}$ has size less than $e$
    \item for each $r$-neighbourhood type $\iota$, either
    \begin{itemize}
        \item $\mathfrak A$ and $\mathfrak B$ contain the same number of elements whose $r$-neighbourhood is of type $\iota$, or
        \item both $\mathfrak A$ and $\mathfrak B$ contain more than $re$ such elements.
    \end{itemize}
\end{enumerate}
\end{thmC}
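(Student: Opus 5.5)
We prove Hanf's Theorem with an Ehrenfeucht--Fra\"{\i}ss\'e game argument. Let $q$ be the quantifier rank of $\phi$. Define radii $r_q := 0$ and $r_{i} := 3r_{i+1}+1$ for $i<q$, so $r_i=(3^{q-i}-1)/2$. Take $r := \max(r_0,q)$. Since an $r$-neighbourhood type determines the $s$-neighbourhood type for every $s\le r$, condition 2 passes to every smaller radius. If $\mathfrak A$ and $\mathfrak B$ have equally many elements of each $r$-type, they have equally many of each $s$-type, because an $s$-type count is a sum of $r$-type counts. If instead the threshold $re$ is exceeded for some $r$-type, it is exceeded for the $s$-type that refines it. Hence condition 2 holds for all radii $s\le r$. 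It then suffices to show that Duplicator wins the $q$-round game on $\mathfrak A,\mathfrak B$.

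\emph{Invariant.} After round $i$, with chosen tuples $\bar a=a_1\dots a_i$ and $\bar b=b_1\dots b_i$, there is an isomorphism
\[
\textstyle\bigcup_j \mathcal N_{r_i}^{\mathfrak A}(a_j)\;\cong\;\bigcup_j \mathcal N_{r_i}^{\mathfrak B}(b_j)
\]
sending $a_j$ to $b_j$. At $i=0$ this is vacuous, and at $i=q$ (radius $0$) it says that the chosen elements form a partial isomorphism. So maintaining the invariant means Duplicator wins, and $\mathfrak A\models\phi$ iff $\mathfrak B\models\phi$.

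\emph{Inductive step.} Suppose Spoiler picks $a$ in $\mathfrak A$; the other side is symmetric. There are two cases.
\begin{enumerate}
\item \emph{Near case:} $\dist(a,a_j)\le 2r_{i+1}+1$ for some $j$. Then $\mathcal N_{r_{i+1}}(a)$ lies inside $\mathcal N_{r_i}(a_j)$, since $3r_{i+1}+1=r_i$. Duplicator answers with the image of $a$ under the current isomorphism, and restricting that isomorphism to the smaller radius gives the new invariant.
\item \emph{Far case:} $a$ is at distance $>2r_{i+1}+1$ from every $a_j$. Then its $r_{i+1}$-neighbourhood is disjoint from, and not adjacent to, the neighbourhoods of the $a_j$. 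Duplicator needs a $b$ that has the same $r_{i+1}$-type $\tau$ as $a$ and is also far from every $b_j$. The isomorphism at radius $r_i$ then extends by a disjoint union.
\end{enumerate}

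\emph{Finding $b$ in the far case.} Elements at distance $\le 2r_{i+1}+1$ from some $a_j$ have their $r_{i+1}$-neighbourhood inside $\bigcup_j\mathcal N_{r_i}(a_j)$. The current isomorphism therefore gives a type-preserving bijection between the near elements of type $\tau$ in $\mathfrak A$ and those in $\mathfrak B$. Moreover, there are fewer than $ie\le qe\le re$ near elements, because each neighbourhood of radius $\le r$ has size $<e$. Now apply condition 2 at radius $r_{i+1}$.
\begin{itemize}
\item If the totals of type $\tau$ agree, then the far counts agree. The far count in $\mathfrak A$ is positive because $a$ is far, so the far count in $\mathfrak B$ is positive too.
\item If both totals exceed $re$, then the far count in $\mathfrak B$ exceeds $re-qe\ge 0$, since $r\ge q$.
\end{itemize}
Either way a suitable far $b$ exists.

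The main obstacle is this far-case counting. It needs three things: that the near elements are matched type-preservingly by the current isomorphism, which is why the radii shrink by the factor $3$; that there are uniformly few near elements, which is why $r\ge q$ appears in the choice of $r$; and that thresholds transfer from radius $r$ down to the smaller radii. The remaining verifications are routine.
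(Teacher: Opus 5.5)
Your proof is correct. The paper does not prove this statement itself; it cites Hanf's theorem and refers to Ebbinghaus--Flum. What you give is the standard Ehrenfeucht--Fra\"{\i}ss\'e proof of the threshold version found there:
\begin{itemize}
\item radii shrinking by $r_i=3r_{i+1}+1$, where the $+1$ correctly ensures that far neighbourhoods are non-adjacent and not merely disjoint;
\item a near/far case split on Spoiler's move;
\item a far-case count using two facts: the current isomorphism matches near elements type-preservingly, and there are at most $q(e-1)\le re$ near elements.
\end{itemize}

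Two cosmetic slips, neither affecting correctness:
\begin{itemize}
\item In transferring condition~2 to radius $s\le r$, it is the $r$-type that refines the $s$-type, not the converse. The argument should also be run separately for each $s$-type: either some $r$-type inducing it has more than $re$ elements in both structures, or all such $r$-types have equal counts.
\item The near set has at most $i(e-1)$ elements, not ``fewer than $ie$''. The difference only matters in the vacuous case $i=0$.
\end{itemize}
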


We now apply Hanf's Theorem to obtain the following lemma.

\begin{lem}\label{lem:local}
For every first-order sentence $\phi$, there exists an even number $n$ such that $\mathfrak{G}_{n} \models \phi$ iff $\mathfrak{G}_{n+1} \models \phi$.
\end{lem}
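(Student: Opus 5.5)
We apply Hanf's Theorem to $\mathfrak{G}_n$ and $\mathfrak{G}_{n+1}$. Fix $\phi$ and let $r$ be the radius Hanf's Theorem provides for it. The point of removing the edges $(r,a_i)$ is that in $\mathfrak{G}_n$ the element $r$ is isolated in the Gaifman graph. The remaining elements form a single cycle $a_0 - b_1 - a_1 - b_2 - \cdots - a_n - b_0 - a_0$ of length $2(n+1)$, alternating between $a$'s and $b$'s. The $R'$-edges point from $a$'s to $b$'s, and each $b_i$ carries a reflexive loop. So every element is either the isolated $P$-point $r$, or lies on this alternating directed cycle.

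The plan is to pick $n$ even and large relative to $r$, say $n \ge 2r+2$, so that the cycle of length $2(n+1)$ exceeds $2r+1$. Then the $r$-neighbourhood of every cycle element is a path of length $2r$ centred at it, with the alternating orientation and loops determined by whether the centre is an $a$-point or a $b$-point. This gives exactly three $r$-neighbourhood types:
\begin{enumerate}
\item the isolated $P$-point $r$,
\item the type $\iota_a$ of the $a_i$,
\item the type $\iota_b$ of the $b_i$.
\end{enumerate}
The same three types occur in $\mathfrak{G}_{n+1}$, since its cycle is only longer. All $r$-neighbourhoods have size at most $2r+1$, so we take $e = 2r+2$ in condition 1 of Hanf's Theorem.

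Next we count elements of each type. Type 1 occurs exactly once in both structures. Types $\iota_a$ and $\iota_b$ each occur $n+1$ times in $\mathfrak{G}_n$ and $n+2$ times in $\mathfrak{G}_{n+1}$. Choosing $n$ even with $n+1 > re = r(2r+2)$ puts both counts above the threshold $re$, so condition 2 of Hanf's Theorem holds and $\mathfrak{G}_n \models \phi$ iff $\mathfrak{G}_{n+1} \models \phi$.

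The main obstacle is ensuring that $r$-neighbourhoods on the cycle really are the claimed paths, with no wrap-around identification and no hidden asymmetry near index $n$. The wrap edge $(a_n, b_0)$ has the same orientation as every other $a$-to-$b$ edge, so the structure is cyclically symmetric under the map $a_i \mapsto a_{i+1}$, $b_i \mapsto b_{i+1}$. Hence all $a_i$ share one type and all $b_i$ share one type, and the requirement $2(n+1) > 2r+1$ rules out wrap-around. The choice of $e$ must also respect the theorem's quantifier order: $r$ depends only on $\phi$, and $e$ is then ours to fix. So $e = 2r+2$ together with an even $n > r(2r+2)$ suffices.
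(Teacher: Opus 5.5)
Your proof is correct and follows essentially the same route as the paper: apply Hanf's Theorem with the radius $r$ obtained from $\phi$, observe that $\mathfrak{G}_n$ and $\mathfrak{G}_{n+1}$ realize exactly the same three $r$-neighbourhood types (the isolated $P$-point, the $a$-type and the $b$-type), and choose $e$ just above $2r+1$ and $n$ even and large enough that both the $a$-type and the $b$-type occur more than $re$ times in each structure. You also spell out two checks the paper leaves implicit: that the isolated type occurs exactly once in both structures, and that the cycle of length $2(n+1)$ is long enough to rule out wrap-around (the paper's choice $n > 2re$ covers the latter).
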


\begin{proof}
Fix a first-order sentence $\phi$. By Hanf's Theorem, there exists $r\in \mathbb{N}$ such that the truth of $\phi$ depends only on the numbers of realizations of each $r$-neighbourhood type, counted up to a threshold of the form $re$.

Choose $e > 2r + 1$, and let $n$ be an even integer such that $n > 2re$.
We verify that $\mathfrak{G}_n$ and $\mathfrak{G}_{n+1}$ satisfy the conditions of Hanf's Theorem.

In $\mathfrak G_n$ (and likewise in $\mathfrak G_{n+1}$), there are exactly three
$r$-neighbourhood types:
\begin{itemize}
  \item the $r$-neighbourhood of the isolated point $r$;
  \item the $r$-neighbourhood of a point $a_i$ $(0\le i\le n)$;
  \item the $r$-neighbourhood of a point $b_i$ $(0\le i\le n)$).
\end{itemize}
We denote these types by $\iota_1$, $\iota_2$, and $\iota_3$, respectively.

By the construction, every $r$-neighbourhood in $\mathfrak G_n$ and $\mathfrak G_{n+1}$ has size at most $2r+1$, which is less than $e$. So the first condition of Hanf's Theorem is satisfied.
For the second condition, since
$
n> 2re,
$
it follows that, for each neighbourhood type $\iota \in \{\iota_2,\iota_3\}$, both $\mathfrak{G}_n$ and $\mathfrak{G}_{n+1}$ contain more than $re$ elements of type $\iota$.
Therefore, all conditions of Hanf's Theorem are satisfied, and we conclude that $\mathfrak{G}_n$ and $\mathfrak{G}_{n+1}$ satisfy the same first-order sentences of quantifier rank at most $r$.
\end{proof}

\begin{thm}
There is no FO sentence $\phi$ such that 
for all finite frames $\mathfrak{F}$, $\mathfrak{F}\models \Box\Diamond p\to \Diamond\Box p$ iff $\mathfrak{F}\models\phi$.
 \end{thm}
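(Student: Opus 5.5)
Suppose, towards a contradiction, that $\phi$ is a first-order sentence in the frame signature $\{R\}$ such that for all finite frames $\mathfrak{F}$, $\mathfrak{F}\models\Box\Diamond p\to\Diamond\Box p$ iff $\mathfrak{F}\models\phi$. The plan is to transfer $\phi$ from the frames $\mathfrak{F}_n$ to the structures $\mathfrak{G}_n$, apply Lemma~\ref{lem:local}, and then contradict Lemma~\ref{lem:mckinsey-parity}.

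\textbf{Step 1: the translation $\phi^*$.} Let $\phi^*$ be the sentence in the signature $\{R',P\}$ obtained from $\phi$ by replacing every atom $R(u,v)$ with
\[ R'(u,v)\lor\bigl(P(u)\land\neg P(v)\bigr). \]
In $\mathfrak{G}_n$, $P$ holds only at $r$, so this formula defines exactly the set of pairs
\[ R'\cup\{(r,w) : w\neq r\}. \]
By construction of $\mathfrak{F}_n$, this set would equal $R$ only if $r$'s successors were exactly the $a_i$'s. However, $r$ has no $R$-edges to the $b_i$'s.

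To handle this, I refine the replacement. Since the $a_i$ are exactly the points that have successors but no reflexive loop, I define the second disjunct as
\[ P(u)\land\neg P(v)\land\neg R'(v,v)\land\exists z\,R'(v,z). \]
In $\mathfrak{G}_n$ the $b_i$ are $R'$-reflexive and $r$ has no $R'$-successors, so this disjunct picks out exactly the pairs $(r,a_i)$. Hence $\mathfrak{G}_n\models\phi^*$ iff $\mathfrak{F}_n\models\phi$, for every $n$. This is a standard first-order interpretation, so the equivalence is routine by induction on $\phi$.

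\textbf{Step 2: applying Hanf locality.} Apply Lemma~\ref{lem:local} to $\phi^*$. This yields an even $n$ such that $\mathfrak{G}_n\models\phi^*$ iff $\mathfrak{G}_{n+1}\models\phi^*$. By Step 1, it follows that $\mathfrak{F}_n\models\phi$ iff $\mathfrak{F}_{n+1}\models\phi$.

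\textbf{Step 3: the contradiction.} By Lemma~\ref{lem:mckinsey-parity}, the McKinsey axiom is valid in $\mathfrak{F}_n$ (since $n$ is even) but not in $\mathfrak{F}_{n+1}$ (since $n+1$ is odd). As $\phi$ was assumed to define McKinsey validity on finite frames, $\mathfrak{F}_n\models\phi$ and $\mathfrak{F}_{n+1}\not\models\phi$, which contradicts Step 2.

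\textbf{Main obstacle.} The delicate step is the interpretation in Step 1. The extra edges $r\to a_i$ must be defined uniformly in $n$ by a quantifier-free-plus-bounded formula that holds on the relevant pairs. This is precisely why $\mathfrak{G}_n$ was built with the marker $P$: removing the hub $r$'s edges keeps neighbourhoods small so that Hanf applies. One must check that the defining condition for the $a_i$ ("irreflexive with a successor") is correct for all $n\ge 0$, including small cases. This is harmless, since Lemma~\ref{lem:local} only produces large $n$.
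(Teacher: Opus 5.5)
Your proof is correct and follows the paper's argument. The paper uses the same first-order interpretation of $\mathfrak{F}_k$ in $\mathfrak{G}_k$, with $R'(x,y)\vee\bigl(P(x)\wedge\neg P(y)\wedge\neg R'(y,y)\bigr)$, then applies Lemma~\ref{lem:local} and gets the parity contradiction from Lemma~\ref{lem:mckinsey-parity}. Your extra conjunct $\exists z\,R'(v,z)$ is harmless but redundant, since $\neg P(v)\wedge\neg R'(v,v)$ already singles out exactly the $a_i$, for every $n\ge 0$.
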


\begin{proof}
We first define a first-order interpretation $\mathcal{I}$ that transforms $\mathfrak{G}_k$ into $\mathfrak{F}_k$ for $k \in \mathbb{N}$.
The domain formula is $\phi_{\mathrm{dom}}(x) \equiv \top$, and the interpreted binary relation $R$ is defined by
\[
\phi_R(x,y) := R'(x,y) \;\vee\; \bigl(P(x) \wedge \neg P(y) \wedge \neg R'(y,y)\bigr).
\]
Intuitively, $\phi_R$ keeps all $R'$-edges and, in addition, connects the unique element satisfying $P$ to every non-reflexive element.
In $\mathfrak{G}_k$, the predicate $P$ holds only at the isolated point $r$, and the non-reflexive elements are exactly the points $a_i$.
Hence, $\phi_R$ adds precisely the edges $(r,a_i)$, and therefore $\mathcal{I}(\mathfrak{G}_k)=\mathfrak{F}_k$.

Next, we relate the first-order theories of $\mathfrak{F}_k$ and $\mathfrak{G}_k$.
Let $\psi$ be a first-order sentence in the language of $\mathfrak{F}_k$, that is, over a single binary relation symbol $R$.
Let $\psi^*$ be the first-order sentence in the language of $\mathfrak{G}_k$ obtained by replacing each atomic subformula $R(u,v)$ in $\psi$ with $\phi_R(u,v)$.
Then for every $k\in\mathbb{N}$
\[
\mathfrak{F}_k \models \psi \iff\mathfrak{G}_k \models \psi^*.
\]

Now suppose, towards a contradiction, that there exists a first-order sentence $\theta$ that defines the McKinsey axiom on finite frames.
Let $\theta^*$ be the sentence in the language of $\mathfrak{G}_k$ obtained from $\theta$ by replacing each atomic subformula $R(u,v)$ with $\phi_R(u,v)$.
By Lemma~\ref{lem:local}, there exists an even number $n$ such that
$
\mathfrak{G}_{n} \models \theta^* \iff \mathfrak{G}_{n+1} \models \theta^*.
$
Using the interpretation equivalence above, it follows that
$
\mathfrak{F}_{n} \models \theta \iff \mathfrak{F}_{n+1} \models \theta.
$
On the other hand, by Lemma~\ref{lem:mckinsey-parity}, we have
$
\mathfrak{F}_n \vDash \Box\Diamond p \to \Diamond\Box p
$
and 
$
\mathfrak{F}_{n+1} \nvDash \Box\Diamond p \to \Diamond\Box p,
$
since $n$ is even and $n+1$ is odd.
As $\theta$ defines the McKinsey axiom on finite frames, this implies
$
\mathfrak{F}_n \models \theta
$
and
$
\mathfrak{F}_{n+1} \not\models \theta,
$
a contradiction.
\end{proof}

\vspace{-3ex}

\subsection{Modal corrrespondence hierarchy over finite frames}
From the preceding section, we know that McKinsey's Axiom $\Box\Diamond p \to \Diamond\Box p$ is not first-order definable over finite frames, while L\"ob's Axiom $\Box(\Box p \to p) \to \Box p$ is.
This suggests that, even when we restrict attention to finite frames, different modal formulas may occupy different
levels of definability strength.
This leads us to consider
the hierarchy of relational properties
that arises inside modal logic.

We briefly step back and recall the situation over arbitrary frames. Recall that the frame validity of a modal 
formula is expressible in monadic second-order logic (MSO). Inside MSO lies a hierarchy of
increasingly expressive logics:
\[ \text{FO} \subsetneq \text{FO+monTC} \subsetneq \text{FO+monLFP} \subsetneq \text{MSO}\]
where FO+monTC is the extension of first-order logic with an operator $[TC_{x,y}~\phi](u,v)$ for taking the transitive closure of definable binary relations; and FO+monLFP (also known as $\mu$FO) is the extension of 
first-order logic with an operator $[LFP_{X,x}~\phi](u)$ for taking the least fixed point of a definable operation on sets. 
Different modal axioms naturally fall in different levels of this hierarchy, intuitively corresponding to different degrees of non-first-orderness:
\begin{itemize}
\item The conjunction of the two modal axiom $[b]p \leftrightarrow p\land [a][b]p$ and $p\land [b](p\to [a]p)\to [b]p$
is valid on a frame if and only if $R_b$ is the reflexive-transitive closure of $R_a$. This property is not first-order  but is definable in FO+monTC. 
\item The L\"ob axiom falls inside FO+monLFP, as follows from its syntactic shape~\cite{Benthem2005:minimal}. 
The frame property it defines (i.e., transitivity and converse well-foundedness) is known to be  undefinable in the infinitary language  $L_{\omega_1\omega}$~\cite{Lopez1966}, 
and hence is also not expressible 
in
FO+monTC, since $\text{FO+monTC} \subseteq L_{\omega_1\omega}$.
\item The McKinsey axiom is `genuinely' MSO: the frame property it defines is not expressible in FO+monLFP, as was 
shown in~\cite{Benthem2005:minimal}.\footnote{This highest level may have a lot of further structure, given the encoding of consequence for monadic $\Pi^{1}_{1}$-sentences on standard models as frame consequence between their translations into modal formulas~\cite{Thomason75}.}
\end{itemize}

We now turn to the finite setting.
To begin with, the modal formula discussed above, expressing reflexive-transitive closure,
already provides an example of a frame condition that is definable in
\(\mathrm{FO+monTC}\) but not in \(\mathrm{FO}\),
even over finite frames.

To analyse the remaining levels, we appeal to descriptive complexity 
(see \cite{Immerman1999,graedel2007finite}).
Over finite structures, it is known that
\[
\mathrm{FO+monTC} \subseteq \mathbf{NL},
\qquad
\mathrm{FO+monLFP} \subseteq \mathbf{P}.
\]
More precisely, every class of finite structures definable
in $\mathrm{FO+monTC}$ is decidable in nondeterministic logarithmic space,
and every class definable in $\mathrm{FO+monLFP}$ is decidable in polynomial time. 

We use this observation to obtain non-definability results. Here is a striking instance:  

\vspace{-2ex}

\subsection{Deciding the McKinsey axiom is $\mathbf{coNP}$-complete.}
\begin{thm}\label{thm:mckinsey-conp}
The problem of deciding whether a finite frame validates
$
\Box\Diamond p \to \Diamond\Box p
$
is $\mathbf{coNP}$-complete.
\end{thm}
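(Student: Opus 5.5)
The proof has two parts. Membership in $\mathbf{coNP}$ comes from the fact that refuting the axiom has short, efficiently checkable witnesses. Hardness comes from a reduction from an $\mathbf{NP}$-complete hypergraph colouring problem to refutability of McKinsey's axiom, generalising the parity construction of Lemma~\ref{lem:mckinsey-parity}.

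\emph{Membership.} A finite frame $\mathfrak F=(W,R)$ fails to validate $\Box\Diamond p\to\Diamond\Box p$ iff there are a valuation $V(p)\subseteq W$ and a point $w$ with $\mathfrak F,V,w\models\Box\Diamond p\land\neg\Diamond\Box p$. Such a pair $(V,w)$ has size linear in $|W|$. Checking it amounts to model checking a fixed modal formula, which takes polynomial time. So the complement of the problem is in $\mathbf{NP}$, and the problem itself is in $\mathbf{coNP}$.

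\emph{Hardness.} First rewrite the refutation condition. With $\Box\Diamond p\land\Box\Diamond\neg p$ at $w$, each $R$-successor $v$ of $w$ must have a successor in $V(p)$ and a successor outside $V(p)$. That is, the valuation must \emph{split} every set $R[v]$ with $v\in R[w]$. This is exactly Set Splitting (hypergraph $2$-colourability, property B), which is $\mathbf{NP}$-complete, e.g.\ via NAE-3SAT. Lemma~\ref{lem:mckinsey-parity} is the special case where the hypergraph is a cycle graph, which is $2$-colourable iff it has even length.

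Given a hypergraph $H$ with vertex set $U$ and nonempty edges $E_1,\dots,E_m$ (with $m\ge1$), build the frame $\mathfrak F_H$ as follows.
\begin{itemize}
\item Its points are a root $r$, one point $e_j$ for each edge, and the points of $U$.
\item $r\,R\,e_j$ for all $j$, and $e_j\,R\,u$ iff $u\in E_j$.
\item Every $u\in U$ carries only a reflexive loop.
\end{itemize}
This construction is clearly polynomial.

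\emph{Checking the other points.} The axiom can only fail at $r$, which requires verifying the remaining points one at a time.
\begin{itemize}
\item At a point $u\in U$, whose only successor is $u$ itself, $\Box\Diamond p$ forces $p$ at $u$, and hence $\Diamond\Box p$ holds.
\item At $e_j$, all successors are reflexive singletons. So $\Box\Diamond p$ forces $p$ on all of $E_j$, and then $\Diamond\Box p$ holds because $E_j\neq\emptyset$.
\end{itemize}

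The one subtle point here is seriality. At a dead end, $\Box\Diamond p$ holds vacuously while $\Diamond\Box p$ fails, so a dead end refutes the axiom outright. The construction therefore has to avoid all dead ends, which is why we require nonempty edges and $m\ge1$ and put reflexive loops on $U$.

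\emph{Correctness.} At $r$ the axiom fails under $V$ iff every $e_j$ sees both a $p$-point and a $\neg p$-point. This holds iff $V(p)\cap U$ splits every edge of $H$. Hence $\mathfrak F_H$ validates McKinsey's axiom iff $H$ is \emph{not} $2$-colourable. This reduces the $\mathbf{coNP}$-complete complement of Set Splitting to our problem, which gives $\mathbf{coNP}$-hardness.

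I expect the main obstacle to be the step that checks every point other than $r$, since the extra structure of the frame must not create refutations of its own; the seriality requirement above handles this.
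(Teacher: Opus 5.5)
Your proposal is correct and matches the paper's proof: membership by guessing $(w,V)$ and checking $\Box\Diamond p\wedge\Box\Diamond\neg p$ in polynomial time, and hardness by reducing \textsc{Set Splitting} to the same three-layer frame (a root, one point per set, reflexive element points). Your check of the non-root points and your exclusion of dead ends via nonempty edges are more careful than the paper, which wrongly asserts that every non-root world has exactly one successor and does not rule out $\emptyset\in\mathcal F$.
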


\begin{proof}
To see that the problem lies in $\mathbf{coNP}$, it suffices to show that non-validity is in $\mathbf{NP}$.
Observe that
$
\mathfrak F \not\models \Box\Diamond p \to \Diamond\Box p
$
iff there exist a world $w \in W$ and a valuation $V$
such that
$
\mathfrak F,V,w \models \Box\Diamond p\wedge \Box\Diamond \neg p.
$
A nondeterministic algorithm can guess $(w,V)$
and verify in polynomial time whether
$\Box\Diamond p \wedge \Box\Diamond \neg p$
holds at $w$ under $V$.
Hence non-validity is in $\mathbf{NP}$,
and validity is in $\mathbf{coNP}$.

For $\mathbf{coNP}$-hardness, we reduce from the \textsc{Set Splitting} problem to non-validity of $\Box\Diamond p \to \Diamond\Box p$.
Recall that \textsc{Set Splitting} asks,
given a finite set $S$ and a family
$\mathcal F \subseteq \mathcal P(S)$,
whether there exists a partition
$S_1 \cup S_2=S$ with $S_1 \cap S_2 = \varnothing$
such that every $f \in \mathcal F$
intersects both $S_1$ and $S_2$.
This problem is $\mathbf{NP}$-complete.

Given an instance $(S,\mathcal F)$,
we construct in polynomial time
a finite frame
$\mathfrak F_{S,\mathcal F}=(W,R)$ as follows.
Let
\[
W \ :=\ \{w\} \ \cup\ \mathcal F \ \cup\ S,
\]

\vspace{1mm}
\noindent where all elements are taken to be distinct worlds.
The accessibility relation $R$ is defined by:
\begin{itemize}
\item $wRf$ for all $f \in \mathcal F$;
\item for $f \in \mathcal F$ and $s \in S$, $fRs$ iff $s \in f$ (recall that $f \subseteq S$);
\item $sRs$ for all $s \in S$.
\end{itemize}
No further edges are present.

Note that every world $u \neq w$ has exactly one $R$-successor.
Hence $\Box\Diamond p \to \Diamond\Box p$ is valid at every such world.
Therefore failure of the formula, if any, must occur at $w$.

We claim that $(S,\mathcal F)$ is a yes-instance of \textsc{Set Splitting}
iff
$
\mathfrak F_{S,\mathcal F}
\not\models
\Box\Diamond p \to \Diamond\Box p.
$

\smallskip
\noindent
($\Rightarrow$)
Suppose $S=S_1 \cup S_2$ is a valid splitting.
Define a valuation $V$ by setting $V(p) = S_1$.

For every $f \in \mathcal F$, the splitting property yields $f \cap S_1 \neq \varnothing$.
Hence there exists $s \in f \cap S_1$
with $fRs$ and $s \models p$,
so $f \models \Diamond p$.
Since $w$ sees exactly all $f \in \mathcal F$,
we obtain
$
\mathfrak F_{S,\mathcal F},V,w \models \Box\Diamond p.
$
On the other hand,
because $f \cap S_2 \neq \varnothing$,
there exists $s \in f \cap S_2$ with $fRs$ and $s \not\models p$.
Thus $f \not\models \Box p$.
Consequently no $f$ satisfies $\Box p$, and therefore
$
\mathfrak F_{S,\mathcal F},V,w
\not\models \Diamond\Box p.
$

\smallskip
\noindent
($\Leftarrow$)
Conversely,
assume
$
\mathfrak F_{S,\mathcal F}
\not\models
\Box\Diamond p \to \Diamond\Box p.
$
Then there exists a valuation $V$ such that
$
\mathfrak F_{S,\mathcal F},V,w \models \Box\Diamond p$
and 
$
\mathfrak F_{S,\mathcal F},V,w \not\models \Diamond\Box p.
$

Define
$
S_1 := V(p)\cap S$
and 
$
S_2 := S \setminus S_1.
$
From $w \models \Box\Diamond p$ it follows that for every $f \in \mathcal F$ there exists $s \in f$ with $s \in S_1$.
From $w \not\models \Diamond\Box p$ it follows that for every $f \in \mathcal F$ there exists $s \in f$ with $s \in S_2$.
Thus each $f$ meets both $S_1$ and $S_2$, so $(S_1,S_2)$ is a valid splitting.

\medskip
We have  established a polynomial-time reduction from \textsc{Set Splitting}
to non-validity of $\Box\Diamond p \to \Diamond\Box p$.
Since \textsc{Set Splitting} is $\mathbf{NP}$-complete,
non-validity is $\mathbf{NP}$-hard, and validity is $\mathbf{coNP}$-hard.
Combining hardness with membership in $\mathbf{coNP}$
yields $\mathbf{coNP}$-completeness.
\end{proof}

From Theorem~\ref{thm:mckinsey-conp} and the fact that
$\mathrm{FO+monLFP}$-definable classes of finite structures lie in $\mathbf{P}$, we obtain the following corollary.
\begin{cor}
The class of finite frames defined by the McKinsey axiom $\Box\Diamond p \to \Diamond\Box p$ is not definable
in $\mathrm{FO+monLFP}$ unless $\mathbf{P}=\mathbf{NP}$.
\end{cor}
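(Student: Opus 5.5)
The plan is a direct contrapositive argument combining Theorem~\ref{thm:mckinsey-conp} with the descriptive-complexity fact recalled just before it: every class of finite structures definable in $\mathrm{FO+monLFP}$ is decidable in polynomial time. Suppose, for contradiction, that there is an $\mathrm{FO+monLFP}$ sentence $\theta$ over the single binary relation $R$ such that, for every finite frame $\mathfrak F$, $\mathfrak F\models\theta$ iff $\mathfrak F\models\Box\Diamond p\to\Diamond\Box p$.

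The first step is to turn $\theta$ into a polynomial-time algorithm for the validity problem. Since $\theta$ is fixed, model checking it on a finite frame (given, say, as an adjacency matrix) runs in polynomial time in the size of the frame. Concretely, each least fixed point over a monadic relation variable stabilizes after at most $|W|$ stages, and each stage is a first-order evaluation with a fixed number of variables. So validity of the McKinsey axiom on finite frames would lie in $\mathbf P$.

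The second step invokes Theorem~\ref{thm:mckinsey-conp}: this validity problem is $\mathbf{coNP}$-complete under polynomial-time reductions, via the reduction from \textsc{Set Splitting}. A polynomial-time algorithm for it therefore yields $\mathbf{coNP}\subseteq\mathbf P$. Since $\mathbf P$ is closed under complement, this gives $\mathbf{NP}\subseteq\mathbf P$, i.e.\ $\mathbf P=\mathbf{NP}$. Contrapositively, if $\mathbf P\neq\mathbf{NP}$, no such $\theta$ exists.

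The only point needing care, though it is not a real obstacle, is the input encoding. The frames produced by the \textsc{Set Splitting} reduction must be the same finite structures on which the $\mathrm{FO+monLFP}$ evaluation is performed, and the encoding sizes must be polynomially related. This holds because $\mathfrak F_{S,\mathcal F}$ has $1+|\mathcal F|+|S|$ worlds and is built explicitly in polynomial time. No ordering on the domain is needed, since we only use the upper bound $\mathrm{FO+monLFP}\subseteq\mathbf P$, not a capture theorem.
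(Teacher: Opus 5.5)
Your proposal is correct and is exactly the paper's argument: combine the $\mathbf{coNP}$-completeness of McKinsey-validity on finite frames (Theorem~\ref{thm:mckinsey-conp}) with the upper bound $\mathrm{FO+monLFP}\subseteq\mathbf P$, so an $\mathrm{FO+monLFP}$ definition would put a $\mathbf{coNP}$-complete problem in $\mathbf P$ and force $\mathbf P=\mathbf{NP}$. Your additional remarks fill in details the paper leaves implicit, and they are accurate: fixed-sentence evaluation is polynomial, which still holds for nested and parametrized fixed points, and no domain ordering is needed because only the upper bound is used.
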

\vspace{-3ex}

\subsection{A $\mathbf{P}$-complete modal formula.}
\
We now turn to a modal formula that is definable in
$\mathrm{FO+monLFP}$ but not in $\mathrm{FO+monTC}$,
unless $\mathbf{NL}=\mathbf{P}$. Define
\[
\phi_h \ :=\ 
\Diamond(\Box^- t \wedge \Box^+ \neg t) \wedge
(\Diamond^+ t \rightarrow \Box^+ t).
\]
We will see that checking whether $\phi_h$ holds at a world $s$ in a given finite model amounts to solving a Horn-UNSAT instance.

\begin{thm}\label{thm:horn-p}
Deciding whether a finite frame validates $\phi_h$ is $\mathbf{P}$-hard under logspace many-one reductions.
\end{thm}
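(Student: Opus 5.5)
The plan is a logspace reduction from \textsc{Horn-UNSAT}, which is $\mathbf{P}$-complete under logspace reductions. I read $\Diamond$, $\Box^-$, $\Box^+$ (with dual $\Diamond^+$) as three modalities over relations $R$, $R^-$, $R^+$. The intended reading is that $\Box^-$ looks at the body of a Horn clause, $\Box^+$ at its head, and $t$ at the set of variables made true.

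\textbf{Construction.} Given a Horn formula $\Gamma$ over variables $X$, build the finite frame $\mathfrak F_\Gamma$ as follows.
\begin{itemize}
\item The worlds are one world per variable and one per clause.
\item For a clause $C = (x_1\wedge\dots\wedge x_k\to y)$, put $C\,R^-\,x_i$ for each $i$ and $C\,R^+\,y$. Facts get no $R^-$-edges. Goal clauses $(x_1\wedge\dots\wedge x_k\to\bot)$ get no $R^+$-edge.
\item Put $u\,R\,C$ for \emph{every} world $u$ and every clause world $C$.
\end{itemize}
Each entry of these relations depends only on local features of the input (is $v$ in the body or head of $C$), so it can be printed in logspace.

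\textbf{Correctness.} First, every world has at most one $R^+$-successor. Hence $\Diamond^+ t\to\Box^+ t$ is valid at every world under every valuation, and the second conjunct of $\phi_h$ is harmless.

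Second, since $R$ is uniform, $\Diamond(\Box^- t\wedge\Box^+\neg t)$ has the same truth value at every world. Under a valuation $V$, read $V(t)\cap X$ as an assignment. A clause world $C$ satisfies $\Box^- t\wedge\Box^+\neg t$ iff all body variables of $C$ are true and its head is false (vacuously so for goal clauses), i.e.\ iff $C$ is violated. The values of $t$ on clause worlds are irrelevant.

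Therefore $\mathfrak F_\Gamma\models\phi_h$ iff every assignment violates some clause, i.e.\ iff $\Gamma$ is unsatisfiable. This gives $\mathbf{P}$-hardness.

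\textbf{Main obstacle.} The delicate step is the frame-validity requirement: $\phi_h$ must hold at \emph{all} worlds, not just a designated root. The first conjunct is existential, so worlds without $R$-successors would automatically refute it. I handle this by making $R$ point from every world to all clause worlds, and by keeping $R^+$ at most functional so that the second conjunct never bites. If the intended semantics of $\Box^\pm$ differs (for instance, converse modalities), I would adapt the edge directions accordingly while keeping the same two invariants.
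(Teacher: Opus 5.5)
Your proposal is correct and follows essentially the same route as the paper's proof. Both reduce from \textsc{Horn-UNSAT} to the same frame: clause and variable worlds, $R$ from every world to every clause world, and $R^-$/$R^+$ pointing to the body and head variables. Both arguments then use the same two facts: $R^+$ has outdegree at most one, so $\Diamond^+ t\to\Box^+ t$ is valid; and a clause world satisfies $\Box^- t\wedge\Box^+\neg t$ exactly when the induced assignment violates that clause.
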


\begin{proof}
We reduce from the $\mathbf{P}$-complete problem \textsc{Horn-UNSAT}.
Let $\alpha = C_1 \wedge \dots \wedge C_n$ be a Horn formula over variables 
$\mathsf{Var}(\alpha)=\{P_1,\dots,P_m\}$, where each clause $C_j$ is of the form
\[
\neg P_{i_1} \vee \dots \vee \neg P_{i_\ell} \vee P_h,
\]
with at most one positive literal.
The problem \textsc{Horn-UNSAT} asks whether $\alpha$ is unsatisfiable. It is well known that \textsc{Horn-UNSAT} is $\mathbf{P}$-complete under logspace reductions.

We define a logspace-computable reduction
$
\alpha \mapsto \mathfrak F_\alpha ,
$
where $\mathfrak F_\alpha = (W_\alpha, R, R^+, R^-)$ is a finite frame defined as follows.
The domain consists of clause and variable nodes:
\[
W_\alpha = \{c_1,\dots,c_n\} \cup \{p_1,\dots,p_m\}.
\]
The relation $R$ connects every world to every clause node:
\[
w R c_j \quad \text{for all } w \in W_\alpha \text{ and } 1 \le j \le n,
\]
and there are no other $R$-edges.
For each clause $C_j$, let $\mathrm{Pos}(C_j)$ and $\mathrm{Neg}(C_j)$ denote the sets of variables occurring positively and negatively in $C_j$, respectively.
We define the relations $R^+$ and $R^-$ by
\[
c_j R^+ p_i \text{ iff } P_i \in \mathrm{Pos}(C_j), \qquad
c_j R^- p_i \text{ iff } P_i \in \mathrm{Neg}(C_j),
\]
and there are no other pairs in $R^+$ or $R^-$.

The mapping $\alpha \mapsto \mathfrak F_\alpha$ is computable in logarithmic space.
The size of $\mathfrak F_\alpha$ is polynomial in $|\alpha|$.
The relation $R$ is trivial to compute, since $(w,c_j) \in R$ holds for every $w \in W_\alpha$ and every clause node $c_j$, and there are no other $R$-edges.
To determine whether a pair $(c_j, p_i)$ belongs to $R^+$ or $R^-$, it suffices to inspect the clause $C_j$
and check whether $P_i$ or $\neg P_i$ occurs in it.
This can be done using only logarithmic space to store the indices $j$ and $i$,
while the input formula $\alpha$ can be re-scanned as needed.

\medskip
\noindent\textbf{Claim.}
For every Horn formula $\alpha$, 
$\alpha$ is unsatisfiable if and only if 
$\mathfrak F_\alpha \models \phi_h$.

\smallskip
\noindent\emph{Proof of Claim.}
\smallskip
\emph{($\Rightarrow$)}
Assume $\alpha$ is unsatisfiable. 
Let $V$ be any valuation for $t$ on $\mathfrak F_\alpha$ and $w\in W_\alpha$.
Define the propositional assignment $I_V$ by
\[
I_V(P_i)=\mathsf{true}
\quad\text{iff}\quad
\mathfrak F_\alpha,V,p_i\models t.
\]
Since $\alpha$ is unsatisfiable, $I_V\not\models\alpha$.
Thus some clause $C_k := \neg P_{i_1}\vee\dots\vee\neg P_{i_\ell}\vee P_h$ is falsified by $I_V$.
This means that
$I_V(P_{i_r})=\mathsf{true}$ for all $r$ and
$I_V(P_h)=\mathsf{false}$.
By construction of $R^-$ and $R^+$, all $R^-$-successors of $c_k$ satisfy $t$, and every $R^+$-successor satisfies $\neg t$.
Hence
\[
\mathfrak F_\alpha,V,c_k\models
\Box^- t \wedge \Box^+\neg t.
\]
Since $wRc_k$, we obtain
$
\mathfrak F_\alpha,V,w\models
\Diamond(\Box^- t \wedge \Box^+\neg t).
$
We then have
$\mathfrak F_\alpha\models\Diamond(\Box^- t \wedge \Box^+\neg t)$, as $V$ and $w$ were arbitrary.
Moreover, each clause contains at most one positive literal,
so every world has $R^+$-outdegree at most one.
Hence $\Diamond^+ t\to\Box^+ t$ is valid on $\mathfrak F_\alpha$.
Therefore $\mathfrak F_\alpha\models\phi_h$.

\smallskip
\emph{($\Leftarrow$)}
We prove the contrapositive.
Assume that $\alpha$ is satisfiable, i.e.\ there exists a propositional assignment $I:\mathsf{Var}(\alpha)\to\{\mathsf{true},\mathsf{false}\}$ such that $I\models\alpha$.
We show that $\mathfrak F_\alpha \not\models \phi_h$.
Since,
$\Diamond^+ t \rightarrow \Box^+ t$
is valid on $\mathfrak F_\alpha$ by construction of $R^+$,
it suffices to exhibit a valuation $V$ and a world $w$ such that
$
\mathfrak F_\alpha,V,w \not\models
\Diamond(\Box^- t \wedge \Box^+\neg t).
$

Define a valuation $V$ on $\mathfrak F_\alpha$ by
$p_i\in V(t)$ iff $I(P_i)=\mathsf{true}$.
Fix any world $w$.
By construction, the $R$-successors of $w$ are exactly the clause nodes.
Hence
$
\mathfrak F_\alpha,V,w \models
\Diamond(\Box^- t \wedge \Box^+\neg t)
$
if and only if there exists a clause node $c_k$ such that
\[
\mathfrak F_\alpha,V,c_k \models
\Box^- t \wedge \Box^+\neg t.
\]
Let
$
C_k=\neg P_{i_1}\vee\dots\vee\neg P_{i_\ell}\vee P_h.
$
Because $I\models C_k$, at least one disjunct is true.
If $I(P_h)=\mathsf{true}$, then $\mathfrak F_\alpha,V,c_k\models\Diamond^+ t$.
Otherwise some $I(P_{i_r})=\mathsf{false}$, and $\mathfrak F_\alpha,V,c_k\models\Diamond^-\neg t$.
In either case,
$\mathfrak F_\alpha,V,c_k\not\models\Box^- t \wedge \Box^+\neg t$.

Thus no world satisfies
$\Diamond(\Box^- t \wedge \Box^+\neg t)$ under $V$,
and so $\mathfrak F_\alpha\not\models\phi_h$.

\medskip
Therefore the reduction is correct.
Since \textsc{Horn-UNSAT} is $\mathbf{P}$-complete,
the problem of deciding whether a finite frame validates $\phi_h$ is $\mathbf{P}$-hard under logspace reductions.
\end{proof}

Next, we exhibit an $\mathrm{FO+monLFP}$ sentence defining the same class of finite frames as $\phi_h$.
More precisely, we will show that over finite frames,
$\phi_h$ is equivalent to the $\mathrm{FO+monLFP}$ sentence
\[
\forall y\,\neg\mathrm{HornSat}(y)\ \wedge\ \mathrm{Func}(R^+),
\]
where $\mathrm{Func}(R^+)$ expresses that $R^+$ is functional, i.e., \[
\mathrm{Func}(R^+):=\forall x\forall y\forall z\,((R^+(x,y)\wedge R^+(x,z))\to y=z),
\]
and
$\mathrm{HornSat}(y)$ is a monadic least fixed-point formula expressing that the Horn formula induced by $y$ is satisfiable.
We construct this sentence step by step.
To this end, we begin by introducing some key constructions.

Fix a finite frame $\mathfrak F=(W,R,R^+,R^-)$ in which $R^+$ is functional.
For $a\in W$, let $\mathrm{Succ}_R(a)$, $\mathrm{Succ}_{R^-}(a)$, and $\mathrm{Succ}_{R^+}(a)$ denote the sets of $R$-, $R^-$-, and $R^+$-successors of $a$, respectively.
By functionality of $R^+$, $\mathrm{Succ}^+(a)$ has at most one element.
We further define
\[
\mathrm{Succ}^*(a):=\mathrm{Succ}^-(a)\cup \mathrm{Succ}^+(a).
\]

\begin{defi}\label{def:horn-formula}
Let $\mathfrak F=(W,R,R^-,R^+)$ be a finite frame in which $R^+$ is functional,
and let $a\in W$.
Enumerate the set
$
\bigcup_{c\in \mathrm{Succ}_R(a)} \mathrm{Succ}^*(c)
$
as $\{p_1,\dots,p_m\}$, and introduce the corresponding propositional variables
$\{P_1,\dots,P_m\}$.
Enumerate $\mathrm{Succ}_R(a)$ as $\{c_1,\dots,c_n\}$.
For each $i\in\{1,\dots,n\}$, define the Horn clause
\[
C_i
\ :=\
\left(
\bigvee_{p_j\in \mathrm{Succ}^-(c_i)} \neg P_j
\right)
\ \vee\
\left(
\bigvee_{p_j\in \mathrm{Succ}^+(c_i)} P_j
\right).
\]
The Horn formula associated with $\mathfrak F,a$ is defined as
\[
\mathrm{Horn}_{\mathfrak F,a}
\ :=\
\bigwedge_{i=1}^{n} C_i.
\]
\end{defi}

Intuitively, each $R$-successor $c$ of $a$ is thought of as a Horn clause: its $R^-$-successors yield the negative literals,
and its (unique, if any) $R^+$-successor yields the positive literal.
\medskip

% ----------  FO+LFP ingredient  ----------
Now we express the standard polynomial-time forward-chaining algorithm for Horn-SAT problem of formula $\mathrm{Horn}_{\mathfrak F,a}$ within FO+monLFP.
\begin{defi}
 We define the monadic LFP formula $\mathrm{HornCl}(y,z)$ by
\[
\mathrm{HornCl}(y,z) := [\mathrm{LFP}_{T,x}\,\alpha(T,x,y)](z),
\]
where 
\[
\alpha(T,x,y):=\exists u \Bigl(R(y,u)\wedge R^+(u,x)\wedge \forall v\,(R^-(u,v)\to T(v))\Bigr).
\]
\end{defi}

By construction, the predicate $\mathrm{HornCl}(y,z)$ computes precisely the assignment produced by the standard forward-chaining algorithm for Horn-SAT applied to $\mathrm{Horn}_{\mathfrak F,y}$.
We can therefore use this fixed-point predicate to express satisfiability of
$\mathrm{Horn}_{\mathfrak F,y}$ 
within $\mathrm{FO+monLFP}$.

\begin{defi}
We define
\[
\mathrm{HornSat}(y):=
\forall x\Bigl(
R(y,x) \to
\bigl(
\exists z\,(R^-(x,z)\wedge \neg\mathrm{HornCl}(y,z))
\ \vee\ 
\exists z\,(R^+(x,z)\wedge \mathrm{HornCl}(y,z))
\bigr)
\Bigr).
\]
\end{defi}

Intuitively, $\mathrm{HornSat}(y)$ states that for every clause $C_x$ induced by an $R$-successor $x$ of $y$, either some negative literal of $C_x$ is satisfied or its positive literal is satisfied, and hence $C_x$ itself is satisfied.

The following lemma makes this correspondence precise.
\begin{lem}\label{lem:chi-horn-modal}
Let $\mathfrak F=(W,R,R^+,R^-)$ be a finite frame such that $R^+$ is functional,
and let $a\in W$.
Then the following are equivalent:
\begin{enumerate}
\item $\mathfrak F \models \mathrm{HornSat}(a)$
\item $\mathrm{Horn}_{\mathfrak F,a}$ is satisfiable
\item there exists a valuation $V$ for the propositional letter $t$ such that
$
\mathfrak F,V,a\models \Box\bigl(\Diamond^- \neg t\vee \Diamond^+ t\bigr).
$
\end{enumerate}
\end{lem}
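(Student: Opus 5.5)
I will prove the cycle (2)$\Leftrightarrow$(3) directly, then (1)$\Leftrightarrow$(2) via the standard analysis of forward chaining.

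\emph{(2)$\Leftrightarrow$(3).} This is a direct translation between assignments and valuations. Given an assignment $I$ satisfying $\mathrm{Horn}_{\mathfrak F,a}$, put $V(t)=\{p_j \mid I(P_j)=\mathsf{true}\}$. Values of $t$ outside $\{p_1,\dots,p_m\}$ are irrelevant, since only $R^\pm$-successors of $R$-successors of $a$ are inspected. An $R$-successor $c_i$ satisfies $\Diamond^-\neg t\vee\Diamond^+ t$ exactly when some negative literal or the positive literal of $C_i$ is true under $I$. Conversely, from a valuation $V$ read off $I_V(P_j)=\mathsf{true}$ iff $p_j\in V(t)$; the same clause-by-clause reading applies. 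One degenerate case needs care: a clause node $c$ with no $R^\pm$-successors yields the empty clause. It is unsatisfiable, and correspondingly $\Diamond^-\neg t\vee\Diamond^+t$ fails at $c$, so the two sides still match.

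\emph{(1)$\Leftrightarrow$(2).} First I identify the fixed point. Let $T_a=\{z \mid \mathfrak F\models\mathrm{HornCl}(a,z)\}$. By definition of $\alpha$, the operator adds $x$ whenever some clause $u\in\mathrm{Succ}_R(a)$ has positive literal $x$ and all its negative literals already lie in $T$. The operator is monotone because $T$ occurs positively. Its least fixed point, reached in finitely many stages over a finite frame, is exactly the output of forward chaining: the minimal set of variables forced true by the definite clauses of $\mathrm{Horn}_{\mathfrak F,a}$. I will show by induction on stages that every element of $T_a$ lies in every model of the definite part. Conversely, $T_a$ itself, read as an assignment, satisfies every clause that has a positive literal: if all negative literals are in $T_a$, closure puts the positive literal in $T_a$.

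\emph{The key step.} This is the classical fact that a Horn formula is satisfiable iff the minimal model $T_a$ of its definite clauses satisfies it, i.e.\ iff every purely negative clause has some variable outside $T_a$. The forward direction: if $I\models\mathrm{Horn}_{\mathfrak F,a}$, then $T_a\subseteq I$ by minimality. A negative clause satisfied by $I$ has a false variable under $I$, which therefore lies outside $T_a$. Given this fact, $\mathrm{HornSat}(a)$ says precisely that $T_a$ satisfies each clause $C_x$ for $x\in\mathrm{Succ}_R(a)$. The disjunct $\exists z(R^-(x,z)\wedge\neg\mathrm{HornCl}(a,z))$ covers the case where some negative literal holds under $T_a$. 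The disjunct $\exists z(R^+(x,z)\wedge\mathrm{HornCl}(a,z))$ covers the case where the (unique, by functionality) positive literal holds. Hence (1) iff $T_a\models\mathrm{Horn}_{\mathfrak F,a}$ iff (2).

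The main obstacle is the bookkeeping in the fixed-point step: checking that the formula's parameter $y$ restricts attention to the clauses of $a$ alone, and that variables outside $\bigcup_c\mathrm{Succ}^*(c)$ can never enter $T_a$. Also needed is a careful statement of the minimal-model argument. The rest is routine unfolding of definitions.
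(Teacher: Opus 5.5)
Your proof is correct and is essentially the paper's argument. The paper runs the cycle (1)$\Rightarrow$(3)$\Rightarrow$(2)$\Rightarrow$(1): it takes $V(t)$ to be the fixed-point set $\{p \mid \mathfrak F\models\mathrm{HornCl}(a,p)\}$ for (1)$\Rightarrow$(3) and reads off an assignment for (3)$\Rightarrow$(2), while you split the argument into (2)$\Leftrightarrow$(3) and (1)$\Leftrightarrow$(2), both resting on the same identification of $\mathrm{HornCl}(a,\cdot)$ with the forward-chaining minimal model. In (2)$\Rightarrow$(1) the paper only asserts that the fixed point satisfies every clause because it is the output of the standard algorithm. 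Your explicit minimal-model step supplies the justification the paper leaves out: every satisfying assignment contains $T_a$, so each purely negative clause has a variable outside $T_a$, and closure handles the clauses with a positive literal. This step is also exactly where functionality of $R^+$ is needed.
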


\begin{proof}

$(1)\Rightarrow(3)$:
Assume $\mathfrak F\models \mathrm{HornSat}(a)$.
Then, for every $c\in \mathrm{Succ}_R(a)$, either
\[
\mathfrak F \models \exists z\,(R^-(c,z)\wedge \neg\mathrm{HornCl}(a,z))
\quad\text{or}\quad
\mathfrak F\models \exists z\,(R^+(c,z)\wedge \mathrm{HornCl}(a,z)).
\]

Define a valuation $V$ for the propositional letter $t$ by
$
V(t) := \{\,p\in W \mid \mathfrak F\models \mathrm{HornCl}(a,p)\,\}.
$
Then $
\mathfrak F,V,c\models \Diamond^- \neg t\ \vee\ \Diamond^+ t.
$
Since $c\in \mathrm{Succ}_R(a)$ was arbitrary, it follows that
$
\mathfrak F,V,a\models \Box\bigl(\Diamond^- \neg t\vee\Diamond^+ t\bigr).
$

$(3)\Rightarrow(2)$:
Assume that there exists a valuation $V$ such that
$
\mathfrak F,V,a\models \Box\bigl(\Diamond^- \neg t\vee\Diamond^+ t\bigr).
$
Define a propositional assignment $\nu$ for the variables of
$\mathrm{Horn}_{\mathfrak F,a}$ by
\[
\nu(P_j)=\mathsf{true}
\quad\text{iff}\quad
\mathfrak F,V,p_j\models t
\quad\text{iff}\quad
p_j\in V(t).
\]
% where $P_j$ corresponds to the node $p_j$.
We show that $\nu$ satisfies every clause $C_i$.
Fix $c_i\in \mathrm{Succ}_R(a)$.
By the assumption, we have
\[
\mathfrak F,V,c_i\models \Diamond^- \neg t \vee \Diamond^+ t.
\]
If $\mathfrak F,V,c_i\models \Diamond^- \neg t$, choose $p_j$ such that
$R^-(c_i,p_j)$ and $\mathfrak F,V,p_j\models \neg t$.
Then $\nu(P_j)=\mathsf{false}$, so the negative literal $\neg P_j$ is true, and since $\neg P_j$ occurs in $C_i$, the clause $C_i$ is satisfied.
Otherwise $\mathfrak F,V,c_i\models \Diamond^+ t$.
Choose $p_j$ such that $R^+(c_i,p_j)$ and $\mathfrak F,V,p_j\models t$.
Then $\nu(P_j)=\mathsf{true}$, so the (unique, if any) positive literal $P_j$ is true, 
and since $P_j$ occurs in $C_i$, the clause $C_i$ is satisfied.

Thus every clause $C_i$ is satisfied by $\nu$,
and hence $\mathrm{Horn}_{\mathfrak F,a}$ is satisfiable.

\smallskip

$(2)\Rightarrow(1)$:
Assume $\mathrm{Horn}_{\mathfrak F,a}$ is satisfiable.
We define
\[
V_{\mathfrak F,a}:=\{p\in W\mid \mathfrak F\models \mathrm{HornCl}(a,p)\}.
\] 
The fixed point $V_{\mathfrak F,a}$ determines a propositional assignment for the variables
$\{P_1,\dots,P_m\}$ that occur in
$\mathrm{Horn}_{\mathfrak F,a}$.
More precisely, we define a propositional assignment
$
\nu : \{P_1,\dots,P_m\} \to \{\mathsf{true},\mathsf{false}\}
$
by
\[
\nu(P_j)=\mathsf{true}\ \text{ iff }\ p_j \in V_{\mathfrak F,a}.
\]
By construction of the least fixed point, $V_{\mathfrak F,a}$ coincides with the set of variables made $\mathsf{true}$ by the standard polynomial-time algorithm for Horn-SAT.
Indeed, $\mathrm{HornCl}(a,x)$ describes the iterative steps of the algorithm, and at each stage $k$ the set $V_k$ corresponds to the assignment maintained by the algorithm.
Hence $\nu$ satisfies every clause $C_i$ of
$\mathrm{Horn}_{\mathfrak F,a}$.

Thus for every clause $C_i$, either
$\nu(P_j)=\mathsf{false}$ for some negative literal $\neg P_j$ in $C_i$,
or $\nu(P_j)=\mathsf{true}$ for the (unique, if any) positive literal $P_j$ in $C_i$.
Equivalently, either $p_j\notin V_{\mathfrak F,a}$ for some $p_j\in \mathrm{Succ}^-(c_i)$,
or $p_j\in V_{\mathfrak F,a}$ for some
$p_j\in \mathrm{Succ}^+(c_i)$.

Hence for every $c_i\in \mathrm{Succ}_R(a)$, either
\[\mathfrak F \models \exists z\,(R^-(c_i,z)\wedge \neg\mathrm{HornCl}(a,z))
\quad\text{or}\quad
\mathfrak F\models \exists z\,(R^+(c_i,z)\wedge \mathrm{HornCl}(a,z)).
\]
Thus $\mathfrak F\models \mathrm{HornSat}(a)$.
\end{proof}

Lemma \ref{lem:chi-horn-modal} immediately yields the following.
\begin{prop}\label{prop:chi-func-modal}
Let $\mathfrak F=(W,R,R^+,R^-)$ be a finite frame.
Then the following are equivalent:
\begin{enumerate}
\item $\mathfrak F \models \forall y\,\neg\mathrm{HornSat}(y)
      \ \wedge\ \mathrm{Func}(R^+)$;
\item $\mathfrak F \models
      \Diamond(\Box^- t \wedge \Box^+ \neg t)
      \wedge
      (\Diamond^+ t \to \Box^+ t)$.
\end{enumerate}
\end{prop}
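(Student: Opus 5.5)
The plan is to treat the two conjuncts of $\phi_h$ separately, since frame validity of a conjunction is the conjunction of frame validities. First, I would show that $\mathfrak F\models \Diamond^+ t\to\Box^+ t$ holds if and only if $\mathfrak F\models\mathrm{Func}(R^+)$.

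\begin{itemize}
\item If $R^+$ is functional, any world with an $R^+$-successor satisfying $t$ has only that successor. So the implication holds everywhere under every valuation.
\item Conversely, suppose some $x$ has two distinct $R^+$-successors $y\neq z$. Take the valuation $V(t)=\{y\}$. Then $x\models\Diamond^+ t\wedge\neg\Box^+ t$, refuting the conjunct.
\end{itemize}

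Hence, in both directions of the proposition, we may assume $R^+$ is functional. This is exactly the standing hypothesis needed to invoke Lemma~\ref{lem:chi-horn-modal}.

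Second, under functionality of $R^+$, I would show that $\mathfrak F\models \Diamond(\Box^- t\wedge\Box^+\neg t)$ if and only if $\mathfrak F\models\forall y\,\neg\mathrm{HornSat}(y)$. The key observation is a duality: $\neg\Diamond(\Box^- t\wedge\Box^+\neg t)$ is equivalent to $\Box(\Diamond^-\neg t\vee\Diamond^+ t)$. Frame validity of the diamond formula fails iff there are a world $a$ and a valuation $V$ with $\mathfrak F,V,a\models\Box(\Diamond^-\neg t\vee\Diamond^+ t)$. By Lemma~\ref{lem:chi-horn-modal}, (3)$\Leftrightarrow$(1), that holds iff $\mathfrak F\models\mathrm{HornSat}(a)$ for some $a$. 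Negating both sides gives the desired equivalence with $\forall y\,\neg\mathrm{HornSat}(y)$.

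Combining the two steps yields the proposition. For (1)$\Rightarrow$(2), functionality gives the second conjunct, and the duality plus the lemma give the first. For (2)$\Rightarrow$(1), the second conjunct gives functionality. Once functionality is available, the lemma applies and yields $\forall y\,\neg\mathrm{HornSat}(y)$.

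The only delicate point is ordering in the (2)$\Rightarrow$(1) direction. Lemma~\ref{lem:chi-horn-modal} presupposes that $R^+$ is functional, so we must extract functionality from the frame validity of $\Diamond^+ t\to\Box^+ t$ before using the lemma. Apart from that, the argument is routine bookkeeping with frame validity, which quantifies over all valuations of $t$ and all worlds.
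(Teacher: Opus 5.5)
Your proof is correct and is the argument the paper leaves implicit when it says Lemma~\ref{lem:chi-horn-modal} ``immediately yields'' the proposition: you split the conjunction, match frame validity of $\Diamond^+ t\to\Box^+ t$ with $\mathrm{Func}(R^+)$, and combine the duality $\neg\Diamond(\Box^- t\wedge\Box^+\neg t)\equiv\Box(\Diamond^-\neg t\vee\Diamond^+ t)$ with the equivalence (1)$\Leftrightarrow$(3) of the lemma. Your observation that functionality must be extracted before the lemma is invoked in the (2)$\Rightarrow$(1) direction is the one point the paper glosses over, and you handle it correctly.
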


From Theorem~\ref{thm:horn-p}, Proposition~\ref{prop:chi-func-modal},
and the fact that $\mathrm{FO+monTC}$-definable classes of finite
structures lie in $\mathbf{NL}$, we obtain the following corollary.
\begin{cor}\label{cor:phi-not-montc}
The modal formula
$
\Diamond(\Box^- t \wedge \Box^+ \neg t)
\wedge
(\Diamond^+ t \to \Box^+ t)
$
is definable in $\mathrm{FO+monLFP}$ but not in $\mathrm{FO+monTC}$
over finite frames, unless $\mathbf{NL}=\mathbf{P}$.
\end{cor}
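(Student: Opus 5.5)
The corollary has two parts, and I would prove them separately.

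\emph{Definability in FO+monLFP.} By Proposition~\ref{prop:chi-func-modal}, a finite frame validates $\phi_h$ exactly when it satisfies
\[
\forall y\,\neg\mathrm{HornSat}(y)\wedge\mathrm{Func}(R^+).
\]
So it suffices to check that this is a genuine FO+monLFP sentence.

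\begin{itemize}
\item $\mathrm{Func}(R^+)$ is first-order.
\item $\mathrm{HornCl}(y,z)$ is a monadic least fixed point $[\mathrm{LFP}_{T,x}\,\alpha](z)$ with the parameter $y$. In $\alpha$, $T$ occurs only positively: it appears under $\forall v(R^-(u,v)\to T(v))$, which is positive in $T$. So the fixed point is well defined.
\item $\mathrm{HornSat}(y)$ uses $\mathrm{HornCl}$ both positively and negatively. This is harmless, because negating a closed fixed-point subformula is allowed in FO+monLFP.
\end{itemize}

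Hence the whole sentence lies in FO+monLFP and defines the class of finite frames validating $\phi_h$. Proposition~\ref{prop:chi-func-modal} itself follows from Lemma~\ref{lem:chi-horn-modal}. The frame fails $\phi_h$ iff either $R^+$ is not functional (then $\Diamond^+t\to\Box^+t$ fails), or some $a$ and $V$ give $a\models\Box(\Diamond^-\neg t\vee\Diamond^+t)$, which is the negation of the first conjunct at $a$. By the lemma, the second case holds iff $\mathrm{HornSat}(a)$ for some $a$.

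\emph{Non-definability in FO+monTC unless $\mathbf{NL}=\mathbf{P}$.} I argue by contraposition. Suppose some FO+monTC sentence defines the frame class of $\phi_h$ over finite frames. Since FO+monTC-definable classes of finite structures are decidable in $\mathbf{NL}$, deciding validity of $\phi_h$ on finite frames would be in $\mathbf{NL}$. By Theorem~\ref{thm:horn-p}, this problem is $\mathbf{P}$-hard under logspace many-one reductions. Since $\mathbf{NL}$ is closed under logspace reductions, every problem in $\mathbf{P}$ would then lie in $\mathbf{NL}$, so $\mathbf{P}\subseteq\mathbf{NL}$. Together with $\mathbf{NL}\subseteq\mathbf{P}$, this gives $\mathbf{NL}=\mathbf{P}$.

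The only delicate step is confirming that $\mathrm{HornCl}$ computes exactly the forward-chaining closure, so that $\mathrm{HornSat}$ really captures Horn satisfiability. This is what Lemma~\ref{lem:chi-horn-modal} settles. Apart from that and the syntactic positivity check, the argument combines already established results.
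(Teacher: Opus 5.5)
Your proposal is correct and follows the paper's argument. Definability in $\mathrm{FO+monLFP}$ comes from Proposition~\ref{prop:chi-func-modal}, and your explicit positivity check on $T$ in $\alpha$ is a welcome addition. Non-definability in $\mathrm{FO+monTC}$ comes from combining the $\mathbf{P}$-hardness in Theorem~\ref{thm:horn-p} with $\mathrm{FO+monTC}\subseteq\mathbf{NL}$ and the closure of $\mathbf{NL}$ under logspace reductions. One small point: when you rederive the proposition, Lemma~\ref{lem:chi-horn-modal} applies only when $R^+$ is functional, which is harmless because your first disjunct already handles the non-functional case.
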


\medskip
We conclude with some questions for further research. The preceding results show that, under standard complexity-theoretic assumptions, the modal hierarchy over finite frames does not collapse.
A natural question is whether these separations can be proved without relying on complexity-theoretic assumptions. 
Another question is how such hierarchies behave over restricted classes of finite frames.
For example, on transitive frames,
the McKinsey Axiom is equivalent to the first-order formula that expresses ``atomicity'':
$
\forall x \,\exists y\,
(Rxy \land Ryy \land \forall z\,(Ryz \rightarrow z = y)).
$
This suggests asking how the hierarchy of modal definability behaves over finite transitive frames.
Also, over finite trees it is known that
$
\mathrm{FO}
\subsetneq
\mathrm{FO+monTC}
\subsetneq
\mathrm{FO+monLFP}
\equiv
\mathrm{MSO}
$
\cite{CateSegoufin2010}.

Clearly we are only at the start of a whole area of definability questions here.

\section{Conclusion}

We have investigated transfer of modal preservation results from the setting of arbitrary structures to the finite, both for modal models carrying a valuation and for relational  frames. The results of our systematic survey were partly positive and partly negative, in terms of failures of classical preservation and definability properties. 

At the end of Section~\ref{sec:FO}, we have suggested that this taxonomy of success and failure need not be the end of the story of what makes the finite realm special. We noted how there might be interesting new preservation results still to be found that exploit the special nature of finite structures. In fact, existing counter-examples in this paper and the broader literature might contain cues for this, when read with this purpose in mind. 

A more concrete positive outcome of our analysis occurred in Section~\ref{sec:GbTh}. Here we showed that the hierarchy of correspondence languages known from general modal model theory does not collapse in the finite, and in fact, modal axioms now get content that can be investigated using known results from computational complexity theory that were not available in the modal theory of arbitrary relational  structures.

Our final thought is as follows. In addition to transfer of \emph{results} from the general case to the finite, there is also a broader issue of transfer of \emph{proof techniques}. For instance, our analysis of the McKinsey Axiom in the finite was inspired by the proof of non-first-order definability for the general case of some modal logics in \cite{vanBenthem1985}. And it is also known from general modal model theory that proofs of modal results often follow by analyzing known proofs for first-order logic and replacing appeals to isomorphism or potential isomorphism by uses of bisimulation~\cite{deRijkePhD}. 
We leave it as an interesting line of thought how such broader forms of transfer of proof strategies might work for the themes raised in this paper.

\newpage

\bibliographystyle{alphaurl}
\bibliography{bib}

\end{document}